\newif\ifdraft \drafttrue \documentclass[a4paper,UKenglish]{false-lipics-v2021}
\hideLIPIcs
\showcopyright
\nolinenumbers 
\usepackage{hyperref,csquotes}
\usepackage[,sorting=anyt,minalphanames=2,maxalphanames=3,maxbibnames=100,giveninits=true,style=alphabetic]{biblatex}

\bibliography{bibliography}

\usepackage{soul}
\setstcolor{red}

\usepackage{imakeidx, xhfill}
\makeindex[title=Index \& Acronyms, options= -s indexstylev5]
\makeatletter
\indexsetup{level=\section, noclearpage, firstpagestyle=headings, headers={\thetitle}{Victor Marsault and Antoine Meyer}}
\makeatother

\newcommand{\semindex}[2]{\index{#2 (#1)}\index{#1|aka{#2}}}

\usepackage{amsmath, amsfonts, mathtools, amsthm, amssymb, stmaryrd, etoolbox, xargs}
\SetSymbolFont{stmry}{bold}{U}{stmry}{m}{n}
\usepackage{calc}
\usepackage{graphicx}
\usepackage{pdflscape}
\usepackage{afterpage}
\usepackage{caption}
\usepackage{booktabs, tabularx, colortbl}
\definecolor{tabularrowcolor}{gray}{0.9}
\definecolor{tabularrowtitle}{gray}{0.85}

\usepackage{etoolbox}

\newenvironment{subthm}{\begin{enumerate}}{\end{enumerate}}

\usepackage{tikz}
\usetikzlibrary{calc, arrows.meta, shapes}

\DeclareRobustCommand\sbseries{\fontseries{sb}\selectfont}
\newcommand{\defemph}[1]{\textcolor{blue!50!black}{\sbseries#1}}

\newcommand{\subwalkord}[1]{\ifstrequal{#1}{<}{\mathrel{\sqsubset}}{\ifstrequal{#1}{>}{\mathrel{\sqsupset}}{\ifdefequal{#1}{\leq}{\mathrel{\sqsubseteq}}{\ifdefequal{#1}{\geq}{\mathrel{\sqsupseteq}}{\subwalkord\leq#1}}}}}\newcommand{\factorord}[1]{\ifstrequal{#1}{<}{\mathrel{\prec}}{\ifstrequal{#1}{>}{\mathrel{\succ}}{\ifdefequal{#1}{\leq}{\mathrel{\preceq}}{\ifdefequal{#1}{\geq}{\mathrel{\succeq}}{\subwordrel\leq#1}}}}}

\newcommand{\subwordord}[1]{\ifstrequal{#1}{<}{\mathrel{\vartriangleleft}}{\ifstrequal{#1}{>}{\mathrel{\vartriangleright}}{\ifdefequal{#1}{\leq}{\mathrel{\trianglelefteq}}{\ifdefequal{#1}{\geq}{\mathrel{\trianglerighteq}}{\subwordord\leq#1}}}}}

\newcommand{\N}{\mathbb{N}}

\newcommand{\semfont}[1]{{\normalfont\textsf{#1}}}

\newcommand{\trailsem}{\semfont{Tr}}
\newcommand{\acyclicsem}{\semfont{Ac}}
\newcommand{\twoacyclicsem}{\semfont{2Ac}}
\newcommand{\simplesem}{\semfont{SWC}}
\newcommand{\shortesttrailsem}{\semfont{ShT}}
\newcommand{\shortestwalksem}{\semfont{Sh}}
\newcommand{\cheapestwalksem}{\semfont{ChW}}

\newcommand{\simplerunsem}{\semfont{SR}}

\newcommand{\bindingtrailsem}{\semfont{BT}}
\newcommand{\subwalkminsem}{\semfont{SM}}

\newcommand{\minmultsem}{\semfont{MM}}
\newcommand{\shortminsetsem}{\semfont{ShMS}}

\newcommand{\loglensem}{\semfont{LL}}
\newcommand{\vertexshortestcovering}{\semfont{ShVC}}
\newcommand{\edgeshortestcovering}{\semfont{ShEC}}
\newcommand{\atomshortestcoverer}{\semfont{ShAC}}
\newcommand{\givingupsem}{\semfont{GU}}

\newcommand{\shortlexsem}{\semfont{ShL}}

\newcommand{\gql}[1]{{\normalfont \textbf{\texttt{#1}}}}

\newcommandx{\matchspace}[2][1=R,2=D]{\textsc{Match}(#1,#2)}

\newcommand{\subw}[1]{\langle#1\rangle}

\newcommand{\walks}{\mathsf{Walks}}

\newcommand{\pow}[1]{2^{#1}}

\usepackage{chemarrow,adjustbox}
\newcommand{\rarrow}[2][]{\mathbin{{\let\rightarrow\chemarrow \adjustbox{set depth={\dimexpr\depth-.65ex\relax},set height={\dimexpr\height-.15ex\relax}}{$\xrightarrow[\adjustbox{raise={.75ex}{1.25ex}}{\hskip2pt$\scriptstyle #2$\hskip3pt}]{\hskip2pt#1\hskip3pt}$}}}}

\DeclareMathOperator{\len}{\mathsf{Len}}
\DeclareMathOperator{\src}{\mathsf{Src}}
\DeclareMathOperator{\tgt}{\mathsf{Tgt}}
\DeclareMathOperator{\ep}{\mathsf{EndP}}
\DeclareMathOperator{\lbl}{\mathsf{Lbl}}
\DeclareMathOperator{\matches}{\mathsf{Matches}}

\DeclareMathOperator{\elemset}{\mathsf{ElemSet}}
\DeclareMathOperator{\vertexset}{\mathsf{VertSet}}
\DeclareMathOperator{\edgeset}{\mathsf{EdgeSet}}
\DeclareMathOperator{\elembag}{\mathsf{ElemBag}}
\newcommand{\trim}[1]{{#1}_t}

\NewDocumentCommand{\intint}{d[] d(] d[) d()}{\IfNoValue{#1}{\llbracket#1\rrbracket}}

\makeatletter
\newcommand{\vm@date@separator}{\hspace*{0.15ex}\rule[0.4\vm@date@height]{1ex}{0.07\vm@date@height}\hspace*{0.15ex}}
\newcommand{\vmdatefont}[1]{#1}
\newcommand{\isotoday}{\vmdatefont{
    \newdimen\vm@date@height \setbox0=\hbox{0123456789}\vm@date@height=\ht0 \advance\vm@date@height by -\dp0
    \the\year\vm@date@separator\two@digits{\month}\vm@date@separator\two@digits{\day}}}

\newcommand{\Cc}{\mathcal{C}}
\newcommand{\Dc}{\mathcal{D}}

\newcommand{\vertices}{\mathbb{V}}
\newcommand{\edges}{\mathbb{E}}
\newcommand{\labels}{\mathbb{L}}

\newcommand{\finsubset}{\subset_{\text{fin}}}

\makeatletter
\newcommand{\set}[2][]{\{#1{\kern1\nulldelimiterspace}#2{#1\kern1\nulldelimiterspace\}}}
\newcommand{\setst}{\@ifstar{\autosetst}{\paramsetst}}
\newcommand{\autosetst}[2]{\left\{\,#1\,\middle|\,#2\,\right\}}
\newcommand{\paramsetst}[3][]{#1\{\,#2\mathbin{#1|}#3{\,#1\}}}

\newcommand{\bag}{\@ifstar{\autobag}{\parambag}}
\newcommand{\autobag}[1]{\llbrace*#1\rrbrace*}
\newcommand{\parambag}[2][]{\llbrace[#1]#2\rrbrace[#1]}
\newcommand{\bagst}{\@ifstar{\autobagst}{\parambagst}}
\newcommand{\autobagst}[2]{\llbrace*\,#1\,\middle|\,#2\,\rrbrace*}
\newcommand{\parambagst}[3][]{\llbrace[#1]\,#2\mathbin{#1|}#3\,\rrbrace[#1]}

\newcommand{\llbrace}{\@ifstar{\leftllbrace}{\paramllbrace}}
\newcommand{\paramllbrace}[1][]{{#1\{\hspace*{-.25em}#1\{}}
\newcommand{\leftllbrace}{\left\lbrace\kern-3\nulldelimiterspace\middle\lbrace}
\newcommand{\rrbrace}{\@ifstar{\rightrrbrace}{\paramrrbrace}}
\newcommand{\paramrrbrace}[1][]{{#1\}}\hspace*{-.25em}{#1\}}}
\newcommand{\rightrrbrace}{\middle\rbrace\kern-3\nulldelimiterspace\right\rbrace}
\makeatother

\newcommand{\np}{\textbf{NP}}
\newcommand{\conp}{\textbf{coNP}}
\newcommand{\polydelay}{\textbf{DelayP}}
\newcommand{\ptime}{\textbf{P}}
\newcommand{\nl}{\textbf{NL}}

\usepackage[framemethod=TikZ]{mdframed}

\usepackage{xpatch}
\makeatletter
\xpatchcmd{\endmdframed}
  {\aftergroup\endmdf@trivlist\color@endgroup}
  {\endmdf@trivlist\color@endgroup\@doendpe}
  {}{}
\makeatother

\newlength{\problemmargin}
\newlength{\problempadding}
\newlength{\problemsep}
\newlength{\problemtitlepadding}
\newcommand{\problemfont}[1]{\textsc{#1}}
\makeatletter
\newenvironment{problem}[1]{\begin{mdframed}[frametitle={\hspace*{\dimexpr\problemtitlepadding-\problempadding\relax}#1},
        frametitlefont=\large\sc,
        innerleftmargin=\problempadding,
        innerrightmargin=\problempadding,
        innerbottommargin=\problempadding,
        innertopmargin=\problempadding,
        frametitleaboveskip=\problemtitlepadding,
        frametitlebelowskip=\problemtitlepadding,
        usetwoside=false,
        leftmargin=\problemmargin,
        rightmargin=\problemmargin,
        skipabove={.75\baselineskip\@plus.2\baselineskip\@minus.2\baselineskip},
        skipbelow={.5\baselineskip\@plus.2\baselineskip\@minus.2\baselineskip},
frametitlerule=true,
        frametitlebackgroundcolor=black!10,
        nobreak,
        roundcorner=0pt,
        linewidth=0.5pt,
        frametitlerulewidth=0.5pt,
        ]\description }{\enddescription\end{mdframed}}
\makeatother

\newcommand{\CharR}[1]{P_{#1}}
\newcommand{\CharD}[1]{\Cc_{#1}}

\newlength{\CommentSignLength}
\usepackage[rightComments=false,beginComment={\hskip\CommentSignLength//~},beginLComment={//~},endLComment={}]{algpseudocodex}
\tikzset{algpxIndentLine/.style={dotted}}

\newcounter{lastalgoline}
\newenvironment{vmalgorithm}{\begin{algorithmic}[1]\setcounter{ALG@line}{\value{lastalgoline}}}
{\setcounter{lastalgoline}{\value{ALG@line}}\end{algorithmic}}

\algrenewcommand\algorithmicindent{1.25em}
\algrenewcommand\alglinenumber[1]{\makebox[1em][r]{\footnotesize #1.}}
\algrenewcommand\algorithmicdo{}
\algrenewcommand\algorithmicprocedure{\textbf{\texttt{type}}}
\algrenewcommand\algorithmicthen{}
\algrenewcommand\algorithmicforall{\textbf{for each}}

\newcommand{\True}{\textbf{True}}

 \newlength{\minnoderadius}
\newlength{\shortenlength}
\newlength{\nodelinewidth}
\newlength{\arrowwidth}
\def\loopangle{24}
\tikzset{bend angle=20,
    >={Stealth[width=6pt,length=6pt]},
    node/.style={circle, line width=\nodelinewidth, draw, black, inner sep=2pt, outer sep=.5*\nodelinewidth, minimum height=\minnoderadius, minimum width=\minnoderadius},
    vertex/.style={fill=black,inner sep=1.5pt,outer sep=.5pt, circle},
    state/.style={node},  
    run state/.style={draw,rounded rectangle},
    preedge/.style={-, draw, black, line width=1pt, rounded corners=5pt,pos=.4, shorten >=\shortenlength},
    edge/.style={preedge,->},
    label/.style={fill=white,font=\small},
    redge/.style={preedge,<-},
    initialedge/.style={edge, shorten > =0pt, shorten < =0pt},
    finaledge/.style={redge, shorten > =0pt, shorten < =0pt},
    borderedge/.style={edge, -, color=white, line width=5pt, shorten >=\shortenlength-2pt, >={Stealth[width=12pt,length=12pt]}},
    vm loop/.style={edge, pos=.5, looseness = 8},    
    graph loop/.style={edge, pos=.5, looseness = 40, shorten >=4pt},    
    road/.style={edge, color=fred},
    hl/.style={edge, line width = 1.5mm, color=borange, shorten >=1pt},
    ferry/.style={edge,color=fpurple},
    gas/.style={edge,color=fblue},
    start/.style={edge,color=fblue},
    end/.style={edge,color=fblue},
    >=stealth,
    north west loop/.style={vm loop, in={\the\numexpr 135 + \loopangle\relax}, 
                                     out ={\the\numexpr 135 - \loopangle\relax}},
    north east loop/.style={vm loop, in={\the\numexpr 45 + \loopangle\relax}, 
                                     out ={\the\numexpr 45 - \loopangle\relax}},
    south west loop/.style={vm loop, in={\the\numexpr -135 + \loopangle\relax}, 
                                     out ={\the\numexpr -135 - \loopangle\relax}},
    south east loop/.style={vm loop, in={\the\numexpr -45 + \loopangle\relax}, 
                                     out ={\the\numexpr -45 - \loopangle\relax}},
    north loop/.style={vm loop, in={\the\numexpr 90 + \loopangle\relax}, 
                                out ={\the\numexpr 90 - \loopangle\relax}},
    south loop/.style={vm loop, in={\the\numexpr 270 - \loopangle\relax}, 
                                out ={\the\numexpr 270 + \loopangle\relax}},
    east loop/.style={vm loop, in={\the\numexpr 0 + \loopangle\relax}, 
                                out ={\the\numexpr 0 - \loopangle\relax}},
    west loop/.style={vm loop, in={\the\numexpr 180 - \loopangle\relax}, 
                                out ={\the\numexpr 180 + \loopangle\relax}},                
    node distance = \nodedist,
}

\tikzset{
        state/.style={draw,rectangle, rounded corners},
        out state/.style={state, dashed},
        transition/.style={draw,-stealth,very thick, shorten >=2pt},
        out transition/.style={transition, dashed},
        p1/.style={rouge},
        p2/.style={bleu},
        p3/.style={vert},
        gadget/.style={draw},
        }

\newlength{\nodedist}
\newlength{\initfinaldist}
\newcommand{\initialfinal}[2][0]{\def\angleI{\the\numexpr #1 + 15 \relax}
    \def\angleII{\the\numexpr #1 - 15 \relax}
    \path (#2.\angleI) ++(#1:\initfinaldist) coordinate 
        (#2-initialfinal1-#1);
    \path[initialedge] (#2.\angleI) to         
        (#2-initialfinal1-#1);
    \path (#2.\angleII) ++(#1:\initfinaldist) coordinate     (#2-initialfinal2-#1);
    \path[finaledge] (#2.\angleII) to
      (#2-initialfinal2-#1);
}
\definecolor{vert}{rgb}{0,.55,0.20}
\definecolor{bleu}{rgb}{0,0,0.75}
\definecolor{rouge}{rgb}{.75,0,0}

\usepackage{multicol}
\newcommand{\thetitle}{Designing and Comparing RPQ Semantics}

\title{\thetitle}

\usepackage{adjustbox}

\author{Victor Marsault}
       {Univ Gustave Eiffel, CNRS, LIGM, F-77454 Marne-la-Vallée, France
        \and \url{https://victor.marsault.xyz}}
       {victor.marsault@univ-eiffel.fr}
       {https://orcid.org/0000-0002-2325-6004}
       {}
\author{Antoine Meyer}
       {Univ Gustave Eiffel, CNRS, LIGM, F-77454 Marne-la-Vallée, France}
       {antoine.meyer@univ-eiffel.fr}
       {https://orcid.org/0000-0003-4513-4347}
       {}

\authorrunning{V. Marsault and A. Meyer}

\relatedversion{A short version of this work is published in \cite{MarsaultMeyer2026}.} 

\Copyright{Victor Marsault and Antoine Meyer}

\ccsdesc{Theory of computation~Database query languages (principles)}
\ccsdesc{Information systems~Query languages for non-relational engines}
\ccsdesc{Information systems~Graph-based database models}
\ccsdesc{Theory of computation~Regular languages}

\keywords{
Regular Path Queries;
RPQ;
Semantics;
Graph Databases;
Pattern matching;
Regular Expression}

\date{\isotoday}

\acknowledgements{
This work stems from discussions with many colleagues, including Claire David, Amélie Gheerbrant, Steven Sailly and  Cristina Sirangelo.
In particular, we also thank Thomas Colcombet for the idea leading to the definition of the \vertexshortestcovering{} semantics, and Nadime Francis and Sara Khichane for introducing us to the flashlight search technique used in Proposition~\ref{p:extensibility+Membership=>Enumeration}.
Finally, we thank Yann Strozecki for his help with the enumeration complexity framework.
}

\begin{document}

\maketitle

\begin{abstract}
    Modern property graph database query languages such as Cypher, PGQL, GSQL, and the standard GQL draw inspiration from the formalism of regular path queries (RPQs).
    In order to output walks explicitly, they depart from the classical and well-studied \emph{homomorphism} semantics.
    However, it then becomes difficult to present results to users because RPQs may match infinitely many walks.
    The aforementioned languages use ad-hoc criteria to select a finite subset of those matches. 
    For instance, Cypher uses \emph{trail} semantics, discarding walks with repeated edges; 
    PGQL and GSQL use \emph{shortest walk} semantics, retaining only the walks of minimal length among all matched walks; and
    GQL allows users to choose from several semantics.
    Even though there is academic research on these semantics, it focuses almost exclusively on evaluation efficiency.

    In an attempt to better understand, choose and design RPQ semantics, we present a framework to categorize and compare them according to other criteria.
    We formalize several possible properties, pertaining to the study of RPQ semantics seen as mathematical functions mapping a database and a query to a \emph{finite} set of walks.
    We show that some properties are mutually exclusive, or cannot be met.
    We also give several new RPQ semantics as examples. Some of them may provide ideas for the design of new semantics for future graph database query languages.
\end{abstract}

\section{Introduction}
\label{s:intro}

In the past decades, there has been increasing interest and use for graph database management systems (DBMS). 
The typical task when exploiting such databases consists in navigating the graph.
To achieve this, most modern query languages contain features inspired by
Regular Path Queries (RPQs) \cite{CruzMendelzonWood1987}.
RPQs and their generalizations were thoroughly studied in academic literature. 
Essentially, an RPQ is a regular expression used to query an edge-labeled graph. 
A walk (that is, an alternating sequence of vertices and edges) matches an RPQ $R$ if its label (that is, the concatenation of its edge labels) conforms to $R$. 

In industry, RPQs were added to the standard language SPARQL \cite{SPARQL-PP} for RDF.
They were also the basis of the \gql{MATCH} clause in Cypher \cite{Cypher,FrancisEtAl2018-arxiv}, the language introduced with the successful property graph DBMS Neo4j.
RPQs were an inspiration for other query languages for graph DBMS \cite{GSQL,PGQL}.
Due in part to the success of Cypher and Neo4j, a standard query language for property graphs was designed \cite{GQL-ISO}.
Its navigational part is heavily influenced by RPQs, and was made compatible with SQL \cite{SQLPGQ-ISO}.

RPQs are usually evaluated under \emph{homomorphism semantics} \cite{AnglesEtAl2017}.
In this setting, the answer to a query~$R$ is the set of all pairs $(s,t)$ such that there exists a walk matching~$R$ from $s$ to~$t$.
In other words, the answer consists of the set of \emph{endpoints} of all matching walks.
However, endpoint semantics turns out to be inadequate in some practical scenarios: users sometimes need to have access to whole matched walks, and not only to their endpoints.

RPQs typically match infinitely many walks, hence if queries are to return whole walks then their presentation to the user in full is problematic.
To address this issue, many real-life query languages for property graphs choose to return only a finite subset of matching walks to the user, selected in various ways.
Cypher uses \emph{trail semantics}: a matching walk is only returned if it contains no edge repetition.
Other languages like GSQL or PGQL use \emph{shortest semantics}: a matching walk is returned only if there is no shorter match with the same endpoints.
Another well-known example is \emph{acyclic semantics}: a matching walk is returned only if it contains no vertex repetition.
GQL allows the user to choose one of these semantics, among others.
We follow this modern approach in the current article: what we call an RPQ semantics is any specification of a way to select, among a potentially infinite set of matching walks, a \emph{finite} subset of walks to return to the user.

Recent work on RPQs under various semantics focus on the evaluation process.
Computational problems related to evaluation are untractable under acyclic \cite{BaganBonifatiGroz2020} or trail \cite{MartensNiewerthPopp2023} semantics.
On the other hand, there exist efficient algorithms for evaluation under shortest semantics \cite{MartensTrautner2018,DavidFrancisMarsault2024}.
One of the challenges of query evaluation is the sheer number of results, and there has been work on better, more succinct representations of the set of results~\cite{MartensNiewerth+2023,MartensNiewerth+2024}.
This draws a very one-sided picture: shortest semantics always yields algorithms that are much more efficient. 
However, the most popular system as of today (Neo4j) uses trail semantics, and our hypothesis is that this can be ascribed to other desirable properties.

\subparagraph{Contributions.} To start this investigation, we propose a formal framework to define and categorize RPQ semantics.
We call semantics like trail or acyclic \emph{filter-based}, because they act as post-filters: each walk is individually considered good (e.g., with no repetition) or bad (e.g. with repetitions), independently of other matches.
We call semantics like shortest \emph{order-based}, because they select the best matched walks according to some global criterion. 
More precisely, such semantics are defined using (partial) ordering relations on walks, and return all walks that are minimal (among matches) with respect to this ordering. 

We then define many properties that RPQ semantics may or may not possess, and discuss why these properties may be desirable.
To state a few, we introduce properties related to monotonicity, continuity, composability using rational operators, and coverage.
Since we define RPQ semantics in the abstract, one may easily come up with unpractical or even nonsensical semantics.  
Hence, we propose a list of properties that any reasonable semantics should possess.
We also show several impossibility theorems, in the form of sets of mutually exclusive properties.
The goal of these results is to facilitate the design of new and interesting RPQ semantics.
We often illustrate our definitions and statements with examples of new RPQ semantics (which are compiled in the index, page~\pageref{s:index}).
Some of them appear to be reasonable candidates for possible practical use.

\subparagraph{Outline.}
Section~\ref{s:prelim} contains preliminary vocabulary and definitions, in particular the models we use for data (labeled graph) and queries (regular expressions). 
Then, Section~\ref{s:semantics} presents our framework: 
what RPQ semantics are, how to define one, and which ones we are interested in.
Section~\ref{s:properties} introduces several properties that RPQ semantics might possess.
We summarize the properties of several usual semantics, and provide a few general statement (e.g., which properties are mutually exclusive).
We also discuss the desirability of those properties.
Section~\ref{sec:computational-problems} recalls some known results about computational complexity.

In this document, we introduce many RPQ semantics, each with an acronym, and many properties for them. An index is provided on page~\pageref{s:index}.

\subparagraph{Related work.}
As far as we know, few RPQ semantics were studied in academic work other than the endpoint, acyclic, trail and shortest semantics.
The fragment of the RDF query language SPARQL which is related to RPQs is called \emph{property paths} \cite{SPARQL-PP}, and in the process leading to their definition, one of the candidate semantics indeed computed sets of walks. 
In order to only present a finite number of results, it was decided that in this semantics the same walk could not be iterated using a Kleene star.
This candidate semantics was discarded due to the fact that it was returning too many results \cite{ArenasConcaPerez2012}, a phenomenon sometimes referred to as the \emph{Yotabyte problem}.
It was finally decided that property paths in SPARQL would be evaluated under endpoint semantics, with a variation best explained in terms of rewriting:
property paths are automatically rewritten using other SPARQL constructs in order to obtain a query in which all property paths have a Kleene star as outermost operator (see Section~9.3, \emph{Property Paths and Equivalent Patterns}, of \cite{SPARQL-PP}).  This seemingly strange decision stems from the will to handle multiplicity consistently. 

In GQL\cite{GQL-ISO} and SQL/GPQ\cite{SQLPGQ-ISO}, the keyword \gql{SIMPLE} switches on a variant of acyclic semantics that returns not only simple walks but also simple cycles. 
It is usually not considered in theoretical articles, since it is computationally equivalent to acyclic semantics.

In~\cite{DavidFrancisMarsault2023}, the authors present two new semantics (simple run and binding trail) that filter walks based on the underlying matching process and not only on their topology.
Such semantics are outside the scope of this article (see~Sec.~\ref{s:out-of-scope-semantics}), but the introduction of~\cite{DavidFrancisMarsault2023} is particularly relevant to our work.
Indeed, their semantics are computationally better-behaved than trail, but worse than shortest.
In order to explain that shortest is not always the best semantics, they use informal criteria that are not based on theoretical algorithmic efficiency.
Our aim in the current text is to provide a more systematic argument.

\section{Preliminaries}
\label{s:prelim}

We assume the existence of $\vertices$, $\edges$, $\labels$, three disjoint countably infinite sets of \defemph{vertex identifiers}, \defemph{edge identifiers}
and \defemph{labels/letters}, respectively.
We will later define all databases and regular expressions over these sets.  This will allow us to define several notions independently of the underlying database.
We call \defemph{identifier} any element in~$\vertices\cup\edges$, \defemph{renaming} any permutation $\nu$ of~$\vertices\cup\edges$ such that $\nu(\vertices)=\vertices$ and $\nu(\edges)=\edges$.
We call \defemph{relabeling} any permutation of~$\labels$.

\subsection{Words, languages, expressions}
\label{s:formal languages}
An \defemph{alphabet} is a finite subset~$\Sigma$ of~$\labels$.
A \defemph{word} over~$\Sigma$ is a finite sequence of elements of~$\Sigma$;
the empty word is denoted by~$\varepsilon$.
The set of all words over~$\Sigma$ is denoted by~$\Sigma^*$.
Subsets of~$\Sigma^*$ are called \defemph{languages} over~$\Sigma$.
A word~$u = (a_1, \ldots, a_k)$ is usually written~$a_1\cdots a_k$, and $k$ is called the \defemph{length} of~$u$. The \defemph{concatenation} of two words $u = a_1\cdots a_k$ and $v = b_1\cdots b_\ell$ is~$u \cdot v = a_1 \cdots a_k b_1 \cdots b_\ell$, usually written simply $uv$. Concatenation is lifted to languages as follows:~$L_1\cdot L_2=\left\{w_1\cdot w_2 ~\middle|~w_1\in L_1,~w_2\in L_2\right\}$. 

A \defemph{regular expression}~$R$ over~$\Sigma$ is any formula over the set of letters, the unary function~${}^*$ and the binary functions $+$ and~$\cdot$ consistent with the following grammar.
\begin{equation} 
    R \mathrel{:{=}} \varepsilon \mid a \mid R^* \mid R\mathbin{\cdot}R \mid R+R   \quad\quad\text{with $a\in\Sigma$}
\end{equation}
Any such $R$ \defemph{denotes} a language $L(R)$ over~$\Sigma$, defined inductively as usual.
Any word in~$L(R)$ is said to \defemph{conform} to~$R$.

\subsection{Walks and databases}
\label{s:databases}

A \defemph{walk} is a finite nonempty alternating sequence of vertices in $\vertices$ and edges in $\edges$ that starts and ends with a vertex.
    We usually write such a walk~$w = (v_0,e_1,v_1,\ldots,e_k,v_k)$ as 
    $v_0 \rarrow{e_1} v_1 \rarrow{e_2} \cdots \rarrow{e_k} v_k$~.
    We call~$k$ the \defemph{length} of $w$ and denote it by $\len(w)$;
    we respectively call~$v_0$ and $v_k$ the \defemph{source} and \defemph{target} of~$w$ and denote them by~$\src(w)$ and $\tgt(w)$. The pair $(v_0, v_k)$ is the pair of \defemph{endpoints} of w, written~$\ep(w)$.
    Walks of length~$0$ are called \emph{trivial} and consist of a single vertex. 
    We denote by~$w[i]$ the edge~$e_i$, with~$i\in\{1,\ldots, k\}$ and by~$w\subw{i}$ the vertex~$v_i$, with~$i\in\{0,\ldots, k\}$.
    Note that walks have no particular topology.  Thus, it is possible for a walk~$w$ to be inconsistent with a database~$\Dc$ or even inconsistent in itself (see below).

    We let~$\walks$ denote the set of all walks (independently of any database).
    Two walks~$w$ and $w'$ \defemph{concatenate} if $\tgt(w) = \src(w')$,
    and their \defemph{concatenation}, written $w\cdot w'$, is the walk 
    \[
    w \cdot w' = \src(w) \rarrow{w[1]} \cdots \rarrow{w[k]} \tgt(w) \rarrow{w'[1]} \cdots \rarrow{w'[k']} \tgt(w')
    \]
    where $k = \len(w)$ and $k' = \len(w')$.  Note that $w \cdot w'$ is of length $k + k'$: it has $k + k'$ edges and $k + k' + 1$ vertices (vertex $\tgt(w) = \src(w')$ is not repeated). 
    We lift~$\cdot$ to sets of walks as usual: $W\cdot W'$ contains all walks~$w\cdot w'$ with~$w\in W$ and $w\in W'$ such that $w$ and $w'$ concatenate. 
    We lift the $\ast$ operator similarly:~$W^\ast$ contains all walks~$w_1\cdot w_2\cdots w_k$ where every $w_i$ belongs to $W$ and every pair $w_{i},w_{i+1}$ of consecutive walks concatenate.

\medskip

In this document, we model graph databases as directed, labeled, multi-edge graphs, and simply refer to them as \emph{databases}.
A \defemph{database} $\Dc$ is a tuple $(L_\Dc,V_\Dc, E_\Dc,\src_\Dc,\tgt_\Dc,\lbl_\Dc)$ where
    $L_\Dc \finsubset \labels$,
    $V_\Dc \finsubset \vertices$,
    $E_\Dc \finsubset \edges$
    and ${\src_\Dc}$, ${\tgt_\Dc}$ and ${\lbl_\Dc}$ are functions respectively mapping each edge in $E_\Dc$ to its source, target and label.

A walk~$w$ is \defemph{consistent with~$\Dc$} if all elements in $w$ are in $V_\Dc\cup E_\Dc$ and for every $0 < i \leq \len(w)$, $\src_\Dc(w[i]) = w\subw{i-1}$ and $\tgt_\Dc(w[i]) = w\subw{i}$.
The set of all walks consistent with~$\Dc$, or for short the set of \defemph{walks in $\Dc$}, is denoted by $\walks(\Dc)$.
We say that $n$ walks $w_1,\ldots,w_n$ are \defemph{consistent} if there is a database~$\Dc$ such that $\{w_1,\ldots,w_n\}\subseteq\walks(\Dc)$.
Whenever~$n=1$, we say $w_1$ is \defemph{self-consistent}.
We extend~$\lbl_\Dc$
to $\walks(\Dc)$ as expected: $\lbl_\Dc(w)$ is the concatenation of successive edge labels in $w$.
When the database~$\Dc$ is clear from the context, we abuse terminology and notation by using each edge~$e\in E_\Dc$ as the walk~$(\src_\Dc(e),e,\tgt_\Dc(e))$.
In particular we can ask whether a walk~$w$ and an edge~$e$ concatenate, and when they do, use the notation~$w\cdot e$.

Any renaming function~$\nu$ is extended to walks by applying~$\nu$ to all identifiers in it. 
It can also be extended to whole databases as follows: $\nu(\Dc)=\big( L_\Dc,\nu(V_\Dc), \nu(E_\Dc), \src_\Dc', \tgt_\Dc', \lbl_\Dc' \big)$, where $\src_{\Dc}'(x) = \nu(\src_{\Dc}(\nu^{-1}(x))$, $\tgt_{\Dc}' = \nu(\tgt_{\Dc}(\nu^{-1}(x))$ and $\lbl_{\Dc}'(x) = \lbl_{\Dc}(\nu^{-1}(x))$.
Similarly, any relabeling function~$\lambda \colon \labels \to \labels$ is extended to regular expressions by applying $\lambda$ to all atoms, and to databases as follows: $\lambda(\Dc) = \big(\lambda(L_\Dc),V_\Dc, E_\Dc, \src_\Dc, \tgt_\Dc, (\lambda \circ \lbl_\Dc) \big)$.

\section{Regular Path Queries (RPQ) and their semantics}
\label{s:semantics}

The concept of Regular Path Query (RPQ) consists in querying the content of a graph database~$\Dc$ thanks to regular expressions over its edge labels.

\begin{definition}[Matches]\label{d:matches}
    \index{Match}
    Given a regular expression~$R$ and a database~$\Dc$, we denote by $\matches(\Dc,R)$ the set of \defemph{matches of~$R$ in~$\Dc$}, which is the set of all walks in~$\Dc$ whose label conforms to~$R$: 
$\matches(\Dc,R) = \setst{w\in\walks(\Dc)}{\lbl_\Dc(w) \in L(R)}$. 
Given~$v,v'\in V_\Dc$, we will also use the notation~$\matches(\Dc,R,v,v')$ for the set of the walks~$w$ in~$\matches(\Dc,R)$ such that $\ep(w)=(v,v')$.
\end{definition}

\subsection{Arbitrary semantics}

\begin{definition}\label{d:arbitrary-semantics}
    \index{Arbitrary semantics}
    A \defemph{semantics}~$S$ is a function mapping a database~$\Dc$ and a query~$R$ to a finite subset $S(\Dc,R)$ of $\matches(\Dc,R)$.
\end{definition}

Aside from homomorphism semantics, which does not fit our framework because it does not return walks, the two most popular RPQ semantics are shortest-walk semantics and trail semantics.
We define them below.

\begin{example}[Trail semantics~$\trailsem$]
    \semindex{\trailsem}{Trail semantics}
    \label{ex:trail}
    Trail semantics is the semantics of Cypher~\cite{FrancisEtAl2018-sigmod} and may be used in GQL~\cite{DeutschEtAl2022,GQL-ISO} thanks to the keyword \gql{TRAIL}.
    It is defined as follows.   
    A walk $w$ is a \defemph{trail} if it has no duplicate edge, i.e.\@ for all $i \neq j$, $w[i] \neq w[j]$.
    Trail semantics is defined by:
$\trailsem(\Dc,R) = \setst*{w\in \matches(\Dc,R)}{w\text{ is a trail}}$.
\end{example}

\begin{example}[Shortest-walk semantics~$\shortestwalksem$]
    \semindex{\shortestwalksem}{Shortest semantics}
    \label{ex:shortest}    
    Shortest semantics is the original semantics of GSQL~\cite{GSQL} and PGQL~\cite{PGQL}; it may be used in GQL~\cite{DeutschEtAl2022,GQL-ISO} thanks to the keyword \gql{SHORTEST}.
    Let~$\Dc$ be a database and~$R$ a query.
    For every $s,t\in V_\Dc$, 
   we write $\ell_{(s,t)}$ for the length of the shortest walk in~$\matches(\Dc,R,s,t)$.
Then, shortest semantics is defined by: $\shortestwalksem(\Dc,R) = \bigcup_{s,t\in V_\Dc}
        \setst*{w\in \matches(\Dc,R,s,t)}{\len(w)\mathbin=\ell_{(s,t)}}$.
\end{example}

We focus on semantics that are \emph{walk-based}, as defined below. 
Intuitively, such semantics act as post-filters on matches: they only have access to the set of matched walks itself, not to the database, to the regular expression, or to the reason or way a given walk was matched by the query.

\begin{definition}
    \label{d:walk-based}
    \index{Walk-based semantics}
    A semantics~$S$ is \defemph{walk-based} if there is a function~$r_S \colon \pow{\walks}\to\pow{\walks}$ such that~$S(D,R)=r_S(\matches(D,R))$.
\end{definition}

Even though the previous definition seems very general, walk-based semantics still have to output walks which match the query due to Definition~\ref{d:arbitrary-semantics}. 
Instances of walk-based semantics are~$\trailsem$ and $\shortestwalksem$ given above, as well as shortest-trail semantics defined below.
On the other hand, we will give in Section~\ref{s:out-of-scope-semantics} a few examples of semantics that are not walk-based, such as binding-trail semantics and cheapest-walk semantics.

\begin{example}[Shortest-trail semantics $\shortesttrailsem$]\label{ex:shortest-trail}
    \semindex{\shortesttrailsem}{Shortest-trail semantics}
    Shortest-trail semantics may be used in GQL~\cite{DeutschEtAl2022,GQL-ISO} thanks to the keyphrase \gql{SHORTEST} \gql{TRAIL}.
    For every~$W\subseteq 2^\walks$, and every vertices~$s,t\in\vertices$,
    we let~$\ell_{(s,t)}(W)$ denote the length of the shortest trail~$w$ in~$W$ such that~$\ep(w)=(s,t)$, if it exists.
    Then, shortest-trail semantics is defined by: $\shortesttrailsem(\Dc,R) = r_\shortesttrailsem(\matches(D, R))$, where
$r_\shortesttrailsem(M)$ contains all trails~$w$ in~$M$ such that~$\len(w)=\ell_{\ep(w)}(M)$.
\end{example}

For every $\Dc$ and $R$, $\shortesttrailsem(\Dc,R)$ is clearly a subset of $\trailsem(\Dc,R)$, since it only returns trails matching $R$ in $\Dc$. However, it may not be a subset of $\shortestwalksem(\Dc, R)$ since the shortest walk matching $R$ between two vertices of $\Dc$ may or may not be a trail.

\subsection{Filter-based semantics}
\label{s:fiter-based}

\begin{definition}\label{d:filter-based}\index{Filter-based}
    A semantics~$S$ is called \defemph{filter-based} if there exists a function~$f_S \colon \walks\to\{\top,\bot\}$ such that 
    $S(\Dc,R)=\setst{w\in\matches(\Dc,R)}{f_S(w)=\top}$.
\end{definition}

Note that every filter-based semantics is also walk-based.
Trail semantics (Ex.~\ref{ex:trail}) is the paramount example of a filter-based semantics.
It is defined by the function $f_{\trailsem}$, 
that maps a walk~$w$ to~$\bot$ if there are $i<j$ such that $w[i]\mathbin=w[j]$, or to~$\top$ otherwise.
However, it is quite easy to see that neither shortest-walk semantics (Ex.\,\ref{ex:shortest}) nor shortest-trail semantics (Ex.~\ref{ex:shortest-trail}) are filter-based.
We give below three examples of filter-based semantics.

\begin{example}[Acyclic semantics~$\acyclicsem$]
    \semindex{\acyclicsem}{Acyclic semantics}
    Acyclic semantics (corresponding to the~\gql{ACYCLIC} keyword in~GQL) is the semantics defined by the filter $f_{\acyclicsem}$ that maps a walk~$w$ to $\bot$ if there are $i<j$ such that $w\subw{i}\mathbin=w\subw{j}$, and to~$\top$ otherwise.
\end{example}

\begin{example}[Simple-cycle-or-walk semantics~$\simplesem$]
    \label{ex:simple}
    \semindex{\simplesem}{Simple-walk-or-cycle semantics}
    Simple-cycle-or-walk semantics is a variant of $\acyclicsem$ that corresponds to keyword~\gql{SIMPLE} in GQL.
    The difference is that simple cycles are also returned. It is defined by the filter $f_{\simplesem}$ that maps a walk~$w$ to~$\bot$ if there are $i<j$ such that$(i,j)\neq(0,\len(w))$ and $w\subw{i}= w\subw{j}$, and to~$\top$ otherwise.
\end{example}

\begin{example}[Two-acyclic-sem~$\twoacyclicsem$]\label{ex:2-ac}
    \semindex{\twoacyclicsem}{Two-acyclic semantics}
    Two-acyclic semantics is a variant of $\acyclicsem$  where each vertex is allowed to appear once or twice, but not more.
    It is defined by the filter below.
\begin{equation*}
    f_{\twoacyclicsem} \colon w \mapsto \begin{cases}
        \bot & \text{if there are }i\mathbin< j\mathbin<k\text{ such that } w\subw{i} \mathbin= w\subw{j}\mathbin= w\subw{k}\\
        \top & \text{otherwise}
    \end{cases}
\end{equation*}
\end{example}

\subsection{Order-based semantics}

First, let us recall a few definitions about partial orders.
A \defemph{partial order} is a binary, transitive, antisymmetric and reflexive relation.
For any partial order~$\preceq$ we will denote by~$\prec$ the associated \defemph{strict partial order}.
A \defemph{well-partial-order} on a set~$X$, or \defemph{wpo}, is a partial order~$\preceq$ such that in every infinite sequence~$x_1, x_2, \ldots, x_k, \ldots $ of elements from~$X$ there are two positions~$i<j$ such that $x_i\preceq x_j$.
In other words, there exists no infinite strictly decreasing sequence nor any infinite sequence in which all elements are incomparable w.r.t.~$\preceq$. 
It is well-known that any subset of $X$ admits finitely many minimal elements w.r.t.\@ a wpo over~$X$, and that the restriction of any wpo over~$X$ to a subset of $X$ is also a wpo.

\begin{definition}[Order-based semantics]
    \label{d:order-based}\index{Order-based}\index{Suitable}
    Let~$\preceq$ be a partial order on~$\walks$ such that for every database~$\Dc$, $\preceq$ is a wpo on~$\walks(\Dc)$.
    Such an order is said to be \defemph{suitable}.
    The \defemph{semantics based on~$\preceq$}, denoted by~$S_\preceq$ is defined as follows for every database~$\Dc$ and regular expression~$R$ : 
$S_\preceq(\Dc,R) = \bigcup_{s,t\in V_\Dc} \min_\preceq(\matches(\Dc,R,s,t))$

\end{definition}

Note that a suitable order~$\preceq$ is a wpo on~$\matches(\Dc,R,s,t)$, since $\matches(\Dc,R,s,t)$ is a subset of~$\walks(\Dc)$.
It follows that~$S_\preceq(\Dc,R)$ is always finite. 
Note also that the ordering of two walks~$w,w'$ w.r.t.~$\preceq$ is irrelevant whenever $w$ and~$w'$ have different endpoints. 
The same is true for walks~$w$ and~$w'$ that are inconsistent.
It is sometimes useful to trim such irrelevant pairs from the orders we consider.

\begin{definition}\label{l:trimmed-order}
    \index{Trimmed}
    For every suitable order~$\preceq$ we let $\trim{\preceq}$ denote
    the order defined by: ${w \trim{\preceq} w'}$ iff $w$ and~$w'$ are consistent walks with the same endpoints such that $w \mathbin{\preceq} w'$.
    We say that $\preceq$ is \defemph{trimmed} if it is equal to $\trim{\preceq}$
\end{definition}

\begin{lemma}\label{l:order-irrelevant-different-ep}
    Each order-based semantics~$S_\preceq$ is identical to $S_{\trim{\preceq}}$.
\end{lemma}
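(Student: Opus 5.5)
The plan is to show first that $\trim{\preceq}$ is itself a suitable order, so that $S_{\trim{\preceq}}$ is defined. Then, for every database $\Dc$, regular expression $R$ and vertices $s,t\in V_\Dc$, I will show that the sets of minimal elements coincide:
\[
\min_\preceq(\matches(\Dc,R,s,t))=\min_{\trim{\preceq}}(\matches(\Dc,R,s,t)).
\]
Taking the union over $s,t$ then gives the statement directly from Definition~\ref{d:order-based}.

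\textbf{Suitability of $\trim{\preceq}$.} For the relation to be a partial order it must be reflexive. Read literally, the definition only relates consistent walks, so an inconsistent walk would not be related to itself. I will therefore read $\trim{\preceq}$ as including the diagonal; adding the diagonal changes no minimal elements, and I expect the authors intend this. Antisymmetry is inherited from $\preceq$. For transitivity, suppose $w\trim{\preceq}w'\trim{\preceq}w''$ with distinct walks. Then all three walks are consistent, all share the same endpoints, and $w\preceq w''$, hence $w\trim{\preceq}w''$.

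For the wpo property on $\walks(\Dc)$, I will restrict further to walks with fixed endpoints. The set $\walks(\Dc)$ is partitioned into the finitely many classes $\walks_{s,t}(\Dc)$ indexed by pairs $(s,t)\in V_\Dc^2$. Take an infinite sequence in $\walks(\Dc)$. By the pigeonhole principle, infinitely many of its terms lie in a single class. Apply the wpo property of $\preceq$ to that subsequence to get indices $i<j$ with $x_i\preceq x_j$. Both walks are consistent with $\Dc$, so both are consistent, and they have the same endpoints. Hence $x_i\trim{\preceq}x_j$. I expect this step to be the main obstacle, minor as it is: one must not apply the wpo property to the whole sequence, since the two terms it produces might have different endpoints.

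\textbf{Equality of minimal elements.} On $M=\matches(\Dc,R,s,t)$, every element is a walk in $\Dc$, hence consistent, and every element has endpoints $(s,t)$. So for $w,w'\in M$ we have $w\preceq w'$ if and only if $w\trim{\preceq}w'$. The restrictions of the two orders to $M$ are therefore identical. Minimality of an element of $M$ depends only on the order restricted to $M$, so the two sets of minimal elements are equal, which concludes the proof.
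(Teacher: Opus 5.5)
Your central argument is correct and matches the paper's justification. The paper gives no separate proof, only the remark that comparisons between walks with different endpoints, or between inconsistent walks, are irrelevant. On $M=\matches(\Dc,R,s,t)$ any two walks are consistent (witnessed by $\Dc$) and share their endpoints, so $\preceq$ and $\trim{\preceq}$ restrict to the same relation on $M$ and have the same minimal elements. Your pigeonhole argument for the wpo property on $\walks(\Dc)$ is also fine.

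\textbf{Transitivity.} This is the step that fails. From $w\trim{\preceq}w'$ and $w'\trim{\preceq}w''$ you only learn that $w,w'$ are consistent and that $w',w''$ are consistent. Consistency of walks is not transitive, so nothing tells you that $w$ and $w''$ are consistent, which $w\trim{\preceq}w''$ requires. In fact the claim is false for the literal definition. Take $\preceq$ to be the order underlying $\shortestwalksem$ and consider $w=v_1\xrightarrow{e_1}v_2$, $w'=v_1\xrightarrow{e_2}v_3\xrightarrow{e_3}v_2$ and $w''=v_1\xrightarrow{e_1}v_4\xrightarrow{e_4}v_5\xrightarrow{e_5}v_2$.
\begin{itemize}
\item All three walks have endpoints $(v_1,v_2)$.
\item $w'$ shares no edge with the others, so it is consistent with each of them.
\item The lengths are $1<2<3$, so $w\trim{\preceq}w'\trim{\preceq}w''$.
\item However, $w$ and $w''$ give $e_1$ different targets, so we do not have $w\trim{\preceq}w''$.
\end{itemize}
So, exactly as for reflexivity, transitivity has to be patched in the definition rather than proved.

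\textbf{A repair.} One clean patch is to use the reflexive--transitive closure $T$ of $\trim{\preceq}$. Since $\preceq$ is a partial order containing $\trim{\preceq}$, we get $T\subseteq{\preceq}$. Hence $T$ is antisymmetric and still only relates distinct walks with equal endpoints. Since any two walks of $\walks(\Dc)$ are consistent, $T$ coincides with $\trim{\preceq}$ on $\walks(\Dc)$. Your wpo argument and your comparison of minimal elements then carry over verbatim. Without such a patch, $S_{\trim{\preceq}}$ is not an order-based semantics in the sense of Definition~\ref{d:order-based}. In that case you should at least state that you read $\min_{\trim{\preceq}}$ relative to the restriction to $M$, where the relation is a genuine partial order.
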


A simple example of order-based semantics is the shortest-walk semantics~$\shortestwalksem$ (Ex.\,\ref{ex:shortest}). 
It is based on the partial order $\preceq_{\shortestwalksem}$, the reflexive closure of the strict partial order defined by $w \prec_{\shortestwalksem} w'$ if $\len(w) < \len(w')$.
It is \emph{not} a wpo on $\walks$: since~$\edges$ is infinite, there are infinitely many walks of a given length, and they are all incomparable w.r.t.\@ $\prec_{\shortestwalksem}$. 
However, one may verify that it is indeed a wpo on $\walks(\Dc)$ for every database~$\Dc$, and that it is therefore suitable.
Let us now define another order-based semantics.

\begin{definition}[Subwalk order]\label{d:subwalk-order}
    \semindex{\subwalkminsem}{Subwalk-minimal semantics}
    Let~$F$ be the following factor-inserting binary relation: $w \mathrel{F} w'$ if there are three walks~$w_1,w_2,w_3$ such that $w=w_1\cdot w_3$ and~$w'={w_1\cdot w_2 \cdot w_3}$.
    We define the \defemph{subwalk order}~$\subwalkord$ as the reflexive and transitive closure of~$F$.
\end{definition}

It is straightforward to check that~$\subwalkord$ is a partial order on~$\walks$.  Moreover, one may show that $\subwalkord$ is a suitable order.
We call \defemph{subwalk-minimal} and denote by~$\subwalkminsem$ the semantics based on~$\subwalkord$.

\begin{lemma}\label{l:subwalk-ord-wpo}
    For every database~$\Dc$, $\subwalkord$ is a wpo on~$\walks(\Dc)$
\end{lemma}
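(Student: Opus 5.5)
The plan is to prove the statement by induction on the number of vertices of $\Dc$, using Higman's lemma over a well-quasi-ordered alphabet as the main combinatorial tool.

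\textbf{Preliminary observations.} The factor $w_2$ inserted by $F$ is necessarily a closed walk, because $\tgt(w_1)=\src(w_2)$ and $\tgt(w_2)=\src(w_3)=\tgt(w_1)$. Hence $w\subwalkord w'$ means that $w'$ is obtained from $w$ by repeatedly inserting closed walks at vertices. Two monotonicity facts follow directly from the definition of $F$:
\begin{itemize}
\item if $x\subwalkord x'$, then $y\cdot x\cdot z\subwalkord y\cdot x'\cdot z$ whenever these concatenations are defined;
\item deleting a closed factor gives a smaller walk.
\end{itemize}
There are finitely many endpoint pairs, and an infinite sequence contains an infinite subsequence with constant endpoints. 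It therefore suffices to show that, for all vertices $s,t$, the walks of $\Dc$ from $s$ to $t$ are well-partially-ordered by $\subwalkord$. Since $\subwalkord$ is already a partial order, only the absence of infinite bad sequences is at stake.

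\textbf{Induction on $|V_\Dc|$.} The base case (no vertices, hence no walks) is trivial. For the inductive step, fix $s,t$ and decompose every walk $w$ from $s$ to $t$ uniquely at its visits to $s$:
\[
w=c_1\cdot c_2\cdots c_k\cdot u,
\]
where:
\begin{itemize}
\item each $c_i$ is a closed walk at $s$ that does not visit $s$ internally, so $c_i$ is either a loop edge at $s$ or has the form $e\cdot p\cdot e'$ with $p$ a walk in $\Dc\setminus s$;
\item $u$ is a walk from $s$ to $t$ that never returns to $s$, so $u$ is trivial (when $t=s$) or has the form $e\cdot p'$ with $p'$ a walk in $\Dc\setminus s$ ending at $t$.
\end{itemize}
Here $\Dc\setminus s$ denotes the database obtained by removing $s$ and its incident edges.

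By the induction hypothesis, $\subwalkord$ is a wpo on $\walks(\Dc\setminus s)$. Compare the cycles $c_i$ by requiring equal first and last edges and comparing their middle parts $p$ by $\subwalkord$. Since there are finitely many edges, this is a finite union of products of a finite set (ordered by equality) with a wpo, hence a wqo. The same holds for the possible tails $u$. Higman's lemma, applied over this wqo alphabet of cycles, shows that the finite sequences $(c_1,\dots,c_k)$ are well-quasi-ordered by subsequence embedding with pointwise comparison. Taking the product with the tails again gives a wqo.

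\textbf{Main step: the encoding is order-reflecting.} Suppose $(c_1,\dots,c_k)$ embeds into $(c'_1,\dots,c'_m)$ via $i\mapsto j_i$ with $c_i\subwalkord c'_{j_i}$, and $u\subwalkord u'$ (same boundary edges). We need $w\subwalkord w'$. Starting from $w'$:
\begin{enumerate}
\item Delete the cycles $c'_j$ with $j$ outside the image of the embedding. Each is a closed factor at $s$, so the result is $\subwalkord w'$.
\item Shrink each remaining $c'_{j_i}$ to $c_i$, and $u'$ to $u$. This is valid because $\subwalkord$ is compatible with concatenation: the insertions turning $p$ into $p'$ inside $\Dc\setminus s$ are also insertions inside $e\cdot p\cdot e'$ within the whole walk.
\end{enumerate}
By transitivity, $w\subwalkord w'$.

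Hence any infinite sequence of walks from $s$ to $t$ contains indices $i<j$ with $w_i\subwalkord w_j$, which proves the statement. I expect this order-reflection step to be the main point requiring care. In particular, the trivial-tail case $s=t$ and loop edges at $s$ must be handled as separate alphabet letters with no middle part.
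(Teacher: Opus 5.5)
Your proof is correct, but it takes a different and noticeably heavier route than the paper. The paper uses only the finite-alphabet corollary of Higman's Lemma (Corollary~\ref{c:higman}). It regards walks of $\Dc$ as words over the finite set $X=V_\Dc\cup E_\Dc$ and asserts that $\subwalkord$ is the restriction of the subsequence order to $\walks(\Dc)$.

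The check it leaves to the reader is a short gap argument. Suppose $w$ and $w'$ have the same endpoints and the edges $e_1,\dots,e_k$ of $w$ occur in this order in $w'$. Every edge fixes its source and target in $\Dc$, so the segments of $w'$ before $e_1$, between consecutive $e_i$ and $e_{i+1}$, and after $e_k$ are closed walks at $w\subw{0},\dots,w\subw{k}$. Hence $w'$ is obtained from $w$ by inserting closed walks. This single observation replaces all of the following in your proof: the induction on $|V_\Dc|$, the first-return decomposition at $s$, Higman's Lemma over a wqo alphabet, the closure of wqos under finite unions and products, and the order-reflection step.

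In the other direction, your pigeonhole reduction to fixed endpoints is exactly the ingredient the paper's one-liner glosses over. Taken literally, the claimed equality fails across different endpoints. For example, the trivial walk $(v)$ is a subsequence of $(v,e,v')$ with $v\neq v'$, yet the two are $\subwalkord$-incomparable because $\subwalkord$ preserves endpoints. What is actually needed is only that, for pairs with equal endpoints, the subsequence order is included in $\subwalkord$, and the gap argument gives this.

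So your argument is self-contained, makes the closed-walk insertions explicit, and is careful about endpoints. The paper's argument is far more economical once the gap observation and the endpoint restriction are made explicit.
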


Lemma~\ref{l:subwalk-ord-wpo} follows from a corollary of Higman's Lemma \cite{Higman1952} (see also \cite[Sec.~2.2]{Karandikar2015}). Both are stated below.

\begin{lemma}[Higman's Lemma \cite{Higman1952}]
    \label{l:higman}
    If $\preceq$ is a wpo on some set~$X$, then $\overset{\ast}{\preceq}$ on $X^*$ is a wpo, where~$\overset{\ast}{\preceq}$ is the extension of~$\preceq$ to sequences, that is $a_1\cdots a_k \overset{\ast}{\preceq} b_1\cdots b_\ell$
    iff there exists a strictly increasing function $\varphi \colon \{1,\dots,k\}\to\{1,\dots,\ell\}$ such that $a_i\preceq\varphi(a_i)$, for every~$\{1,\dots,k\}$.
\end{lemma}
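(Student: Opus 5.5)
We prove Higman's Lemma with Nash-Williams' \emph{minimal bad sequence} argument. We read the condition in the statement as $a_i\preceq b_{\varphi(i)}$ for every $i\in\{1,\dots,k\}$. Call an infinite sequence $u_1,u_2,\ldots$ of words in $X^*$ \emph{bad} if there are no $i<j$ with $u_i\mathrel{\overset{\ast}{\preceq}}u_j$, and \emph{good} otherwise. Reflexivity, transitivity and antisymmetry of $\overset{\ast}{\preceq}$ are routine. Transitivity comes from composing the strictly increasing maps. Antisymmetry comes from the fact that mutual embeddings force equal lengths, hence $\varphi=\mathrm{id}$ and $a_i\preceq b_i\preceq a_i$ for all $i$. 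So it only remains to show that every infinite sequence over $X^*$ is good. We argue by contradiction and assume a bad sequence exists.

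First, we build a minimal bad sequence. Choose $u_1$ of minimal length among all first elements of bad sequences. Then choose $u_2$ of minimal length among all words $u$ such that $u_1,u$ begins some bad sequence. Continue inductively: $u_{n}$ is a word of minimal length such that $u_1,\dots,u_{n-1},u_n$ extends to a bad sequence. The limit sequence $(u_n)_n$ is bad, since any witness $u_i\mathrel{\overset{\ast}{\preceq}}u_j$ would already appear in a finite prefix, and every finite prefix extends to a bad sequence. No $u_n$ is the empty word $\varepsilon$, because $\varepsilon$ embeds into every word (take the empty map $\varphi$). So we can write $u_n=a_n v_n$ with $a_n\in X$ and $v_n\in X^*$.

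Next, we extract a subsequence. Since $\preceq$ is a wpo on $X$, the sequence $(a_n)_n$ has an infinite nondecreasing subsequence $a_{n_1}\preceq a_{n_2}\preceq\cdots$ with $n_1<n_2<\cdots$. This is the standard Ramsey-type consequence of being a wpo: colour each pair $i<j$ according to whether $a_i\preceq a_j$, and apply Ramsey's theorem; an infinite homogeneous set of the "no" colour would be a bad sequence in $X$.

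Now consider the sequence
\[
u_1,\dots,u_{n_1-1},v_{n_1},v_{n_2},v_{n_3},\ldots
\]
Since $v_{n_1}$ is strictly shorter than $u_{n_1}$, the minimality in the choice of $u_{n_1}$ forces this sequence to be good. We rule out each possible witness in turn.
\begin{itemize}
\item A witness $u_i\mathrel{\overset{\ast}{\preceq}}u_j$ with $i<j<n_1$ contradicts badness of $(u_n)_n$.
\item A witness $u_i\mathrel{\overset{\ast}{\preceq}}v_{n_k}$ with $i<n_1$ gives $u_i\mathrel{\overset{\ast}{\preceq}}u_{n_k}$, because $v_{n_k}$ is a suffix of $u_{n_k}$ and we can shift $\varphi$ by one. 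Since $i<n_1\le n_k$, this contradicts badness.
\item A witness $v_{n_k}\mathrel{\overset{\ast}{\preceq}}v_{n_l}$ with $k<l$ extends to an embedding of $u_{n_k}=a_{n_k}v_{n_k}$ into $u_{n_l}=a_{n_l}v_{n_l}$. We map the first letter to the first letter, using $a_{n_k}\preceq a_{n_l}$, and shift the rest. This again contradicts badness.
\end{itemize}
Hence no bad sequence exists, and $\overset{\ast}{\preceq}$ is a wpo on $X^*$.

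The main obstacle is the minimal bad sequence construction. One must check that each choice is possible, which holds because at each stage the set of candidates is nonempty and lengths are natural numbers. One must also check that the limit is genuinely bad, which rests on the "every finite prefix extends" invariant. The Ramsey step extracting a nondecreasing subsequence of $(a_n)_n$ is standard, and we would only cite it.
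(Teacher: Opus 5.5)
Your proof is correct. Keep in mind that the paper does not prove this lemma itself. It cites \cite{Higman1952} and only remarks that the usual proof for well-quasi-orders carries over verbatim to well-partial-orders. So there is no in-paper argument to match. What you give is the standard Nash-Williams minimal-bad-sequence proof, and each step checks out:
\begin{itemize}
\item the choice of $u_n$ is always possible, because the prefix $u_1,\dots,u_{n-1}$ extends to a bad sequence;
\item the limit sequence is bad, because every finite prefix extends to a bad sequence;
\item no $u_n$ is the empty word, because $\varepsilon$ embeds into the later $u_j$;
\item your three-case analysis of witnesses in $u_1,\dots,u_{n_1-1},v_{n_1},v_{n_2},\dots$ is exhaustive;
\item each case correctly contradicts either the badness of $(u_n)_n$ or the length-minimality of $u_{n_1}$.
\end{itemize}

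Your writeup also makes explicit the one point the paper only asserts. The wpo versus wqo distinction matters only for antisymmetry of $\overset{\ast}{\preceq}$. You derive it correctly: mutual embeddings force equal lengths, hence $\varphi=\mathrm{id}$, and antisymmetry of $\preceq$ then gives equality letter by letter. The goodness argument itself never uses antisymmetry. You were also right to read the garbled condition $a_i\preceq\varphi(a_i)$ in the statement as $a_i\preceq b_{\varphi(i)}$.

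The paper only uses the lemma through Corollary~\ref{c:higman}, where $X$ is finite and $\preceq$ is equality. In that case your Ramsey step collapses to the pigeonhole principle: some letter is the first letter of infinitely many $u_n$. So for the paper's purposes your argument is fully self-contained and elementary.
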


Note that Higman's Lemma is usually stated for \emph{well-quasi-orders}, which are allowed not to be antisymmetric. The result also holds for well-partial-orders, with an identical proof.
We are interested in a special case of Lemma~\ref{l:higman} stated next,
namely the case where~$X$ is a finite set and~$\preceq$ is the equality relation on~$X$.

\begin{corollary}\label{c:higman}
    Let~$X$ be a finite set, the subsequence relation over~$X^*$ is a wpo.
\end{corollary}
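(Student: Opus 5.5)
The plan is to obtain the statement as the special case of Lemma~\ref{l:higman} in which $\preceq$ is the equality relation $=$ on the finite set~$X$. Two things have to be checked: that equality is a wpo on~$X$, and that the extension $\overset{\ast}{=}$ given by Lemma~\ref{l:higman} is exactly the subsequence relation on~$X^*$. Once both hold, Lemma~\ref{l:higman} says that $\overset{\ast}{=}$ is a wpo, which is the claim.

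First, equality on~$X$ is clearly reflexive, transitive and antisymmetric, so it is a partial order. For the well-partial-order condition, take any infinite sequence $x_1,x_2,\ldots$ of elements of~$X$. Since $X$ is finite, the pigeonhole principle gives two positions $i<j$ with $x_i=x_j$, that is, $x_i\preceq x_j$. Hence equality is a wpo on~$X$.

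Second, when $\preceq$ is equality, the definition in Lemma~\ref{l:higman} reads as follows: $a_1\cdots a_k \overset{\ast}{=} b_1\cdots b_\ell$ iff there is a strictly increasing map $\varphi\colon\{1,\dots,k\}\to\{1,\dots,\ell\}$ with $a_i=b_{\varphi(i)}$ for every~$i$. This is literally the statement that $a_1\cdots a_k$ is a subsequence (scattered subword) of $b_1\cdots b_\ell$. Note that the condition $a_i\preceq\varphi(a_i)$ in the lemma should be read as $a_i\preceq b_{\varphi(i)}$.

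No step is a genuine obstacle. The only point needing care is that Lemma~\ref{l:higman} is stated for partial orders, so we should also confirm that the subsequence relation is antisymmetric. If $u$ is a subsequence of $v$ and $v$ is a subsequence of $u$, then $|u|\le|v|\le|u|$. The embedding $\varphi$ is then a strictly increasing map from $\{1,\dots,k\}$ to itself, hence the identity, so $u=v$. Reflexivity (take $\varphi$ to be the identity) and transitivity (compose the embeddings) are immediate. Alternatively, one can note that Lemma~\ref{l:higman} already produces a partial order.
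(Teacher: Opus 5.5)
Your proposal is correct and matches the paper's argument: the paper obtains the corollary as the special case of Higman's Lemma (Lemma~\ref{l:higman}) where $\preceq$ is equality on the finite set~$X$. You also carry out the routine checks the paper leaves implicit, namely the pigeonhole argument that equality is a wpo, the identification of $\overset{\ast}{=}$ with the subsequence relation, and reading the lemma's typo $a_i\preceq\varphi(a_i)$ as $a_i\preceq b_{\varphi(i)}$.
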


\begin{proof}[Proof of Lemma~\ref{l:subwalk-ord-wpo}] 
    Let~$\Dc$ be a database. We fix~$X=V_\Dc\cup E_\Dc$, a finite set, 
    and let~$\preceq$ denote the subsequence relation over~$X^*$.
Corollary~\ref{c:higman} states that~$\preceq$ is a wpo on~$X^*$.
Note that~$\walks(\Dc)\subseteq X^+$ and it may be checked that~$\subwalkord$ is in fact a restriction of~$\preceq$, that is
    equal to $\setst*{(u,v)\in \walks(\Dc)\times\walks(\Dc)}{u\mathbin{\preceq}v}$.
    This means in particular that the subwalk order~$\subwalkord$ is a wpo on $\walks(\Dc)$.
\end{proof}

The reason for considering the subwalk order~$\subwalkord$ and the associated semantics~$\subwalkminsem$ stems from the property of subwalk coverage guarantee, which we define later. 
Preliminary results about the complexity of query evaluation under~$\subwalkminsem$ can be found in \cite{Khichane2024}, and are recalled in Figure~\ref{f:complexity-table}, page \pageref{f:complexity-table}.
To conclude this section, let us mention another interesting order-based semantics.

\begin{example}[Shortlex Semantics \shortlexsem]\label{ex:shortlex}
    \semindex{\shortlexsem}{Shortlex semantics}
Let~$\preceq_\mathrm{lex}$ be the lexicographic ordering of walks based on some arbitrary order over~$\vertices\cup\edges$.
Let~$\preceq_\mathrm{shortlex}$ be the the corresponding length-lexicographic order: $w \preceq_\mathrm{shortlex} w'$ if either $|w| < |w'|$ or $(|w| = |w'|$ and $w \preceq_\mathrm{lex} w')$.
Even though $\preceq_\mathrm{lex}$ is not suitable, it may be verified that~$\preceq_\mathrm{shortlex}$ is. 
The corresponding shortlex-minimal semantics \shortlexsem{} is a possible implementation of the \gql{ANY} \gql{SHORTEST} keyphrase in GQL~\cite{DeutschEtAl2022,GQL-ISO}.
\end{example}

\subsection{Examples of some out-of-scope semantics}
\label{s:out-of-scope-semantics}

This document focuses on semantics that output walks (Def.~\ref{s:semantics}) and are walk-based (Def.~\ref{d:walk-based}), but it is of course not the only possible approach to defining meaningful RPQ semantics.
The most studied semantics for RPQs in scientific literature is \emph{homomorphism semantics}~\cite{AnglesEtAl2017}, which outputs the endpoints of matches only,
hence does \emph{not} fit our notion of semantics.
On the other hand, \cite{DavidFrancisMarsault2023} propose \emph{run-based} semantics, i.e. semantics~$S$ which require information on the way a walk was matched by the query in order to decide whether it belongs to~$S(\Dc,R)$. We present one such run-based semantics below.
Afterwards, we define another semantics that is not walk-based.
It is loosely inspired by the keyword \gql{CHEAPEST}, discussed during the elaboration of GQL in internal documents, but was not retained for version 1 (see language opportunity 052 in \cite{SQLPGQ-ISO,GQL-ISO}).

\begin{example}[Binding-trail semantics~$\bindingtrailsem$ \cite{DavidFrancisMarsault2023}]
    \semindex{\bindingtrailsem}{Binding-trail semantics}
    \semindex{\simplerunsem}{Simple-run semantics}
    \label{ex:binding-trail} 
    
    Let $\Dc$ be a database,~$R$ a query, and~$k$ is the number of atoms in~$R$.
    A \emph{binding walk (in $\Dc$ matching~$R$)} is a walk~$w\in\matches(\Dc,R)$ together with a function~$\{1,\ldots,\len(w)\}\to\{1,\ldots,k\}$ that maps each edge-position in $w$ to the position of the atom in~$R$ that matched that  edge.
    A  \emph{binding trail (in $\Dc$ matching~$R$)} is a binding-walk $(w,\alpha)$ such that there are no distinct $i,j$ such that~$\alpha(i)=\alpha(j)$ and~$w[i]=w[j]$.
    Binding-trail semantics, denoted by~$\bindingtrailsem$, is defined as follows.
    \begin{equation}
        \bindingtrailsem \colon (\Dc,R) \mapsto \setst{w}{\exists \alpha\text{~~s.t.~~}(w,f)\text{ is a binding trail in $\Dc$ matching~$R$}}
    \end{equation}
    See \cite{DavidFrancisMarsault2023} for a more formal definition.
    A similar semantics called \emph{simple-run semantics}, based on the states of a finite automaton representing the query, is defined in the same paper.
\end{example}

\begin{example}[Cheapest-Walk Semantics $\cheapestwalksem_c$]
    \label{ex:cheapest}
    \semindex{\cheapestwalksem}{Cheapest semantics}
     Let~$c \colon \labels\to(\N-\{0\})$ be a fixed cost function, which we lift to words by:~$c(a_1\cdots a_k) = \sum_{i=1}^{k} c(a_i)$.
     Then, the cost of a walk in a database~$\Dc$ and query~$R$ is the cost of its label in~$\Dc$.
     Cheapest-walk semantics, $\cheapestwalksem_c$, is defined like shortest-walk semantics, using the cost of matches instead of their length.
\end{example}
     
     Note that $\cheapestwalksem_c$ is not walk-based (unless the cost function~$c$ is constant), because a walk provides no information about the label its edges bear.

\section{Properties}
\label{s:properties}

In this section we define and investigate several possible properties of RPQ semantics, in order to help compare and classify them. 
After listing a few basic properties, we distinguish properties relative to some forms of monotonicity, continuity, compatibility with rational operators, and coverage.
Then, we systematically investigate the inclusions between sets of result for the same query according to various semantics.
We conclude this section with a general discussion regarding the set of properties any ``reasonable'' semantics should possess.

\subsection{Basics}
\label{s:basic-properties}

We first list a few basic properties, which are almost always required for a semantics to be considered useful in our model.

\begin{definition}\label{d:identifier-independent}\label{d:label-independent}
    \index{Identifier independent}
    \index{Label independent}
    A semantics~$S$ is called \defemph{identifier-independent} if~$\nu(S(\Dc,R))=S(\nu(\Dc),R)$, for every renaming~$\nu$, database~$\Dc$ and regular expression~$R$.
    Similarly, it is \defemph{label-independent} if $S(\Dc,R)=S(\lambda(\Dc),\lambda(R))$
    for every relabeling~$\lambda$, database~$\Dc$ and regular expression~$R$.
\end{definition}

For instance, a semantics selecting results based on some arbitrary ordering on vertex and edge identifiers (e.g., \shortlexsem{} in Ex.~\ref{ex:shortlex}), or based on the presence of similar arbitrary criteria, would \emph{not} be identifier-independent.
Identifier-independence is related to \emph{genericity} in the classical database setting (see \cite{AbiteboulHullVianu1995} for instance). Indeed, if we consider a query~$q$ consisting of a regular expression and a semantics~$S$ to interpret it, then~$S$ is identifier-independent if and only if~$q$ is $\emptyset$-generic.

Similarly, label-dependent semantics select their results based on arbitrary information regarding labels. Moreover, no label-dependent semantics is walk-based, as stated below.

\begin{lemma}
    \label{l:walk-based=>label-independent}
    Every walk-based semantics is label-independent.
\end{lemma}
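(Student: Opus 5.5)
The plan is to show that relabeling changes neither the set of matches nor, consequently, the output of any walk-based semantics. Let $S$ be walk-based, witnessed by $r_S$ as in Definition~\ref{d:walk-based}, and fix a relabeling~$\lambda$, a database~$\Dc$ and a regular expression~$R$. Since $S(\Dc,R)=r_S(\matches(\Dc,R))$ and $S(\lambda(\Dc),\lambda(R))=r_S(\matches(\lambda(\Dc),\lambda(R)))$, it suffices to establish the identity
\[
\matches(\lambda(\Dc),\lambda(R)) = \matches(\Dc,R).
\]

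First I check that $\walks(\lambda(\Dc))=\walks(\Dc)$. By definition, $\lambda(\Dc)$ has the same vertex set, edge set, and source and target functions as $\Dc$. Consistency of a walk depends only on these components, so the two sets of walks coincide. Moreover, for every walk $w$ in $\Dc$ we have $\lbl_{\lambda(\Dc)}(w)=\lambda(\lbl_\Dc(w))$, where $\lambda$ is extended letterwise to words; this follows immediately from $\lbl_{\lambda(\Dc)}=\lambda\circ\lbl_\Dc$.

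Second I prove that $L(\lambda(R))=\lambda(L(R))$ for every regular expression $R$, by structural induction on $R$. The base cases $\varepsilon$ and $a$ are immediate. The inductive cases use that the letterwise extension of $\lambda$ is a monoid morphism, so it commutes with union, concatenation and star of languages. Then I use that $\lambda$ is a bijection on $\labels$, so its letterwise extension is injective on words. Hence $\lambda(u)\in\lambda(L(R))$ if and only if $u\in L(R)$. Combining the two steps, a walk $w$ lies in $\matches(\lambda(\Dc),\lambda(R))$ if and only if $\lambda(\lbl_\Dc(w))\in L(\lambda(R))=\lambda(L(R))$, which holds if and only if $\lbl_\Dc(w)\in L(R)$. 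This gives the displayed identity, and the lemma follows.

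The only point requiring care is the injectivity step, which is where the assumption that $\lambda$ is a permutation is used. The inductive step for the Kleene star also relies on the morphism property. Neither step should pose a real obstacle.
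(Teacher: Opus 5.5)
Your proposal is correct and follows the same route as the paper: reduce to the identity $\matches(\lambda(\Dc),\lambda(R))=\matches(\Dc,R)$ and then apply $r_S$. The only difference is that you justify this identity explicitly, via the unchanged walk set, $\lbl_{\lambda(\Dc)}=\lambda\circ\lbl_\Dc$, $L(\lambda(R))=\lambda(L(R))$ and the injectivity of $\lambda$, whereas the paper simply asserts it.
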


\begin{proof}
    By definition, for any walk-based semantics $S$, we have $S(\Dc, R) = r_S(\matches(\Dc,R))$ for some function $r_S$. However, since $\matches(\Dc,R)$ retains no information on the labeling of walks, for any relabeling $\lambda$, $\matches(\lambda(\Dc), \lambda(R)) = \matches(\Dc,R)$. Therefore $S(\Dc, R) = S(\lambda(\Dc), \lambda(R))$.
\end{proof}

In this work, we will always be considering identifier- and label-independent semantics, unless explicitly stated otherwise.

\begin{definition}
    \index{Expression independent}
    A semantics~$S$ is \defemph{expression-independent} if, for every database~$\Dc$ and every expressions~$R_1,R_2$ such that~$L(R_1)=L(R_2)$, we have $S(\Dc,R_1)=S(\Dc,R_2)$.
\end{definition}

Expression-independent semantics select their results on the sole basis of the language denoted by the regular expression. In some settings, it may make sense to instead let the selection of results depend on the structure of the expression itself. Binding-trail semantics (Ex.~\ref{ex:binding-trail}) is one such example. However, this is out of the scope of this work, since all walk-based semantics are expression-independent, as stated below.

\begin{lemma}
    \label{l:walk-based=>expression-independent}
    Every walk-based semantics is expression-independent.
\end{lemma}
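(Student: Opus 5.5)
The argument mirrors the proof of Lemma~\ref{l:walk-based=>label-independent}: everything a walk-based semantics sees passes through $\matches(\Dc,R)$, and $\matches$ depends on the expression only through the language it denotes. So I will show that $\matches(\Dc,R_1)=\matches(\Dc,R_2)$ whenever $L(R_1)=L(R_2)$, and then apply the function $r_S$ given by Definition~\ref{d:walk-based}.

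Fix a walk-based semantics $S$ with its function $r_S \colon \pow{\walks}\to\pow{\walks}$. Fix also a database $\Dc$ and expressions $R_1,R_2$ with $L(R_1)=L(R_2)$. By Definition~\ref{d:matches},
\[
\matches(\Dc,R_i)=\setst{w\in\walks(\Dc)}{\lbl_\Dc(w)\in L(R_i)}, \qquad i\in\{1,2\}.
\]
The right-hand side depends on $R_i$ only through $L(R_i)$. Since these languages are equal, the two sets of matches coincide.

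It follows that
\[
S(\Dc,R_1)=r_S(\matches(\Dc,R_1))=r_S(\matches(\Dc,R_2))=S(\Dc,R_2),
\]
which is exactly expression-independence.

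There is no real obstacle here. The only point to be careful about is that $r_S$ is a single function, fixed independently of $\Dc$ and $R$. This is what Definition~\ref{d:walk-based} provides, and it is what allows us to pass from equal inputs to equal outputs.
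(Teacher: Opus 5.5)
Your proof is correct and follows the paper's argument. Equal languages give $\matches(\Dc,R_1)=\matches(\Dc,R_2)$ directly from Definition~\ref{d:matches}, and applying the single fixed function $r_S$ then yields $S(\Dc,R_1)=S(\Dc,R_2)$.
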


\begin{proof}
    Similar to Lemma \ref{l:walk-based=>label-independent}. If $L(R_1)=L(R_2)$ then $\matches(\Dc,R_1) = \matches(\Dc,R_2)$ by definition of $\matches$. 
    Therefore $S(\Dc, R_1) = S(\Dc, R_2)$.
\end{proof}

\begin{definition}\label{d:bounded-semantics}\label{d:unbounded-semantics}
    \index{Bounded}\index{Unbounded}
    A semantics~$S$ is called \defemph{unbounded} if, for every integer~$n$, there is a database~$D$ and a regular expression~$R$ such that~$S(D,R)$ contains a walk of length~$n$ or more.
    Otherwise~$S$ is called \defemph{bounded}.
\end{definition}

Of course, we expect any reasonable semantics to be unbounded since a bounded semantics never returns a walk longer than some constant.

\subsection{Monotony}
\label{s:monotony}

\begin{definition}[Monotony, co-monotony and restrictibility]
        \index{Monotonous}
        A semantics~$S$ is said to be \defemph{monotonous} if $S(\Dc,R)\subseteq S(\Dc',R)$, for every regular expression~$R$ and databases~$\Dc,\Dc'$ such that~$\Dc\subseteq\Dc'$.
\label{d:comonotonous}
        \index{Co-monotonous}
It is \defemph{co-monotonous} if $S(\Dc,R)\supseteq \big(S(\Dc',R)\cap\walks(D)\big)$, for every regular expression~$R$ and databases~$\Dc,\Dc'$ such that~$\Dc\subseteq\Dc'$.
It is \defemph{restrictible} if both hold.
        \label{d:restrictible}        
        \index{Restrictible}
\end{definition}

These notions are about the behavior of the semantics when a database is enriched with new vertices and edges.
A monotonous semantics does not remove results when the database is enriched. A co-monotonous semantics does not add results that are unrelated to the new elements.
Finally, a restrictible semantics means that one is able to compute part of the result, given part of the database.
A simple verification shows that all semantics defined above are co-monotonous. This is in fact true of all filter-based and order-based semantics. Additionally, filter-based semantics are also monotonous, and therefore restrictible.

\begin{lemma}
    \label{l:filter-based=>restrictible}
    \label{l:order-based=>comonotonous}
    Every filter-based semantics is restrictible, and
    every order-based semantics is co-monotonous.
\end{lemma}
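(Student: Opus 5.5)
The proof splits into two independent parts, each a direct unfolding of the definitions. The one fact about databases we rely on throughout is that $\Dc\subseteq\Dc'$ implies $\walks(\Dc)\subseteq\walks(\Dc')$, and that for $w\in\walks(\Dc)$ the label is the same in both, $\lbl_\Dc(w)=\lbl_{\Dc'}(w)$. Here we read $\Dc\subseteq\Dc'$ as inclusion of vertex and edge sets, with the source, target and label functions of $\Dc'$ extending those of $\Dc$. From this we get
\[\matches(\Dc,R)=\matches(\Dc',R)\cap\walks(\Dc),\]
and the same identity holds for the endpoint-restricted sets $\matches(\Dc,R,s,t)$.

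For a filter-based semantics $S$ with filter $f_S$, we have
\[S(\Dc,R)=\setst{w\in\matches(\Dc,R)}{f_S(w)=\top}.\]
\emph{Monotony:} if $w\in S(\Dc,R)$, then $w\in\matches(\Dc',R)$ and $f_S(w)=\top$, because $f_S$ depends only on $w$. Hence $w\in S(\Dc',R)$.
\emph{Co-monotony:} if $w\in S(\Dc',R)\cap\walks(\Dc)$, then by the identity above $w\in\matches(\Dc,R)$, and still $f_S(w)=\top$, so $w\in S(\Dc,R)$.
Together these give restrictibility.

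For an order-based semantics $S_\preceq$, take $w\in S_\preceq(\Dc',R)\cap\walks(\Dc)$ and set $(s,t)=\ep(w)$. Then $w$ is $\preceq$-minimal in $\matches(\Dc',R,s,t)$. Since $w\in\walks(\Dc)$, both $s$ and $t$ lie in $V_\Dc$, and the identity gives $w\in\matches(\Dc,R,s,t)$. Now $\matches(\Dc,R,s,t)\subseteq\matches(\Dc',R,s,t)$, and minimality is inherited by subsets: a walk $w'\prec w$ in the smaller set would also lie in the larger set, contradicting minimality there. So $w\in\min_\preceq(\matches(\Dc,R,s,t))\subseteq S_\preceq(\Dc,R)$.

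There is no real obstacle in either part. The only step needing care is fixing the precise meaning of $\Dc\subseteq\Dc'$, since consistency and labels must be preserved when passing to the larger database. Monotony fails for order-based semantics because new edges can create smaller matches, which is why only co-monotony is claimed there.
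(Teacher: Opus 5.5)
Your proposal is correct and matches the paper's proof essentially step for step. For filter-based semantics the paper proves the single identity $S(\Dc',R)\cap\walks(\Dc)=S(\Dc,R)$, which gives your two inclusions at once. The order-based argument, that minimality in $\matches(\Dc',R,s,t)$ passes to the subset $\matches(\Dc,R,s,t)$, is the same as yours.
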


\begin{proof}[Proof for filter-based semantics]
    Let $S$ be a filter-based semantics defined by filter function $f_S$ (see Def. \ref{d:filter-based}). 
    Let $R$ be a regular expression and~$\Dc,\Dc'$ two databases such that~$\Dc\subseteq\Dc'$. 
    We have:
    \begin{align*}
        S(\Dc', R) ={}& \setst{w\in\matches(\Dc',R)}{f_S(w)=\top}
\end{align*}
    Therefore:
    \begin{equation*}
        S(\Dc', R) \cap \walks(\Dc) = \setst{w\in\matches(\Dc,R)}{f_S(w)=\top} = S(\Dc, R)\qedhere
    \end{equation*}
\end{proof}

\begin{proof}[Proof for order-based semantics]
    Let $S$ be an order-based semantics defined by ordering $\preceq$ (see Def. \ref{d:order-based}). 
    Let $R$ be a regular expression and~$\Dc,\Dc'$ two databases such that~$\Dc\subseteq\Dc'$. 
    Let $w$ be a walk between some vertices $s$ and $t$ in $S(\Dc',R)\cap\walks(D)$.
    By definition, $w \in \min_\preceq(\matches(\Dc',R,s,t))$. But since $w \in \walks(\Dc)$, we also have $w \in \matches(\Dc, R)$. There can be no $w' \prec w$ in $\matches(\Dc, R)$, otherwise $w$ would not be minimal in $\matches(\Dc',R,s,t)$ either. Therefore $w \in \min_\preceq(\matches(\Dc,R,s,t))$, in other words $w \in S(\Dc, R)$.
\end{proof}

We expect all ``reasonable'' semantics to be co-monotonous.
In fact, semantics that are not co-monotonous are usually quite strange, as illustrated below.

\begin{example}[Log-length semantics]
    \semindex{\loglensem}{Log-length semantics}
    The \emph{log-length} semantics~$\loglensem$ returns all matches that are
    shorter than~$\log_2(|\Dc|)$. 
    This semantics has the property that the size of~$\loglensem(\Dc,R)$ is always polynomial in the size of~$\Dc$.
    It is not co-monotonous since doubling the size of~$\Dc$ (for instance by adding enough isolated vertices) may return longer walks which do not necessarily use the new isolated vertices.
    Note however that $\loglensem$ is monotonous, since no previous result may disappear when enriching the database.

    Although $\loglensem$ is not walk-based, one could consider a variant that is.  Indeed,
    it could return the walks shorter than~$\log_2(|X(\Dc,R)|)$, where~$X(\Dc,R)$ is the set of all edges appearing in some walk in~$\matches(\Dc,R)$.
\end{example}

As for monotony, the landscape is more interesting.
We already saw that all filter-based semantics are monotonous (Lemma \ref{l:filter-based=>restrictible}).
However, it is notable that shortest semantics is not.

\begin{lemma}\label{l:shortest-not-restrictible}
    $\shortestwalksem$ is not monotonous.
\end{lemma}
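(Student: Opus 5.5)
To show that $\shortestwalksem$ is not monotonous, I will exhibit a single counterexample: a regular expression $R$ and two databases $\Dc\subseteq\Dc'$ such that some walk of $\shortestwalksem(\Dc,R)$ does not belong to $\shortestwalksem(\Dc',R)$. The idea is to add a shortcut in $\Dc'$ between the endpoints of a shortest match in $\Dc$. The old match is then no longer of minimal length, yet it is still a walk of $\Dc'$ that matches $R$.

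Concretely, I take the single label $a$ and the query $R = a^*$. Let $\Dc$ have vertices $s,u,t$ and edges $e_1\colon s\to u$ and $e_2\colon u\to t$, both labeled $a$. Let $\Dc'$ be $\Dc$ plus one edge $e_3\colon s\to t$ labeled $a$, so that clearly $\Dc\subseteq\Dc'$. In $\Dc$, the only walk from $s$ to $t$ is $w = s\rarrow{e_1}u\rarrow{e_2}t$, since the graph is acyclic. Hence $\matches(\Dc,R,s,t)=\{w\}$, $\ell_{(s,t)}=2$, and $w\in\shortestwalksem(\Dc,R)$.

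In $\Dc'$, the walk $w' = s\rarrow{e_3}t$ has label $a\in L(R)$ and length $1$. So $\ell_{(s,t)}=1$ in $\Dc'$, and $w$, which has length $2$, is not in $\shortestwalksem(\Dc',R)$. Therefore $\shortestwalksem(\Dc,R)\not\subseteq\shortestwalksem(\Dc',R)$, which is what we want.

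There is no real obstacle here. The only step needing care is the claim that $w$ is the unique match from $s$ to $t$ in $\Dc$, and this follows from the acyclicity of $\Dc$. If one prefers not to rely on the trivial walk, the query $R = aa + a$ works equally well.
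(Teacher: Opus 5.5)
Your counterexample is correct and is essentially the paper's own proof. Both build a two-edge path $s\to u\to t$ whose length-$2$ match is the unique, hence shortest, match between its endpoints, then add a shortcut edge $s\to t$ whose one-edge match removes it; the paper labels the shortcut $b$ and uses the query $aa+b$, whereas you reuse the label $a$ with $a^*$ (or $aa+a$), which changes nothing in the argument.
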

\begin{proof}
    Let~$\Dc$ be a database consisting of three vertices~$v_1$, $v_2$ and $v_3$ and two edges~$v_1\rarrow[a]{e_1}v_2$ and~$v_2\rarrow[a]{e_2}v_3$,
    and let~$R=aa+b$ where~$a,b\in\labels$.
    Hence~$S(\Dc,R)=\set{v_1\rarrow{e_1}v_2\rarrow{e_2}v_3}$.
    Then, let~$\Dc'$ be the database~$\Dc'$ resulting from adding in~$\Dc$ a new fresh edge~$v_1\rarrow[b]{e_2}v_3$.
    Hence~$S(\Dc',R)=\set{v_1\rarrow[b]{e_2}v_3}$ which is not a superset of~$S(\Dc,R)$.
\end{proof}

At first glance, it may seem that all order-based semantics~$S_\preceq$ are bound to be non-monotonous, since it is enough to make a shortcut (w.r.t~$\preceq$) using new vertices and edges.
However, one may show that the (order-based) subwalk-minimal semantics \subwalkminsem{} (Def.\@ \ref{d:subwalk-order}) is indeed monotonous.
More generally, let us now characterize monotonous order-based semantics using a notion of element inclusion (Proposition~\ref{p:order-based+monotonous<=>inclusion}).
In the following, we denote by~$\vertexset(w)$ (resp.~$\edgeset(w)$) the set of vertices (resp.~of edges) occurring in~$w$.  We also use~$\elemset(w)$ (resp.~$\elembag(w)$) to denote the set (resp.~the bag) of vertices and edges appearing in~$w$.
We say that an order~$\preceq$ on~$\walks$ \defemph{respects element inclusion} if $w\preceq w'$ implies~$\elemset(w)\subseteq \elemset(w')$.

\begin{proposition}\label{p:order-based+monotonous<=>inclusion}
    Let~$S_\preceq$ be a semantics based on a trimmed order~$\preceq$.
    Then~$\preceq$ respects element inclusion iff $S_\preceq$ is monotonous.
\end{proposition}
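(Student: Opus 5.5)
We prove the two directions separately. Throughout, since $\preceq$ is trimmed, $w\preceq w'$ with $w\neq w'$ implies that $w$ and $w'$ are consistent walks with the same endpoints.

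\emph{($\Rightarrow$)} Assume $\preceq$ respects element inclusion. Let $\Dc\subseteq\Dc'$, let $R$ be a regular expression, and let $w\in S_\preceq(\Dc,R)$ with $\ep(w)=(s,t)$. Then $w\in\matches(\Dc',R,s,t)$, because walks in $\Dc$ are walks in $\Dc'$ with the same labels. Suppose, towards a contradiction, that some $w'\in\matches(\Dc',R,s,t)$ satisfies $w'\prec w$.
\begin{itemize}
\item By element inclusion, $\elemset(w')\subseteq\elemset(w)\subseteq V_\Dc\cup E_\Dc$.
\item Since $w'$ is consistent with $\Dc'$ and $\src,\tgt,\lbl$ agree on the elements of $\Dc$, the walk $w'$ is consistent with $\Dc$ and has the same label there.
\item Hence $w'\in\matches(\Dc,R,s,t)$, which contradicts the minimality of $w$ in $\matches(\Dc,R,s,t)$.
\end{itemize}
So $w$ remains minimal in $\matches(\Dc',R,s,t)$, and $w\in S_\preceq(\Dc',R)$.

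\emph{($\Leftarrow$)} We prove the contrapositive. Suppose there are $w'\preceq w$ with $\elemset(w')\not\subseteq\elemset(w)$. Then $w'\neq w$, so both are consistent and share endpoints $(s,t)$. We build a counterexample to monotony as follows.
\begin{enumerate}
\item Let $\Dc$ consist of exactly the vertices and edges of $w$, with the incidences dictated by $w$. Label every edge of $\Dc$ with a letter $a$.
\item Let $\Dc'$ be $\Dc$ together with the elements of $w'$. This is well defined provided $w$ and $w'$ agree on the incidences of shared edges (see below).
\item Label the new edges of $\Dc'$ with $a$ as well, and take $R=a^*$.
\end{enumerate}

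In $\Dc$, the set $\matches(\Dc,R,s,t)$ is finite or infinite, but it contains $w$. The difficulty is that some other walk $u$ of $\Dc$ might satisfy $u\prec w$, so that $w$ is not minimal.

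To handle this, we use the wpo property to choose the pair well. Fix $w'$ with $\elemset(w')\not\subseteq\elemset(w)$ and $w'\prec w$. Among the walks $u\in\matches(\Dc,R,s,t)$ with $u\preceq w$, pick a minimal one $u$; it exists because $\preceq$ is a wpo on $\walks(\Dc)$. Then $u\in S_\preceq(\Dc,R)$. By transitivity $w'\prec w$ is fine, but we need $w'\prec u$, which does not follow in general. So this choice of $u$ does not directly close the argument.

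Instead, we argue by picking a $\preceq$-minimal counterexample. Consider the set of walks $x$ such that some $y\prec x$ has $\elemset(y)\not\subseteq\elemset(x)$. This set is nonempty. Restricted to walks over the finite element set $\elemset(w)\cup\elemset(w')$, the order is well-founded, so take a minimal $x$ in it, with its witness $y$. Now let $\Dc_x$ be the database of the elements of $x$. Any $u\prec x$ in $\Dc_x$ satisfies $\elemset(u)\subseteq\elemset(x)$, which is not directly contradictory.

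Since this also stalls, we use a cleaner device. Label the elements of $w$ with a single letter $a$ per position, and make $R$ denote exactly the label of $w$. Using fresh edge copies, the label word of $w$ is realized by very few walks of $\Dc$. If we relabel so that each edge of $w$ gets a distinct letter $b_e$, and set $R$ to be the label of $w$, then the matches in $\Dc$ are walks from $s$ to $t$ with that exact label sequence, which forces them to equal $w$. So $S_\preceq(\Dc,R)=\{w\}$.

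For $\Dc'$, add the elements of $w'$ and label the new edges so that $w'$ also matches $R$. This works when $\len(w')=\len(w)$; in general take $R=\lbl(w)+\lbl(w')$, using fresh letters for the new edges. Then $w'$ matches $R$ in $\Dc'$ and $w'\prec w$, so $w\notin S_\preceq(\Dc',R)$, and monotony fails.

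Two points remain and are the main obstacle.
\begin{itemize}
\item \emph{Uniqueness of the match in $\Dc$.} We must check that $R=\lbl(w)+\lbl(w')$ has no match in $\Dc$ other than $w$. This holds because every edge of $w'$ not in $\Dc$ carries a fresh letter, while the shared edges keep their letters.
\item \emph{Consistency of $w$ and $w'$.} We need $\Dc'$ to be a well-defined database, which requires $w$ and $w'$ to be jointly consistent. If they are not, we must instead exploit that such a pair is irrelevant under $\trim{\preceq}$; since $\preceq$ is trimmed, any such pair already has its comparison removed.
\end{itemize}
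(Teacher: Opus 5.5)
Your final argument is correct and is essentially the paper's proof. The forward direction is the same, with the co-monotony step inlined. The backward direction uses the same pair of characteristic databases (the larger walk's inside that of both walks) and the same two-term query $\lbl(w)+\lbl(w')$ with per-edge letters, only phrased contrapositively, so you can discard the two abandoned attempts. The one step you should make explicit is that $w'$ really contains an edge outside $w$; otherwise $\lbl(w')$ could match in $\Dc$, e.g.\ $\varepsilon$ matches every vertex. This follows from trimmedness: a trivial $w'$ would equal $(\src(w))$, and a nontrivial $w'$ using only edges of $w$ would, by joint consistency, use only vertices of $w$.
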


\begin{proof}[Proof of Proposition~\ref{p:order-based+monotonous<=>inclusion}, forward direction]
    For the sake of contradiction, let us assume there are a query~$R$, two databases~$\Dc\subseteq\Dc'$ and a walk~$w$ that belongs to~$S(\Dc,R)$ but not to~$S(\Dc',R)$.
    This means there is a walk~$w'\in S(\Dc',R)$ satisfying $w'\prec w$.
    It follows from hypotheses that $\elemset(w')\subseteq \elemset(w)$
    hence that~$w'\in\walks(\Dc)$.
    Since~$S$ is co-monotonous (Lemma~\ref{l:order-based=>comonotonous})
    $w'\in S(\Dc,R)$. Hence $w\notin S(\Dc,R)$, a contradiction.
\end{proof}

The proof of the backward direction requires some extra machinery (Definition~\ref{d:characteristic}) that will also be useful in future proofs.

\begin{definition}[Characteristic database, characteristic expression] 
    \label{d:characteristic}
    For simplicity, we will consider in the following that~$\edges\subseteq\labels$, i.e.\@ that edge identifiers are possible labels.
    To be formally correct, we would need to use some arbitrary bijection~$\edges\to\labels$, but we prefer to leave it implicit.
    \begin{subthm}
        \item 
             \label{d:characteristic-database}
             \index{Characteristic database}
        Let~$w_1,\ldots,w_k$ be consistent walks.
        Let~$\Dc$ be any database such that~$\{w_1,\ldots,w_k\}\subseteq\walks(\Dc)$, and
        \begin{align*} 
            V={}&\vertexset(w_1)\cup\cdots\cup \vertexset(w_k) \\
            E={}&\edgeset(w_1)\cup\cdots\cup \edgeset(w_k)
        \end{align*}
        Note that $V\subseteq V_\Dc$ and $E\subseteq E_\Dc$. 
        The \defemph{characteristic database} of~$\{w_1,\ldots,w_k\}$ is the database~
        \begin{equation*}
            \Cc_{w_1,\ldots,w_k} = (L, V, E, \src, \tgt, \lbl)
        \end{equation*} 
        where $L = E$,  $\lbl$ is the identity function, 
        and $\src$ (resp.~$\tgt$) is the restriction of~$\src_\Dc$ (resp.~$\tgt_\Dc$) to~$E$.
        
        (One may verify that~$\Cc_{w_1,\ldots,w_k}$ is independent of the database~$\Dc$ chosen.)
        \item  
         \label{d:characteristic-expression}
         \index{Characteristic expression}
        Let~$w$ be a walk of positive length $k$. Then we denote by~$\CharR{w}$ the \defemph{characteristic regular expression} of $w$, defined as~$\CharR{w}=w[1]\cdots w[k]$.
    \end{subthm}
\end{definition}

Intuitively, the characteristic database of a finite set of compatible walks is just the smallest database containing them all, i.e., $\{w_1,\ldots,w_k\} \subseteq \walks(\CharD{w_1,\ldots,w_k})$, and in which edges are labeled with their own identifier. The characteristic expression of a walk is just the sequence of its edge identifiers. 

These definitions are chosen in such a way that the characteristic expression of each $w_i$ of positive length is matched exactly once ($\matches(\CharD{w_1,\ldots,w_k},\CharR{w_i})=\{w_i\}$) and that the characteristic expression of any other walk of positive length is matched either once or not at all: $\matches(\CharD{w_1,\ldots,w_k},\CharR{w})$ is either empty or equal to~$\{w\}$.

\medskip

We can now show that if an order-based semantics is monotonous, then the corresponding order preserves element inclusion.

\begin{proof}[Proof of Proposition~\ref{p:order-based+monotonous<=>inclusion}, backward direction]
    Let~$S_\preceq$ be a monotonous semantics based on a trimmed order~$\preceq$.

    Let~$w$ and~$w'$ be two walks such that~$w\mathrel{\prec} w'$.
    First, let us address the case where~$\len(w)=0$, in which case~$\elemset(w)=\{v\}$ for some vertex~$v$.
    Since~$\preceq$ is trimmed,~$w$ and~$w'$ have the same endpoints. It follows that~$v=\src(w')=\tgt(w')$ hence~$\elemset(w)\subseteq\elemset(w')$.
    
    Second, assume that~$\len(w)>0$.
    Since~$\preceq$ is trimmed,~$w$ and~$w'$ are consistent,
    which means in particular that the two databases~$\CharD{w'}$ and $\CharD{w,w'}$ are well defined.
    Let us moreover consider the following query
    \begin{equation}
        R=\begin{cases}
            \CharR{w}+\CharR{w'} & \text{if $\len(w')\mathbin>0$} \\
             \CharR{w}+\varepsilon  &\text{if $\len(w')\mathbin=0$} \\
        \end{cases}
    \end{equation}
    Note that~$\matches(\CharD{w,w'},R)\supseteq\{w,w'\}$, and that the inclusion might be strict when $\len(w') = 0$ because $\varepsilon$ will match every walk of length 0. 
    Since $w \prec w'$, this implies that $w'\notin S(\CharD{w,w'},R)$,
    and therefore, due to the monotony of~$S$, that 
    \begin{equation}
        w'\notin S_{\preceq}(\CharD{w'},R)\,.
    \end{equation}
    
    If $w'$ were of length~$0$, then $\CharD{w'}$ would consist of a single isolated vertex.
    Therefore, since $\len(w)>0$, $w$ is not a walk in $\CharD{w'}$. Hence we would have~$\matches(\CharD{w'},R)=\{w'\}$ and thus $w'\in S_{\preceq}(\CharD{w'},R)$, a contradiction to (\theequation).
    It follows that~$w'$ must have a positive length, which means that~$\matches(\CharD{w,w'},R)=\{w,w'\}$. 
    Since $\matches(\CharD{w'},R)\subseteq\matches(\CharD{w,w'},R)=\{w,w'\}$,
    Equation (\theequation) implies that $w$ must belong to~$\matches(\CharD{w'},R)$ hence to $\walks(\CharD{w'})$.
    By definition of the characteristic database, this is true only if $\elemset(w)\subseteq\elemset(w')$.
\end{proof}

Consider the relation~$\preceq_1$ defined as: $w\preceq_1 w'$ if $\elemset(w)\subseteq \elemset(w')$.
It is not antisymmetric, hence not a partial order.
Now consider the relation~$\preceq_2$ defined as: $w\prec_2 w'$ if $\elemset(w)\subsetneq \elemset(w')$.
Although~$\preceq_2$ is a partial order, it is \textbf{not} suitable:~$\preceq_2$ is not a wpo on~$\walks(D)$ whenever~$\Dc$ contains a cycle.
We give below two examples of semantics based on an order that preserves element inclusion.
Example~\ref{ex:minimal-multiset} takes multiplicity into account.
Example~\ref{ex:shortest-minimal} refines~$\preceq_2$ by adding a tiebreaker for walks with identical element sets.

\begin{example}[Minimal-multiset semantics~\minmultsem]\label{ex:minimal-multiset}
    \semindex{\minmultsem}{Minimal-multiset semantics}
    Let~$\minmultsem$ be the semantics based on the order~$\preceq$ defined by:~$w\prec w'$ if $\elembag(w)\subsetneq \elembag(w')$.
    Recall that~$\subwalkord$ denotes the subwalk order on walks (Def.~\ref{d:subwalk-order}). 
    A simple verification yields that~$w\subwalkord w' \implies w\preceq w'$.
    Then, $\preceq$ is a wpo on $\walks(\Dc)$ for each database~$\Dc$, because so is~$\subwalkord$ (Lemma~\ref{l:subwalk-ord-wpo}).
    Indeed, let~$(w_1,w_2,\ldots)$ be an infinite sequence of walks in~$\Dc$.
    Since~$\subwalkord$ is a wpo on~$\Dc$, there are $i<j$ such that~$w_i \subwalkord w_j$, which
    implies that~$w_i \preceq w_j$.
\end{example}

\begin{example}[Shortest-minimal-set semantics~\shortminsetsem]\label{ex:shortest-minimal}
    \semindex{\shortminsetsem}{Shortest-minimal-set semantics}
    Let~$\shortminsetsem$ be the semantics based on the order~$\preceq$ defined by:
    \begin{equation*}
        w\prec w' \iff \begin{cases}
            \text{either }\elemset(w)\subsetneq\elemset(w') \\
            \text{or }\elemset(w)=\elemset(w')\text{ and }\len(w)<\len(w')
        \end{cases}
    \end{equation*}
\end{example}

\noindent
The semantics~$\minmultsem$ and~$\shortminsetsem$ are finite, identifier-independent and restrictible.
The complexity of query evaluation under them is explored in \cite{Marsault2024} and recalled in Figure~\ref{f:complexity-table}, page~\pageref{f:complexity-table}.

\subsection{Continuity}
\label{s:continuity}

In this section, we attempt to capture the property for a semantics $S$ to behave ``reasonably'' at the limit.
Essentially, the sequence~${\big(S(\Dc,a+a^2+\cdots+a^k)\big)_{k\in\N}}$ should to tend to~$S(\Dc,a^*)$ when~$k$ tends to~$\infty$.
More precisely, results of non-decreasing sequences of RPQs (with respect to language inclusion) should eventually stabilize to some fixed set $W$, and furthermore whenever the whole sequence is captured by a single RPQ, then the set of results over that RPQ should be $W$ itself. 
These conditions prevent $S$ to change gears and arbitrarily alter the set of results when moving to the limit.
As we show below, it turns out that all filter-based and all order-based semantics possess these properties.

\begin{definition}
    \label{d:stabilizing}
    \label{d:continuous}
    \index{Stabilizing}
    \index{Continuous}
    A semantics~$S$ is said to be \defemph{stabilizing} if, for every database~$\Dc$ and every sequence of regular expressions~$\big(R_i\big)_{i\in\N}$ such that $\big(L(R_i)\big)_{i\in\N}$ is non-decreasing, $\big(S(\Dc, R_i)\big)_{i \in \N}$ is ultimately equal to some set $W$, and we say that $\big(S(\Dc, R_i)\big)_{i \in \N}$ stabilizes to $W$.

    Moreover, when~$S$ is stabilizing, it is said to be \defemph{continuous}
    if, whenever there exists a sequence of regular expressions~$\big(R_i\big)_{i\in\N}$ as above and a regular expression~$R_\infty$ such that~$L(R_\infty)=\bigcup_{i\in\N}L(R_i)$, it holds that~$W=S(\Dc, R_\infty)$.
\end{definition}

\begin{proposition}\label{p:order-based=>continuous}\label{p:filter-based=>continuous}
    All order-based and filter-based semantics are continuous.
\end{proposition}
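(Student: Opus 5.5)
The key fact is that $\walks(\Dc)$ may be infinite, but for a fixed database $\Dc$ and a non-decreasing sequence $(R_i)_{i\in\N}$ the sets $M_i=\matches(\Dc,R_i)$ form a non-decreasing chain of subsets of $\walks(\Dc)$ with union $M_\infty=\bigcup_i M_i$. Moreover, whenever $L(R_\infty)=\bigcup_i L(R_i)$, we have $M_\infty=\matches(\Dc,R_\infty)$ directly from Definition~\ref{d:matches}. So in both cases the plan is to show that $S(\Dc,R_i)$ is eventually equal to a set $W$ depending only on $M_\infty$, and that this set is $S(\Dc,R_\infty)$ whenever $R_\infty$ exists. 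Even when no $R_\infty$ exists, $M_\infty$ is a well-defined (possibly non-regular) set of walks, and we argue with it directly.

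\textbf{Filter-based case.} We have $S(\Dc,R_i)=\setst{w\in M_i}{f_S(w)=\top}$, and these sets form a non-decreasing sequence. Each of them is contained in $F=\setst{w\in M_\infty}{f_S(w)=\top}$, and their union is $F$.

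The point to settle is that $F$ is finite. Every walk of $F$ lies in some $S(\Dc,R_i)$, but finiteness of each $S(\Dc,R_i)$ does not by itself bound their union. Here I would exploit that $\Dc$ and $f_S$ are fixed, so $F\subseteq G=\setst{w\in\walks(\Dc)}{f_S(w)=\top}$. Take $R=\Sigma^*$, where $\Sigma$ is the label set of $\Dc$. Then $\matches(\Dc,R)=\walks(\Dc)$, so $S(\Dc,R)=G$, and Definition~\ref{d:arbitrary-semantics} forces $G$ to be finite. Hence the union of a non-decreasing chain of subsets of the finite set $G$ stabilizes, say to $W=F$. If $R_\infty$ exists, then $S(\Dc,R_\infty)=\setst{w\in M_\infty}{f_S(w)=\top}=F$, which gives continuity.

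\textbf{Order-based case.} Work pair by pair $(s,t)\in V_\Dc^2$; there are finitely many such pairs. Write $M_i^{s,t}$ for $\matches(\Dc,R_i,s,t)$ and $M_\infty^{s,t}$ for the union of the $M_i^{s,t}$. Since $\preceq$ is a wpo on $\walks(\Dc)$, the set $\min_\preceq(M_\infty^{s,t})$ is finite. Moreover, every $w\in M_\infty^{s,t}$ dominates some minimal element: otherwise we could build an infinite strictly decreasing sequence, which a wpo forbids.

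Choose $N$ such that $M_N^{s,t}$ contains all the finitely many elements of $\min_\preceq(M_\infty^{s,t})$. For $i\ge N$, each $m\in\min_\preceq(M_\infty^{s,t})$ lies in $M_i^{s,t}$ and is minimal there, because $M_i^{s,t}\subseteq M_\infty^{s,t}$. Conversely, if $w\in M_i^{s,t}$ is not in $\min_\preceq(M_\infty^{s,t})$, then some $m\prec w$ with $m$ minimal in $M_\infty^{s,t}$ lies in $M_i^{s,t}$, so $w$ is not minimal in $M_i^{s,t}$. Hence $\min_\preceq(M_i^{s,t})=\min_\preceq(M_\infty^{s,t})$ for all $i\ge N$.

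Taking the maximum of these $N$ over all pairs $(s,t)$ gives stabilization to $W=\bigcup_{s,t}\min_\preceq(M_\infty^{s,t})$. When $R_\infty$ exists, this $W$ is exactly $S_\preceq(\Dc,R_\infty)$.

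The main obstacle is the existence of a minimal element below every $w\in M_\infty^{s,t}$. This needs well-foundedness of $\preceq$ restricted to $\walks(\Dc)$, which is part of the wpo property guaranteed by suitability. In the filter case, the corresponding obstacle is finiteness of $G$, handled by the $\Sigma^*$ trick above.
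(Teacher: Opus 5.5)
Your proof is correct. The filter-based half is the paper's argument: a monotone filter applied to a chain inside the finite set $S(\Dc,\Sigma^*)$. The order-based half follows a genuinely different route.

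The paper first proves stabilization as a separate lemma (Lemma~\ref{l:order-based=>stabilizing}), by contradiction. Each time $S_{i+1}\neq S_i$, it extracts a new walk of $S_{i+1}\setminus S_i$ that lies above no earlier result. Infinitely many changes would therefore give a bad sequence, violating the wpo property. It then obtains continuity by showing that, beyond the stabilization index, any witness of non-minimality in $\matches(\Dc,R_\infty)$ already appears in some $\matches(\Dc,R_n)$.

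You instead name the limit directly as $\bigcup_{s,t}\min_\preceq(M^{s,t}_\infty)$. You use two consequences of the wpo property separately: minimal elements form a finite antichain, and well-foundedness puts a minimal element strictly below every non-minimal one. This gives an explicit stabilization index, and hence stabilization and continuity in one step.

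Your version is shorter and describes the limit even when $\bigcup_i L(R_i)$ is not regular. Because it works per endpoint pair, it also never compares walks with different endpoints. The paper's bad-sequence argument does make such comparisons implicitly, which is only legitimate once the order is trimmed (harmless by Lemma~\ref{l:order-irrelevant-different-ep}). The paper's decomposition, for its part, follows the two-tier structure of Definition~\ref{d:continuous}. It also isolates a fact of independent interest: newly appearing results never dominate earlier ones.
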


\begin{proof}[Proof for filter-based semantics]
    Let~$S$ be a semantics based on a filter~$f$, and let~$\Dc$ be a database.
    The function~$F$ that maps every~$W\subseteq \walks(\Dc)$ to $\{w\in W\mid f(w)=\top\}$ is monotonous.
    Also note that~$F(\walks(\Dc))$ is finite, otherwise~$S(\Dc,\Sigma^*)$ would not be.
    It follows that~$S$ is stabilizing and continuous.
\end{proof}

The proof for order-based semantics requires more care.
We first prove that all order-based semantics are stabilizing (Lemma~\ref{l:order-based=>stabilizing}).

\begin{lemma}\label{l:order-based=>stabilizing}
    All order-based semantics are stabilizing.
\end{lemma}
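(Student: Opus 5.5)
Fix a suitable order $\preceq$, a database $\Dc$ and a sequence $(R_i)_{i\in\N}$ with $(L(R_i))_i$ non-decreasing, and write $M_i=\matches(\Dc,R_i)$. Then $(M_i)_i$ is a non-decreasing sequence of subsets of $\walks(\Dc)$. Since $V_\Dc$ is finite and $S_\preceq(\Dc,R_i)=\bigcup_{s,t}\min_\preceq(M_i\cap\matches(\Dc,\Sigma^*,s,t))$, it suffices to fix one pair $(s,t)$ and show that $m_i=\min_\preceq(M_i^{s,t})$ is ultimately constant. Here $M_i^{s,t}$ is the set of walks of $M_i$ with endpoints $(s,t)$. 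The finite union of ultimately constant sequences is ultimately constant.

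\textbf{Step 1: the union has finitely many minimal elements.} Let $M_\infty=\bigcup_i M_i^{s,t}\subseteq\walks(\Dc)$. Because $\preceq$ is a wpo on $\walks(\Dc)$, the set $m_\infty=\min_\preceq(M_\infty)$ is finite. Each element of $m_\infty$ lies in some $M_i^{s,t}$. Since the sequence is non-decreasing, there is an index $N$ such that $m_\infty\subseteq M_i^{s,t}$ for all $i\ge N$.

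\textbf{Step 2: for $i\ge N$, $m_i=m_\infty$.} First take $w\in m_\infty$ and $i\ge N$. Then $w\in M_i^{s,t}\subseteq M_\infty$, and no element of $M_\infty$ is strictly below $w$, so $w\in m_i$. Conversely, take $w\in m_i$. Since $w\in M_\infty$ and $\preceq$ is well-founded on $\walks(\Dc)$ (no infinite strictly decreasing sequence), there exists $w'\in m_\infty$ with $w'\preceq w$. For $i\ge N$ we have $w'\in M_i^{s,t}$, and minimality of $w$ in $M_i^{s,t}$ forces $w'=w$, so $w\in m_\infty$. Hence $S_\preceq(\Dc,R_i)$ stabilizes to $W=\bigcup_{s,t}\min_\preceq(M_\infty^{s,t})$.

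\textbf{Main obstacle.} The key point is the existence of a minimal element below every element of $M_\infty$. This follows from well-foundedness: start from $w$ and descend strictly as long as possible; the descent must terminate because $\preceq$ has no infinite strictly decreasing sequence on $\walks(\Dc)$. One must also make sure the argument uses the wpo property only on $\walks(\Dc)$ and not on all of $\walks$, which holds because every walk involved lies in $\Dc$.

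This description of $W$ also sets up continuity. If $L(R_\infty)=\bigcup_i L(R_i)$, then $\matches(\Dc,R_\infty)=\bigcup_i M_i$, so $W=S_\preceq(\Dc,R_\infty)$.
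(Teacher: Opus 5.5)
Your proof is correct, but it takes a different route from the paper's.

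The paper argues by contradiction and works only with the finite stages $S_i$. It shows that whenever $S_{i+1}\neq S_i$, some new walk $w\in S_{i+1}\setminus S_i$ appears. It then shows that such a walk is not above any walk of $\bigcup_{j\le i}S_j$. If the sequence changed infinitely often, these new walks would form an infinite bad sequence, contradicting the wpo property on $\walks(\Dc)$.

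You instead identify the limit directly as $\bigcup_{s,t}\min_\preceq(M_\infty^{s,t})$. You use that this is a finite antichain which is entirely present from some stage $N$ on. Well-foundedness then shows that nothing else can be minimal afterwards.

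Your version has two advantages:
\begin{itemize}
\item The explicit description of $W$ makes continuity (Proposition~\ref{p:order-based=>continuous}) an immediate corollary, since $\matches(\Dc,R_\infty)=\bigcup_i M_i$. The paper proves continuity by a separate equivalence argument.
\item You handle endpoints explicitly, pair by pair. The paper silently treats $S_i$ as $\min_\preceq(M_i)$ globally, which is only justified after passing to the trimmed order (Lemma~\ref{l:order-irrelevant-different-ep}).
\end{itemize}

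The paper's version, for its part, never has to reason about the infinite union $\bigcup_i M_i$, and it uses the wpo property in a single bad-sequence contradiction. Both proofs rely on well-foundedness once: the paper in its first claim, to find a walk of $S_{i+1}$ strictly below a discarded one, and you in Step~2.
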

\begin{proof}
  Consider an order-based semantics~$S_\preceq$, a sequence $\big(R_i\big)_{i\in\N}$ as in the statement of Definition \ref{d:continuous} and a database $\Dc$, and let $M_i = \matches(\Dc, R_i)$ and $S_i = S_\preceq(\Dc, R_i)$.
    Since~$(M_i)_{i\in\N}$ is non-decreasing and by definition of the $S_i$, we have
    \begin{equation}\label{eq:MiSi-incl}
    \begin{array}{ccccccccc}
    M_0 & \subseteq & M_1 & \subseteq & \cdots & \subseteq &  M_i & \subseteq & \cdots 
    \\
    \rotatebox[origin=c]{90}{$\subseteq$}&&
    \rotatebox[origin=c]{90}{$\subseteq$}&&&&
    \rotatebox[origin=c]{90}{$\subseteq$}&
    \\
    S_0 && S_1 &&&& S_i && 
    \end{array}
    \end{equation}
    This implies in particular that 
    $\displaystyle\bigcup_{j=0}^i S_j \subseteq \bigcup_{j=0}^i M_j \subseteq M_i$, for every $i\in\N$·
 
    First, we claim that for every~$i$ such that~$S_{i+1} \neq S_{i}$, there exists a walk~$w\in (S_{i+1}\setminus S_i)$.
    Since~$S_{i+1} \neq S_{i}$, we have $(S_{i+1}\setminus S_{i}) \neq \emptyset$ or $(S_{i}\setminus S_{i+1}) \neq \emptyset$.
    In the first case, the claim immediately holds.
    In the second case, let~$w'$ be some walk in~$S_{i}\setminus S_{i+1}$.
    Since~$w' \in S_i \subseteq M_i \subseteq M_{i+1}$ but~$w'\notin S_{i+1}$, there must be a walk~$w$ in~$S_{i+1}$ such that~$w\prec w'$.
    This walk cannot belong to~$S_i$, otherwise $w'$ would not either.
    This concludes the proof of the claim.

    Second, we claim that for every integer~$i$ and walk~$w$ such that~$w\in (S_{i+1} \setminus S_i)$, there is no~$w'\in \bigcup_{j=0}^{i} S_j$ such that~$w' \preceq w$.
    By definition, since~$w\in S_{i+1}$, then~$w$ is minimal in~$M_{i+1}$.
    It follows from~\eqref{eq:MiSi-incl} that~$w$ is incomparable to or smaller than every element in~$M_j$ (resp.~in~$S_j$), for every $j\leq i$ ($\ast$).
    It follows that the walk~$w$ cannot be in~$M_i$, otherwise it would also belong to~$S_i$, a contradiction to its definition.
    We then deduce from~\eqref{eq:MiSi-incl} that for all $j \leq i$, $w \notin S_j$ ($\ast\ast$).
    This concludes the proof of the claim: any~$w'$ as in the claim
    would either be equal to~$w$, a contradiction to ($\ast\ast$), or distinct from~$w$, a contradiction to ($\ast$).

    Finally, if $(S_i)_{i\in\N}$ were not eventually constant, we could apply the two claims above on indices~$i$ such that~$S_{i+1}\neq S_{i}$.
    This yields a sequence of walks~$(w_k)_{k\in\N}$ such that for each~$k<k'$,
    it does \textbf{not} hold that $w_k\preceq w_{k'}$,
    a contradiction to the fact that $\preceq$ is a well-partial-order on $\walks(\Dc)$.
\end{proof}

\begin{proof}[Proof of Proposition~\ref{p:order-based=>continuous} 
for order-based semantics]
Let~$S_\preceq$ be an order-based semantics, $\Dc$ a database, $\big(R_i\big)_{i\in\N}$ a sequence of regular expressions such that $\big(L(R_i)\big)_{i\in\N}$ is non-decreasing,
and~$R_\infty$ and expression such that~$L(R_\infty)=\bigcup_{i\in\N}L(R_i)$.
Since~$S_{\preceq}$ is stabilizing, there is a set~$W$ and a bound~$b$ such that for every~$n>b$,~$S(\Dc,R_n)=W$.
We have to show that ~$S(\Dc,R_{\infty}) = W$, which follows from the fact that the following two statements are equivalent for every walk~$w$:
\begin{itemize}
    \item $w\in\matches(\Dc,R_\infty)$ and there exists~$w'\in\matches(\Dc, R_\infty)$ such that $w' \prec w$;
    \item there exists an index~$n>b$ and a walk $w'$ such that $w\in\matches(\Dc,R_n)$, $w'\in\matches(\Dc, R_n)$ and~$w'\prec w$. 
\end{itemize}
This equivalence follows from the definitions of the $R_i$'s and $R_\infty$.
\end{proof}

We expect any ``reasonable'' (walk-based) semantics to be continuous.
Indeed, a non-continuous walk-based semantics must treat finite sets and infinite sets differently.
To give the reader a flavor of what non-continuous semantics might look like, we give below two toy examples: a non-stabilizing semantics and a stabilizing but non-continuous semantics.

\begin{example}[Giving-Up Semantics]\label{ex:giving-up-semantics}
    \semindex{\givingupsem}{Giving-up semantics}
    Most semantics presented in this document are designed to return at least one walk between each pair of vertices for which there is a match, even in cases where there are actually infinitely many such matches.
In contrast, let $\givingupsem(\Dc,R)$ be the semantics which returns the whole set $\matches(\Dc, R)$ if this set is finite, and the empty set of walks otherwise.
    This implies that $\givingupsem(\Dc,R)$ is always a finite set.
    
    Given expressions $\big(R_i = a^{\leq i}\big)_{i\in\N}$ (each matching all walks labeled by some word $a^j$ with $j \leq i$), if $\Dc$ contains an $a$-labeled loop $e$ from some vertex $v$ to itself, then each $\givingupsem(\Dc,R_i)$ contains the new walk $(ve)^iv$. Hence \givingupsem{} is clearly not stabilizing (and thus not continuous: note that in the above case $\givingupsem(\Dc, a^*) = \emptyset$).
        
\givingupsem{} seems too bizarre to be interesting. But one may define variants which still have interesting properties.
    For instance, consider the semantics  $\givingupsem'$, defined as follows.
    For a given set of elements $X$ and two vertices $s$ and $t$, let us denote by $\matches_{X}(\Dc, R, s, t)$ the set of matches $w$ for $R$ in $\Dc$ with $\ep(w) = (s, t)$ and $\elemset(w) \subseteq X$.
    Let~$\mathbb{T}$ be the set of all triples~$(X,s,t)$ such that $\matches_{X}(\Dc, R, s, t)$ is finite. In other words, the walks whose set of elements support infinitely many other matches for $R$ are not returned as results.
    Then, 
    \begin{equation}\label{eq:gu'}
        \givingupsem'(\Dc, R) = \bigcup_{(X,s,t)\in \mathbb{T}} \matches_{X}(\Dc, R, s, t)\,.
    \end{equation}
    Since $\mathbf{T}$ is finite, this semantics is finite by definition. It can be shown that it is in fact also walk-based and restrictible.
\end{example}

Semantics that are stabilizing but non-continuous are even stranger. 
The typical example is a semantics that changes behaviour in the presence of an infinite set.

\begin{example}[Stabilizing non-continuous semantics]
\label{ex:weird-semantics}
Let $S$ be the semantics defined by:
\begin{equation}
    S(\Dc,R) = 
    \begin{cases}
        \shortestwalksem(\Dc,R) &\text{if }\matches(\Dc, R)\text{ is finite},\\
        \trailsem(\Dc,R) &\text{if }\matches(\Dc, R)\text{ is infinite}.
    \end{cases}
\end{equation}
It is stabilizing because, for every sequence of regular expressions~$\big(R_i\big)_{i\in\N}$ such that $\big(L(R_i)\big)_{i\in\N}$ is non-decreasing,
the sequence~$(\matches(\Dc, R_i))_{i\in\N}$ either contains
only finite sets, or there is a bound~$b$ such that $\matches(\Dc, R_i)$ is finite only if~$i\leq b$.
In both cases, $(S(\Dc, R_i))_{i\in\N}$ stabilizes because both \shortestwalksem{} and \trailsem{} do.
It is not continuous, as can be seen by considering the sequence~$(a^{\leq i})_{i\in\N}$ and the limit expression~$a^*$.

\end{example}

Other examples of stabilizing non-continuous semantics can be built as variants of other semantics in which we add a condition about the infiniteness of the set of matches.
For instance, given any order-based semantics~$S_\preceq$,
the semantics~$S'_\preceq$ defined below returns only those walks which are smaller than infinitely many walks with respect to $\preceq$:
\begin{multline*}
S'_\preceq(\Dc,R) =
\{ w \in S_{\preceq}(\Dc,R) \mid \exists\text{ infinitely many }w'\mathbin\in\matches(\Dc,R)\text{ such that }w\mathbin\preceq w' \}
\end{multline*}
It is stabilizing because $S_{\preceq}$ is, and the usual sequence of expressions $(a^{\leq i})_{i\in\N}$ with limit expression $a^*$ shows that it is not continuous.

\subsection{Compatibility with rational operators}
\label{s:rational operator compatibility}

In this section we study interactions between the structure of an RPQ as a regular expression and the set of results. 
We define in particular the notions of composability and decomposability of a semantics with respect to a given RPQ operator.
The intuition behind composability, stated in the case of concatenation, is as follows.
A semantics is $\cdot$-decomposable if any result for~$R\cdot R'$, that is a walk~$w$ in $S(D,R \cdot R')$, can be \emph{decomposed} into a result for~$R$ and a result for~$R'$. This means that there exist~$w' \in S(D,R)$ and~$w'' \in S(D,R')$ such that~$w = w' \cdot w''$.
Conversely, a semantics $S$ is $\cdot$-composable if the results  for~$R$ and~$R'$ can be \emph{composed} freely into (some of the) results for $R\cdot R'$.
These notions are defined in more detail below.

\begin{definition}
    \index{Composable!w.r.t.~rational operators}
    \label{d:composability}
    \index{Decomposable!w.r.t.~rational operators}
    \label{d:decomposability}
    \index{Compatible!with rational operators}
    We say that~$S$ is~\defemph{$\cdot$-composable} (resp.~\defemph{$\cdot$-decomposable}) if Equation~\eqref{eq:.-composable} (resp.~Equation~\eqref{eq:.-decomposable}), below, holds for every database $\Dc$ and RPQs~$R,R'$.
    \begin{align}
    \label{eq:.-composable}
    S(\Dc,R \cdot R') \supseteq{}& S(\Dc,R)\cdot S(\Dc,R')
     \\
    \label{eq:.-decomposable}
      S(\Dc,R \cdot R') \subseteq{}& S(\Dc,R)\cdot S(\Dc,R')
    \end{align}
If both~\eqref{eq:.-composable} and~\eqref{eq:.-decomposable} hold, we say that~$S$ is \defemph{compatible with~$\cdot$}, or \defemph{$\cdot$-compatible} for short.
Similarly, we define \defemph{$+$-composability}, \defemph{$+$-decomposability}, \defemph{$+$-compatibility} from \eqref{eq:+-composable}, below.
\begin{equation}
    \label{eq:+-composable}
    S(\Dc,R + R') \supseteq{} S(\Dc,R)\cup S(\Dc,R') 
    \quad\text{and/or}\quad
S(\Dc,R + R') \subseteq{} S(\Dc,R)\cup S(\Dc,R') 
\end{equation}
and \defemph{$\ast$-composability}, \defemph{$\ast$-decomposability}, \defemph{$\ast$-compatibility} from \eqref{eq:*-composable} below.
\begin{equation}
    \label{eq:*-composable}
    S(\Dc,R^*) \supseteq \left(S(\Dc,R)\right)^* 
 \quad\text{and/or}\quad
S(\Dc,R^*) \subseteq \left(S(\Dc,R)\right)^* 
\end{equation}
Finally, we say that a semantics is \defemph{decomposable} if it is decomposable w.r.t.~$\cdot$, $+$ and~$*$.
\end{definition}

\begin{definition}\label{d:atom-compatibility}\index{Compatible!with atoms}
We say that a semantics~$S$ is \defemph{compatible with atoms}, or $a$-compatible, if~$S(\Dc,a) = \matches(\Dc,a)$, for every $a\in\labels$ and database~$\Dc$.
\end{definition}

We give below several statements regarding compatibility with rational operators.
First, Lemma \ref{l:compatibilites-plus} states that all filter-based and order-based semantics are $+$-decomposable, and that $+$-composability is essentially equivalent to being filter-based.

\begin{lemma}[Compatibility with $+$]\label{l:compatibilites-plus}
    \begin{subthm}
        \item \label{l:filter-based=>+compatible} 
        Every filter-based semantics is $+$-compatible.
        
        \item \label{l:order-based=>+incompatible} 
        Every order-based semantics is $+$-decomposable but not~$+$-composable.

        \item \label{l:+composable=>filter-based}
        For every $+$-compatible semantics~$S$, there exists a filter-based semantics~$S'$ such that for every database~$\Dc$, regular expression~$R$
        and walk~$w$ \textbf{of positive length}: $w\in S(\Dc,R)$ iff $w\in S'(\Dc,R)$.
    \end{subthm}
\end{lemma}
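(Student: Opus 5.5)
The three items are proved separately. The first two follow directly from the definitions. The third needs a characteristic-database construction, and it only goes through if $S$ is walk-based; that is the point where the proof depends on the paper's standing scope.

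For item~\ref{l:filter-based=>+compatible}, observe that $L(R+R')=L(R)\cup L(R')$, hence $\matches(\Dc,R+R')=\matches(\Dc,R)\cup\matches(\Dc,R')$. Applying the filter $f_S$ distributes over this union, which gives both inclusions of \eqref{eq:+-composable}.

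For item~\ref{l:order-based=>+incompatible}, decomposability goes as follows.
\begin{itemize}
\item Take $w\in S_\preceq(\Dc,R+R')$ with $\ep(w)=(s,t)$. Then $w$ is minimal in $\matches(\Dc,R+R',s,t)$, and it belongs to, say, $\matches(\Dc,R,s,t)$.
\item This is a subset of $\matches(\Dc,R+R',s,t)$, so $w$ is still minimal there, and $w\in S_\preceq(\Dc,R)$.
\end{itemize}
For non-composability, I use the fact that a suitable order cannot be trivial on databases with cycles.
\begin{itemize}
\item Let $\Dc$ have a single vertex $v$ and an $a$-labeled loop $e$. The walks $w_n=(ve)^nv$ all share the endpoints $(v,v)$.
\item Since $\preceq$ is a wpo on $\walks(\Dc)$, there are $i<j$ with $w_i\preceq w_j$, hence $w_i\prec w_j$ because the walks are distinct.
\item With $R=a^i$ and $R'=a^j$, we get $S_\preceq(\Dc,R)=\{w_i\}$ and $S_\preceq(\Dc,R')=\{w_j\}$, since each is the only match.
\item But $w_j\notin S_\preceq(\Dc,R+R')$, which contradicts $+$-composability.
\end{itemize}

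For item~\ref{l:+composable=>filter-based}, I need $S$ to be walk-based, $S(\Dc,R)=r_S(\matches(\Dc,R))$, in line with the scope fixed after Definition~\ref{d:walk-based}. If the statement is intended for arbitrary $S$, this is exactly where extra hypotheses would be needed. The construction is as follows.
\begin{itemize}
\item Define $f(w)=\top$ iff $w$ is self-consistent of positive length and $w\in S(\CharD{w},\CharR{w})$. Since $\matches(\CharD{w},\CharR{w})=\{w\}$, this means $f(w)=\top$ iff $w\in r_S(\{w\})$.
\item Let $S'$ be the filter-based semantics defined by $f$. Fix $\Dc$, $R$ and $w\in\matches(\Dc,R)$ of positive length, and set $M=\matches(\Dc,R)$.
\item Let $\Dc'$ be $\Dc$ relabeled so that every edge is labeled by its own identifier, as in Definition~\ref{d:characteristic}. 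In $\Dc'$, each walk is determined by its label.
\item The set $M\setminus\{w\}$ is then matched exactly by some regular expression $R''$. Its language is the set of $\Dc'$-labels of walks of $M$ other than $w$. This language is regular: it is the image of $L(R)$ intersected with the walk language of $\Dc$ under the relabeling, with the single word $\CharR{w}$ removed.
\item Hence $M=\matches(\Dc',\CharR{w}+R'')$, with $\matches(\Dc',\CharR{w})=\{w\}$ and $\matches(\Dc',R'')=M\setminus\{w\}$.
\end{itemize}
Applying $+$-compatibility in $\Dc'$ gives
\[
r_S(M)=r_S(\{w\})\cup r_S(M\setminus\{w\}).
\]
Since $r_S(M\setminus\{w\})\subseteq M\setminus\{w\}$ by Definition~\ref{d:arbitrary-semantics}, we get $w\in S(\Dc,R)=r_S(M)$ iff $w\in r_S(\{w\})$, i.e.\ iff $f(w)=\top$.

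The main obstacle is the step that isolates $w$ as a singleton match set inside the same match set $M$. This needs both the relabeling to edge identifiers and the check that the remaining matches form a regular language. Walk-basedness is exactly what lets me move from $\Dc$ to $\Dc'$ without changing the output. Walks of length~$0$ are excluded because no expression matches a single trivial walk in isolation: $\varepsilon$ matches every trivial walk.
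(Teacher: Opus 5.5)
Your proof is correct and follows the paper's argument. Items 1 and 2 are the same as the paper's. For item 3 you use the same filter ($f(w)=\top$ iff $w\in S(\CharD{w},\CharR{w})=r_S(\{w\})$) and the same reliance on walk-basedness, which the paper's proof also invokes explicitly.

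The only difference is how you isolate $w$. You relabel all of $\Dc$ by edge identifiers and then apply a single $+$-split $M=\{w\}\cup(M\setminus\{w\})$. The paper splits twice: first $L(R)=(L(R)\setminus\{u\})\cup\{u\}$ with $u=\lbl_\Dc(w)$, then, by an induction inside $\CharD{w_1,\ldots,w_k}$, among the finitely many walks labeled $u$. The paper's route spares it from showing that $M\setminus\{w\}$ is definable by a regular expression over edge identifiers.

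One small imprecision: trivial walks are not determined by their label in $\Dc'$. This is harmless, because $M$ contains either all trivial walks of $\Dc$ or none of them, so $\matches(\Dc',R'')=M\setminus\{w\}$ still holds.
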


\begin{proof}[Proof of Item~\ref{l:filter-based=>+compatible}]
    This directly follows from definitions.
    Let~$S_\preceq$ be a filter-based semantics defined by filter $f$. 
    For $+$-decomposability, consider any $R$, $R'$ and $\Dc$.
    If some walk $w$ labeled by $u$ is in $S(\Dc, R + R')$, then $f(w)$ holds, and either $u \in L(R)$ or $u \in L(R')$. Therefore either $w \in S(\Dc, R)$ or $w \in S(\Dc, R')$ (or both).\\
    For $+$-composability, consider RPQs $R$ and $R'$ and some database $\Dc$. For any walk $w \in S(\Dc, R)$, i.e.\@ such that $\lbl(w) \in R$ and $f(w)$ holds, $w$ is also, by definition, in $S(\Dc, R+R')$ since $\lbl(w) \in L(R+R')$. This holds similarly for any $w' \in S(\Dc, R')$.
\end{proof}
    
\begin{proof}[Proof of Item~\ref{l:order-based=>+incompatible}]
    Let~$S_\preceq$ be an order-based semantics. For $+$-decomposability, consider any $R$, $R'$ and $\Dc$.
    If some walk $w$ labeled by $u$ is in $S(\Dc, R + R')$, then $w$ is minimal w.r.t. $\preceq$ among walks in $\matches(\Dc, R + R')$. Since both $\matches(\Dc, R)$ and $\matches(\Dc, R')$ are subsets of $\matches(\Dc, R + R')$, $w$ is not strictly greater than any walk in those sets. Furthermore, $u \in L(R)$ or $u \in L(R')$. Therefore, $w \in S(\Dc, R)$ or $w \in S(\Dc, R')$. \\
    To show that $S$ is not $+$-composable, consider a database~$\Dc$ consisting of a vertex~$v$ and a loop~$v\rarrow{e} v$ labeled by some~$a\in\labels$.
    We let~$w$ denote the walk~$v\rarrow{e} v$, hence~$w^i$ is well defined.
    Since~$\preceq$ is a wpo on~$\walks(\Dc)$ then there exist distinct~$i,j$ such that~$w^i \prec w^j$.  Hence~$S(\Dc,a^i+a^j)= \{a^i\} \not \supseteq eq \big(S(\Dc,a^i)\cup S(\Dc,a^j)\big)$.
\end{proof}

\begin{proof}[Proof of Item~\ref{l:+composable=>filter-based}]
    Let us define the filter~$f$ as follows.
    We use the notion and notation of characteristic database~$\CharD{w}$ and regular expression~$\CharR{w}$ (Def.~\ref{d:characteristic}).
    Each walk~$w$ that is \textbf{not} self-consistent is mapped to~$\bot$.
    Any self-consistent walk~$w$ is mapped to~$\top$ if~$S(\CharD{w},\CharR{w})=\{w\}$,
    and to~$\bot$ otherwise.
    Let~$D$ be a database and~$R$ an RPQ. Let us show that a walk~$w\in\matches(\Dc,R)$ of positive length is in~$S(\Dc,R)$ if and only if $f(w)=\top$.

    Let~$w$ be a walk of positive length in $\matches(\Dc,R)$ and we write~$u=\lbl_{\Dc}(w)$. The language~$L(R)-\{u\}$ is regular hence is denoted by a regular expression~$R'$.
    Since~$S$ is walk-based, we have the following.
    \begin{equation}
        w\in S(\Dc,R) \iff w\in S(\Dc,R'+u)
    \end{equation}
    Note in particular that~$w\notin \matches(\Dc,R')$,
    hence~$+$-compatibility of~$S$ yields the following.
    \begin{equation}
        w\in S(\Dc,R'+u) \iff w\in S(\Dc,u)
    \end{equation}
    Let~$w_1,\ldots,w_k$ be all the (finitely many) walks in~$\matches(\Dc,u)$; hence $w_i=w$ for some~$i$.
    All these walks have the same length, and since~$w$ is among them, they have positive length.
    Let us now consider the database~$\Dc''=\CharD{w_1,\ldots,w_k}$, and regular expression~$R''=\sum_{i=0}^{k} \CharR{w_i}$.
    In particular,~$\matches(\Dc'',R'')=\{w_1,\ldots,w_k\}$
    hence~$S(\Dc,u)=S(\Dc'',R'')$, which implies the following.
    \begin{equation}
        w\in S(\Dc,u) \iff w\in S(\Dc'',R'')
    \end{equation}
    Finally, an easy induction using~$+$-compatibility of~$S$ yields the following.
    \begin{equation}\label{l:+composable=>filter-based-III}
        w\in S\left(\CharD{w_1,\ldots,w_k}, \sum_{i=0}^{i} \CharR{w_i}\right) \iff w\in S\left(\CharD{w_1,\ldots,w_k}, \CharR{w}\right) \iff f(w)=\top
    \end{equation}
\end{proof}

\begin{remark}
    The equivalence in the statement of Item~\ref{l:+composable=>filter-based} is not true for walks of length~$0$.
    Indeed, consider a semantics~$S$ defined as follows. We denote by~$\walks_0(\Dc)$
    the walks in~$\walks_0(\Dc)$ of length~$0$.  Then, we define $S(\Dc,R)=\walks_0(\Dc)$
    if~$\varepsilon\in L(R)$ and~$\Dc$ has an even number of vertices; or~$S(\Dc,R)=\emptyset$ otherwise.
    It may be checked that~$S$ is walk-based and $+$-compatible, although it is clearly not filter-based.
\end{remark}

As hinted previously, the two families of filter-based and order-based semantics are disjoint. It
indeed follows from items~1 and~2 of Lemma \ref{l:compatibilites-plus}.

\begin{corollary}
    There is no semantics~$S$ that is both order-based and filter-based.
\end{corollary}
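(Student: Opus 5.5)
The plan is to argue by contradiction using the two earlier items of Lemma~\ref{l:compatibilites-plus}. Suppose some semantics $S$ is simultaneously filter-based and order-based, say $S = S_\preceq$ for a suitable order $\preceq$, and also $S(\Dc,R)=\setst{w\in\matches(\Dc,R)}{f(w)=\top}$ for some filter $f$.

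By Item~\ref{l:filter-based=>+compatible}, $S$ is $+$-compatible, hence in particular $+$-composable. By Item~\ref{l:order-based=>+incompatible}, $S$ is not $+$-composable. These two conclusions contradict each other directly, so no such $S$ exists.

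The only point needing care is that Item~\ref{l:order-based=>+incompatible} really gives non-composability for \emph{every} order-based semantics, with no extra hypothesis. Its witness is the one-loop database $v\rarrow{e}v$ labeled $a$. Since $\preceq$ is a wpo on $\walks(\Dc)$, the infinite sequence of walks $w^1,w^2,\ldots$ must contain indices $i\neq j$ with $w^i\prec w^j$; these walks are pairwise distinct, so the comparison is strict. Then $w^j\in S(\Dc,a^j)$, because it is the unique match of $a^j$, but $w^j\notin S(\Dc,a^i+a^j)$. This is the step I would double-check, and it holds for any suitable order. Once it is confirmed, the corollary follows immediately.
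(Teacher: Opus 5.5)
Your proof is correct and matches the paper's argument: the paper derives the corollary directly from Items~1 and~2 of Lemma~\ref{l:compatibilites-plus}, since filter-based semantics are $+$-composable and order-based ones are not. Your re-check of the one-loop witness is also sound: the wpo property gives $i<j$ with $w^i\prec w^j$, and then $w^j\in S(\Dc,a^j)$ but $w^j\notin S(\Dc,a^i+a^j)$.
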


Next, we state in Lemmas \ref{l:concat-composability} and \ref{l:star-composability} that no order-based or filter-based semantics is $\cdot$-composable or $\ast$-composable, under very weak requirements.

\begin{lemma}[Composability with $\cdot$]\label{l:concat-composability}
    \begin{subthm}
        \item 
        No order-based semantics is $\cdot$-composable.
        \item 
        No identifier-independent, unbounded, filter-based semantics is $\cdot$-composable.
    \end{subthm}
\end{lemma}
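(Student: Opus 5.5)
For Item~1, I would reuse the single-loop database from the proof of Item~\ref{l:order-based=>+incompatible} of Lemma~\ref{l:compatibilites-plus}. Let $\Dc$ have one vertex $v$ and one $a$-labeled loop $e$, and write $w=v\rarrow{e}v$, so $\matches(\Dc,a^i)=\{w^i\}$ for every $i$. For singleton match sets, order-based semantics return everything, so $S_\preceq(\Dc,a^i)=\{w^i\}$. The idea is to find two expressions $R,R'$ and a factorization in which some concatenation of individually returned results is strictly dominated, within $\matches(\Dc,R\cdot R')$, by another match.

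Concretely, since $\preceq$ is a wpo on $\walks(\Dc)$ and the $w^i$ are pairwise distinct, there are $i\neq j$ with $w^i\prec w^j$ (a wpo cannot have an infinite antichain). If $j=k+\ell$ with $k,\ell\geq 0$, then take $R=a^k$ and $R'=a^\ell+a^{\ell'}$ with $\ell'$ chosen so that $k+\ell'=i$. The difficulty is that $i$ may exceed $j$, and $+$ inside $R'$ may remove $w^\ell$ from $S(\Dc,R')$. To avoid this, I would use $R=a^j$ and $R'=\varepsilon+a$, or more simply:
\begin{itemize}
\item if $i<j$, put $R=a^i+a^j$ and $R'=\varepsilon$;
\item if $i>j$, put $R=a^j$ and $R'=\varepsilon+a^{i-j}$.
\end{itemize}
The second case works because $S(\Dc,\varepsilon+a^{i-j})$ contains $v$ (the trivial walk) unless $w^{i-j}\prec v$; that subcase needs a separate swap.

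The cleanest fix is to choose $R=a^j$ and $R'$ with a single match each, so that no minimality interference occurs on the factors. Take $R=a^j$, $R'=\varepsilon$ and $R\cdot R'=a^j$ itself; this gives nothing. I therefore plan the following. Let $R=a^{m}$ and $R'=a^{n}+a^{p}$, chosen so that $S(\Dc,R')\ni w^n$. By the wpo property applied to the sequence $w^0,w^1,\dots$, I would pick a strictly $\preceq$-\emph{minimal} element $w^{q}$ among $\{w^n: n\geq 1\}$; such an element exists because wpos are well-founded. Then pick any $r\ge1$ with $w^{q}\prec w^{r}$ (exists: only finitely many minimal elements, and by the wpo property every $w^r$ sits above some minimal one). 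Set $R=\varepsilon+a^{?}$… The bookkeeping here (ensuring both factors survive while the product is dominated) is the main obstacle. My fallback is a two-loop database with edges labeled $a,b$ and $R=(a+b)^*$-style expressions, using the characteristic database trick (Def.~\ref{d:characteristic}) to force singleton match sets on the factors while the concatenated expression acquires a strictly smaller competitor.

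For Item~2, the filter-based case, let $f$ be the filter. Unboundedness together with identifier-independence and the characteristic-database trick gives arbitrarily long self-consistent walks $w$ with $f(w)=\top$. Take a loop database with one vertex and a loop $e$; finiteness of $S(\Dc,a^*)$ means $f(w^n)=\bot$ for large $n$. Meanwhile $f(w)=\top$ for the single-edge walk, by renaming and using an accepted walk's first edge via $\cdot$-composability reversed. Repeated $\cdot$-composability then gives $w^n\in S(\Dc,a^n)$ for all $n$, which is a contradiction. The work lies in showing that some accepted loop exists, which I would get from an accepted long walk through pigeonhole and renaming.
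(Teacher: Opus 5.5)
\textbf{Item 1.} Your single-loop plan cannot be completed, and the obstacle is not bookkeeping. In a database with only one vertex, all walks have the same endpoints and any two of them concatenate. So under $\shortestwalksem$, a shortest match of $R$ followed by a shortest match of $R'$ is always a shortest match of $R\cdot R'$. In other words, $S(\Dc,R)\cdot S(\Dc,R')\subseteq S(\Dc,R\cdot R')$ holds for \emph{all} $R,R'$ over such a database.

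Hence no expressions over your loop database, nor over a two-loop fallback that keeps a single vertex, can refute $\cdot$-composability for every suitable order. Concretely, the wpo property only guarantees some $i<j$ with $w^i\prec w^j$, and for $\shortestwalksem$ no other kind of pair exists. In that case your first bullet gives $S(\Dc,a^i+a^j)\cdot S(\Dc,\varepsilon)=\{w^i\}=S(\Dc,(a^i+a^j)\cdot\varepsilon)$, which is no violation.

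The missing idea is that $S_\preceq$ minimizes \emph{separately for each pair of endpoints}. The paper uses edges $a\colon v_1\to v_2$, $b\colon v_2\to v_3$, $c\colon v_3\to v_2$ and $d\colon v_2\to v_4$, and lets $w_i$ be the unique match of $a(bc)^id$. It takes $n<N$ with $w_n\prec w_N$, writes $N=n+k+1$, and sets $R_1=a(bc)^n(b+\varepsilon)$ and $R_2=(c(bc)^k+\varepsilon)d$. Each of $R_1,R_2$ has exactly two matches, with distinct targets (resp.\ sources), so both are returned whatever $\preceq$ is. The concatenable pairs produce exactly $w_n$ and $w_N$, which share endpoints. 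Composability would therefore put $w_N$ in $S(\Dc,R_1R_2)$, even though $w_n\prec w_N$ is also a match there.

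\textbf{Item 2.} You derive that the one-edge loop is accepted ``via $\cdot$-composability reversed'', and this step is not valid. Composability only says that concatenations of returned walks are returned. Concluding that a factor of an accepted walk (such as its first edge) is accepted is decomposability, which is not a hypothesis.

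Renaming does not help directly either: a renaming is injective, so it never makes the source and target of an open walk coincide. Pigeonhole also fails. The accepted walk given by unboundedness may have no repeated vertex, and even when it does, this only yields a closed \emph{factor}, which again is not known to be accepted.

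What you actually need is an accepted closed walk of positive length, and the paper builds it using composability itself. Take an accepted $w$ with $\len(w)>0$, and a renaming $\nu$ that swaps $\src(w)$ and $\tgt(w)$ and sends every other identifier of $\Dc$ to a fresh one. Then $w$ and $w'=\nu(w)$ are consistent (they share no edge), $f(w')=\top$ by identifier-independence, and $w\cdot w'$ is a closed walk. In $\CharD{w,w'}$, where $\CharR{w}$ and $\CharR{w'}$ each have a single match, iterating composability gives $(w\cdot w')^i\in S(\CharD{w,w'},(\CharR{w}\CharR{w'})^i)$ for every $i$. So $S(\CharD{w,w'},(\CharR{w}\CharR{w'})^*)$ is infinite, a contradiction. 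Your closing ``iterate and contradict finiteness'' step is exactly this argument, once it is applied to such a closed walk rather than to a single-edge loop.
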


\begin{proof}[Proof of Item 1]
    Consider a semantics $S$ based on some well-partial-order $\preceq$ and the database~$\Dc$ shown below (edge identifiers are not important).

    \begin{center}
    \begin{tikzpicture}[bend angle=30]
        \path node[vertex](v1){} ++(-70:1em) node {$v_1$};
        \path (v1) ++(0:2cm) node[vertex](v2){} 
            ++(-70:1em) node {$v_2$};
        \path (v2) ++(0:2cm) node[vertex](v4){} 
            ++(-70:1em) node {$v_4$};
        \path (v2) ++(90:15mm) node[vertex](v3){} 
            ++(70:1em) node {$v_3$};
        \path[edge] (v1) to 
            node[midway,label] {$a$} (v2);
        \path[edge] (v2) to node[label] {d} (v4);
        \path[edge, bend left] (v2) to node[label] {$b$} (v3);
        \path[edge, bend left] (v3) to node[label] {$c$} (v2);
    \end{tikzpicture}
    \end{center}
    
    Consider the RPQ $R = a (bc)^\ast d$. 
    All matches of $R$ between $v_1$ and $v_4$ are labeled by $a (bc)^i d$ for some $i$, and for every $i$ there is a single such match, which we call $w_i$.
    By definition, $S(\Dc, R)$ is the finite and non-empty subset of $\preceq$-minimal elements of $\{w_i \mid i \geq 0\}$. 
    Therefore, there must exist $n, k \geq 0$ such that $w_n \prec w_{n+k+1}$.
    Indeed, choose $m$  such that $w_{m}$ is the longest walk in $S(\Dc, R)$, hence~$w_{m+1}$ does belong to $\matches(\Dc,R)$ but not to $S(\Dc, R)$.
    Hence $S(\Dc, R)$ must contain some~$w_{n}$ with~$n<m$ such that $w_n\prec w_{n+k}$, and finally fix~$k=m-n$. 

    We now define the queries $R_1 = a (bc)^n (b + \varepsilon)$ and $R_2 = (c (bc)^k + \varepsilon) d$. 
    $R_1$ has exactly two matches, the walk $u$ between $v_1$ and $v_2$ labeled $a (bc)^n$ and the walk $u'$ between $v_1$ and $v_3$ labeled $a (bc)^n c$. 
    Since these two walks have different target,
    they both belong to~$S(\Dc,R_1)$.
    Similarly, $R_2$ has exactly two matches, the walk $v$ between $v_2$ and $v_4$ labeled $d$ and the walk $v'$ between $v_3$ and $v_4$ labeled $c(bc)^k d$.
    Similarly, these two walks belong to~$S(\Dc,R_2)$ because they have a different source.

    If $S$ were $\cdot$-composable, then both $uv = w_n$ and $u'v' = w_{n+k+1}$ would be in $S(\Dc, R_1 \cdot R_2)$, which contradicts the fact that $w_n \prec w_{n+k+1}$.
\end{proof}

\begin{proof}[Proof of Item 2]
     Let~$S$ be an identifier-independent unbounded semantics based on filter function~$f$.
    Let~$\Dc,R$ be such that~$S(\Dc,R)$ contains some walk~$w$ satisfying~${\len(w)>0}$.
    Note that~$w$ is a self-consistent walk such that~$f(w)=\top$.
    Let~$\nu$ be any renaming such that~$\nu(\src(w))=\tgt(w)$,~$\nu(\tgt(w))=\src(w)$ and~$\nu(X)\cap X = \emptyset$, where:
    \begin{equation}
    X=(V_\Dc\cup E_\Dc)-\{\src(w),\tgt(w)\}
    \end{equation}
    The walk~$w' = \nu(w)$ belongs to~$\nu(\Dc)$ and since~$S$ is identifier-independent,~$w' \in S(\nu(\Dc),R)$.
    Note that~$w$ and~$w'$ are consistent since they have no edge in common, and both are self-consistent.
    Also note that~$w \cdot w'$ and~$w' \cdot w$ are both well-defined.
    Hence, the characteristic database~$\CharD{w,w'}$ (Def.~\ref{d:characteristic}) exists and contains the walks~$(w \cdot w')^i$ for every~$i\in\N$.
If~$\CharR{w}$ and~$\CharR{w'}$ are the characteristic expressions of~$w$ and~$w'$ respectively (Def.~\ref{d:characteristic}), then~$S(\CharD{w, w'},\CharR{w}) \ni w$ and ${S(\CharD{w, w'}, \CharR{w'}) \ni w'}$.
    Let us assume for the sake of contradiction that~$S$ is $\cdot$-composable.
    Then, $S(\CharD{w,w'},R_i)\ni((w \cdot w')^i)$ for some expression~$R_i$ consisting of concatenations of~$\CharR{w}$ and~$\CharR{w'}$.
    Hence~$f((w \cdot w')^i) = \top$ for every~$i\in\N$, which implies that~$S(\CharD{w,w'},(\CharR{w}\CharR{w'})^*)$ is infinite, a contradiction.
\end{proof}

\begin{lemma}[Composability with $*$]\label{l:star-composability}
\begin{subthm}
    \item \label{l:star-composability:order-based}
    No order-based semantics is $\ast$-composable.
    \item 
    No identifier-independent, unbounded, filter-based semantics is $\ast$-composable.
\end{subthm}
\end{lemma}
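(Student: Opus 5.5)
Both items follow the pattern of Lemma~\ref{l:concat-composability}: for each one I build a database and a query in which $\ast$-composability forces the semantics to return too many walks.

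\emph{Item 1 (order-based).} Take the database with a single vertex $v$ and an $a$-labelled loop $e$, and write $w=v\rarrow{e}v$, so that $\matches(\Dc,a^*)=\{w^i\mid i\in\N\}$. The walks $w^i$ have the same endpoints, and $\preceq$ is a wpo on $\walks(\Dc)$. Hence, as in the proof of Lemma~\ref{l:concat-composability}, $S_\preceq(\Dc,a^*)$ is finite, and we can find indices $n<m$ with $w^n\prec w^m$ and $w^n\in S_\preceq(\Dc,a^*)$.

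We then choose a query $R$ whose matches all lie in $S_\preceq(\Dc,R)$ and which satisfies $L(R)^*=L(R^*)\supseteq\{a^n,a^m\}$. The simplest choice is $R=a$: its only match is $w$, so $S_\preceq(\Dc,a)=\{w\}$. Then $\ast$-composability gives $S_\preceq(\Dc,a^*)\supseteq\{w\}^*=\{w^i\mid i\in\N\}$. This set is infinite, which contradicts finiteness. The trap to avoid here is the empty-match case: $w^0$ is a trivial walk, but that causes no trouble because infinitely many $w^i$ are already forced into the result. So item~1 appears immediate, with no need for two walks.

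\emph{Item 2 (filter-based).} Here I would reuse the construction of Lemma~\ref{l:concat-composability}, Item~2. Unboundedness gives some $w\in S(\Dc,R)$ with $\len(w)>0$, and identifier-independence gives a renamed copy $w'=\nu(w)$ with endpoints swapped and no other shared identifiers. In $\CharD{w,w'}$, the filter $f$ sends both $w$ and $w'$ to $\top$ (because $f$ does not depend on the database, and $w,w'$ are matched by their characteristic expressions). Therefore $S(\CharD{w,w'},\CharR{w}+\CharR{w'})=\{w,w'\}$, using that filter-based semantics are $+$-compatible (Lemma~\ref{l:compatibilites-plus}).

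Now apply $\ast$-composability to $R'=\CharR{w}+\CharR{w'}$:
\[S(\CharD{w,w'},R'^*)\supseteq\{w,w'\}^*\ni(w\cdot w')^i \quad\text{for every } i\in\N.\]
This makes the result infinite, a contradiction.

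The only delicate point is confirming that $w$ and $w'$ really are matches of $\CharR{w}+\CharR{w'}$ in the characteristic database and survive the filter. That follows directly from the remarks after Definition~\ref{d:characteristic} and from $f(w)=\top$ together with identifier-independence.
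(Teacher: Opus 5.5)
Your proof is correct and follows the paper's argument. Item~1 is exactly the paper's single-loop database with $R=a$. Item~2 uses the same idea: a renamed copy $\nu(w)$ with swapped endpoints, identifier-independence to get $f(\nu(w))=\top$, and $\ast$-composability forcing the infinitely many cycles $(w\cdot\nu(w))^i$ into the result. The only cosmetic difference is the setting: the paper works in $\Dc$ extended by a disjoint swapped copy of itself with the original query $R$, whereas you pass to $\CharD{w,w'}$ with $\CharR{w}+\CharR{w'}$, as in the proof of Lemma~\ref{l:concat-composability}.
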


\begin{proof}[Proof of Item 1]
    Consider the database~$\Dc$ with one vertex~$v$ and one edge~$e$ labeled by~$a$ from~$v$ to~$v$.
    Then, by definition of an order-based semantics,~$S(\Dc,a)=\{v\rarrow{e}v\}$ and by $\ast$-composability,~$S(\Dc,a^\ast)$ is infinite, a contradiction.
\end{proof}

\begin{proof}[Proof of Item 2]
    For the sake of contradiction, let~$S$ be a semantics that satisfies all four properties of the statement.
    By hypothesis, there is a  database~$\Dc$ and an RPQ~$R$
    such that~$w\in S(\Dc,R)$ and~$\len(w)\geq 1$.
    For every edge~$e$ in~$\Dc$, let $\overline{e}$ be a fresh identifier (i.e., in~$\edges\setminus E$), and
    for every vertex~$v$ in~$\Dc$, let $\overline{v}$ be defined as follows.
    \begin{equation}
        \overline{v} = \begin{cases}
            s & \text{if $v \mathbin= t$}\\
            t & \text{if $v \mathbin= s$}\\
            \text{a fresh identifier in $\vertices\setminus V$ otherwise}\span\omit 
        \end{cases}
    \end{equation}
    We then define the database~$\Dc'=(L_{\Dc'},V_{\Dc'}, E_{\Dc'},\src_{\Dc'},\tgt_{\Dc'},\lbl_{\Dc'})$ as follows:
    \begin{align*}
        L_{\Dc'} ={}& L_\Dc & \src_{\Dc'} \colon e\mapsto{}& \begin{cases}
            \src_{\Dc}(e)&\text{if $e\in E_\Dc$} \\
            \overline{\src_{\Dc}(e)}&\text{if $e\in \overline{E_\Dc}$} \\
        \end{cases} \\
        V_{\Dc'} ={}& {V_\Dc} \cup \overline{V_\Dc} &\omit\span\tgt_{\Dc'}\text{ is defined similarly}\\
        E_{\Dc'} ={}& {E_\Dc} \cup \overline{E_\Dc} &         \lbl_{\Dc'} \colon e\mapsto{}& \begin{cases}
            \lbl_{\Dc}(e)&\text{if $e\in E_\Dc$} \\
            \lbl_{\Dc}(f)&\text{if $e=\overline{f}$ and $f\in E_\Dc$} \\
        \end{cases} 
    \end{align*}
    We lift~$\overline{~\rule{0pt}{1ex}~}$ to walks.
    Note that the database~$\Dc'$ contains the walk~$\overline{w}$ which: starts in~$t$, ends in~$s$, has the same label as~$w$, and is equal to~$w$ up to renaming.
    Therefore $w \cdot \overline{w}$ is a cycle around $s$ with label $uu$ in~$\Dc'$.
    Since~$w$ and~$\overline{w}$ are equal up to renaming
    and~$S$ is identifier-independent, $f(\overline{w})=f(w)=\top$. Hence~$\{w,\overline{w}\}\subseteq S(\Dc',R)$.
    Since~$S$ is $\ast$-composable and $w \cdot \overline{w}$ is a cycle,
    the set~$\{w,\overline{w}\}^*$ is infinite and is a subset of~$S(\Dc',R^*)$, which is a contradiction.
\end{proof}

We conjecture that no ``reasonable'' semantics is composable w.r.t.\@~$\cdot$ nor~$\ast$.
However, it is not easy to define the minimal set of ``reasonable'' properties that implies~$\ast$-non-composability or $\cdot$-non-composability.

To conclude this section, Proposition~\ref{p:compatibilites-specifics} below summarizes properties regarding the compatibilities of some of the semantics defined in the previous sections.

\begin{proposition}\label{p:compatibilites-specifics}
Among the different composability and decomposability properties, only the following hold for the mentioned semantics:
\begin{subthm}
    \item \label{p:acyclic-compatibilites} 
    \acyclicsem{} is only $+$-compatible and $(\cdot, \ast)$-decomposable.
    \item \label{p:trail-compatibilites} \label{p:simple-compatibilites}
    \trailsem{} and \simplesem{} are only $(a,+)$-compatible and $(\cdot,\ast)$-decomposable.
    \item \shortestwalksem, \subwalkminsem, \minmultsem{} are only $a$-compatible and $(\cdot,+,\ast)$-decomposable.
    \item \label{p:binding-trail-compatibilites}
    \bindingtrailsem{} is only $(a,\cdot,+)$-compatible and~$\ast$-decomposable.
\end{subthm}
\end{proposition}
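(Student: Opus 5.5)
The plan is to check, for each listed semantics, the seven properties: $a$-compatibility, and composability and decomposability with respect to each of $\cdot$, $+$ and $\ast$. Wherever possible the general results already proved will settle a case. $\acyclicsem$, $\trailsem$ and $\simplesem$ are identifier-independent, unbounded and filter-based. Hence they are $+$-compatible by Lemma~\ref{l:compatibilites-plus}, and neither $\cdot$-composable nor $\ast$-composable by Lemmas~\ref{l:concat-composability} and~\ref{l:star-composability}. $\shortestwalksem$, $\subwalkminsem$ and $\minmultsem$ are order-based. Hence they are $+$-decomposable but not $+$-composable, and neither $\cdot$- nor $\ast$-composable, by the same three lemmas. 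What remains are $a$-compatibility and $\cdot$/$\ast$-decomposability for these six semantics, and all properties of $\bindingtrailsem$, which is not walk-based.

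\textbf{Atoms.} For $\acyclicsem$, the one-vertex database with an $a$-labeled loop has the loop as a match of $a$, but it repeats a vertex, so $a$-compatibility fails. For $\trailsem$, a single edge is always a trail. For $\simplesem$, a single edge is either acyclic or a simple cycle. In both cases $S(\Dc,a)=\matches(\Dc,a)$. For the order-based semantics, all matches of $a$ have length $1$. Two distinct such walks with the same endpoints are parallel edges, and I will check they are incomparable in each order:
\begin{itemize}
\item under $\preceq_{\shortestwalksem}$, because their lengths are equal;
\item under $\subwalkord$, because no nontrivial factor can be inserted between walks of equal length;
\item under the order of $\minmultsem$, because their bags are distinct singletons plus endpoints, so neither is strictly included in the other.
\end{itemize}

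\textbf{Decomposability.} Take $w\in S(\Dc,R\cdot R')$. Cut it according to its label into $w=w_1\cdot w_2$ with $w_1\in\matches(\Dc,R)$ and $w_2\in\matches(\Dc,R')$. Similarly, a match of $R^*$ is cut into factors, each matching $R$.
\begin{itemize}
\item For the filter-based semantics, every factor of an acyclic walk (resp.\ a trail) is acyclic (resp.\ a trail), so the factors pass the filter. For $\simplesem$, a factor of a simple cycle is either acyclic or the whole cycle; in the latter case the other factor is trivial, and a trivial walk is acyclic.
\item For the order-based semantics, suppose $w_1$ is not minimal, so some $w_1'\prec w_1$ exists with the same endpoints. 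The key claim is that each of the three orders is compatible with concatenation, i.e.\ $w_1'\cdot w_2\prec w_1\cdot w_2$. This holds because length is additive, factor insertion commutes with concatenation, and bags add. This contradicts the minimality of $w$. The same argument applies factor by factor for $\ast$.
\end{itemize}

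\textbf{$\bindingtrailsem$.} I will reason on the atom positions of the expression. The atoms of $R\cdot R'$ and $R+R'$ are the disjoint union of those of $R$ and $R'$. So a binding of the compound expression restricts to bindings of its parts, and conversely bindings of the parts combine. The binding-trail condition only involves pairs of positions carrying the same atom, which never straddle the two parts. This gives $(\cdot,+)$-compatibility. $a$-compatibility holds since each match of $a$ has a single edge, so the condition is vacuous. For $\ast$, decomposition again restricts to each iteration. Composability fails on the loop database: $\bindingtrailsem(\Dc,a)$ contains the loop, so its star is infinite, while $\bindingtrailsem(\Dc,a^*)$ is finite.

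I expect the main obstacle to be the $\bindingtrailsem$ case, whose definition is only sketched here; I would make the correspondence between atom positions of $R\cdot R'$ and those of $R,R'$ precise first. The second delicate point is verifying strict monotonicity of each order under concatenation, especially for $\subwalkord$, which is a transitive closure.
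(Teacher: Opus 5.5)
The paper states this proposition without proof. Your route is the evident intended one and it is correct: you take the $+$-results and every non-composability claim from Lemmas~\ref{l:compatibilites-plus}, \ref{l:concat-composability} and~\ref{l:star-composability}, check atoms and $\cdot$/$\ast$-decomposability directly (closure of the three filters under factors, and strict compatibility of the three orders with concatenation on either side), and handle $\bindingtrailsem$ through the disjointness of atom positions. The one point worth making explicit is that trivial matches of $R^\ast$ (label $\varepsilon$) must count in $(S(\Dc,R))^\ast$ as zero-factor concatenations, a convention the paper's definition of $W^\ast$ leaves implicit but on which every $\ast$-decomposability claim here depends.
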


Recall that~$\bindingtrailsem$ is not walk-based (Ex.\,\ref{ex:binding-trail}), hence although it is $\cdot$-composable (item \ref{p:binding-trail-compatibilites}), it is not a counterexample to our conjecture that any ``reasonable'' walk-based semantics cannot be $\cdot$-composable.

\begin{figure}[t]\begin{tikzpicture}
    \renewcommand*{\arraystretch}{0}
    \newcommand{\semname}[1]{#1}
    \newcommand{\hlinit}[1]{\rule[-.2em]{0pt}{1em}\textcolor{bleu}{\textbf{#1}}\small}
    \newlength{\vmh}\setlength{\vmh}{30mm}
    \newlength{\vmv}\setlength{\vmv}{24mm}
    \tikzstyle{sem}=[
         outer sep=2mm, 
         rectangle, 
         rounded corners, 
draw, 
        inner sep=.5ex
    ]
    \tikzstyle{oursem}=[
        sem,    
    ]
    \tikzstyle{incedge}=[draw,
                        ]
    \tikzstyle{classlabel}=[inner sep=2ex]
                    
    \node[sem] (acyclic) at (0,0) {\semname{\hlinit{Ac}yclic}} ;
    \path (acyclic) 
    ++(90:1*\vmv) node[sem] (simple) {\begin{tabular}{@{}l@{}}\hlinit{S}imple\\\hlinit{W}alk or\\\hlinit{C}ycle\end{tabular}}
    ++(90:1*\vmv) node[sem] (trail) {\semname{\hlinit{Tr}ail}};
    
    \path (acyclic) 
    ++(90:.5*\vmv)
    ++(0:1\vmh) 
    node[oursem] (shortest-minimal-set) {\begin{tabular}{@{}l@{}}\hlinit{Sh}ortest\\\hlinit{M}inimal\\\hlinit{S}et\end{tabular}}
    ++(90:\vmv)
    node[oursem] (minimal-multiset) {\begin{tabular}{@{}l@{}}\hlinit{M}inimal\\\hlinit{M}ultiset\end{tabular}}
    ++(90:\vmv) node[oursem] (subwalk-minimal){\begin{tabular}{@{}l@{}}\hlinit{S}ubwalk\\\hlinit{M}inimal\end{tabular}}
    ++(90:\vmv) node[oursem] (binding-trail){\begin{tabular}{@{}l@{}}\hlinit{B}inding\\\hlinit{T}rail\end{tabular}} ;

    \path (acyclic) 
    ++(0:2*\vmh) node[sem] (shortest-walk){\begin{tabular}{@{}l@{}}\hlinit{Sh}ortest walk\end{tabular}}
    ++(1*\vmh,.5*\vmv) node[oursem] (vertex-covering-shortest) {\begin{tabular}{@{}l@{}}\hlinit{Sh}ortest\\\hlinit{V}ertex\\\hlinit{C}over\end{tabular}}
    ++(90:\vmv) node[oursem] (edge-covering-shortest) {\begin{tabular}{@{}l@{}}\hlinit{Sh}ortest\\\hlinit{E}dge\\\hlinit{C}over\end{tabular}};
    
    \path (shortest-walk) ++(90:3*\vmv) node[sem] (atom-covering-shortest) {\begin{tabular}{@{}l@{}}\hlinit{Sh}ortest\\\hlinit{A}tom\\\hlinit{C}over\end{tabular}};

    \newcommand{\inclusion}[3][$\subseteq$]{\path[incedge] (#2) to node[sloped,fill=white] {#1} (#3);}
    
\inclusion{acyclic}{simple}
    \inclusion{simple}{trail}
    \inclusion{trail}{binding-trail}

    \inclusion{shortest-minimal-set}{minimal-multiset}
    \inclusion{minimal-multiset}{subwalk-minimal}
    \inclusion{subwalk-minimal}{binding-trail}
    
    \inclusion{shortest-walk}{vertex-covering-shortest}
    \inclusion{vertex-covering-shortest}{edge-covering-shortest}
    \inclusion[$\supseteq$]{shortest-walk}{shortest-minimal-set}
    \inclusion{acyclic}{shortest-minimal-set}
    
    \inclusion{shortest-walk}{atom-covering-shortest}
    \inclusion[$\supseteq$]{atom-covering-shortest}{binding-trail}

\draw[color=violet, dashed,rounded corners] 
        let  \p1 = (atom-covering-shortest.south),
             \p2 = (subwalk-minimal.north),
             \p3 = (current bounding box.east),
             \p4 = (current bounding box.west) in
        (\x4,\y1/2+\y2/2) -- (\x3+5mm,\y1/2+\y2/2)
        ++(0:-5mm) coordinate (qd) ;
    \path[color=violet] (qd) node[anchor=west]  {\begin{tabular}{r@{}l}$\uparrow$~&Not-walk-based\\[4mm]$\downarrow$&Walk-based\end{tabular}};

\end{tikzpicture}     \caption{Inclusions of main semantics}
    \label{fig:inclusion}
\end{figure}

\subsection{Inclusion of semantics}
\label{s:inclusions}

\begin{definition}
    Given two semantics~$S_1,S_2$, we say that~$S_1$ \defemph{is included} in~$S_2$, and denote by~$S_1\subseteq S_2$,
    if it holds~$S_1(\Dc,R)\subseteq S_2(\Dc,R)$ for every database~$\Dc$ and query~$R$.
    We say that~$S_1$ and~$S_2$ are \defemph{incomparable} if neither~$S_1\subseteq S_2$ nor~$S_1\supseteq S_2$ hold.
\end{definition}

Figure~\ref{fig:inclusion} shows the inclusions of the main semantics defined in this document.
All inclusions are strict and the absence of a path means that semantics are incomparable.
Due to the sheer number of them,
we do not include proofs of these claims.
All of them are easy, and some may be found in~\cite{DavidFrancisMarsault2023},~\cite{Marsault2024} and~\cite{Khichane2024}.

\begin{remark}
It is not easy to interpret the fact that two semantics~$S_1,S_2$ satisfy~${S_1\subseteq S_2}$.
It is not always clear whether the results in the difference are mostly meaningful or redundant.
Moreover, there are cases where the difference seems very small
(e.g., \minmultsem{} and \subwalkminsem) but the complexity of query evaluation is substantially different (see
Section~\ref{sec:computational-problems}).
\end{remark}

\subsection{Coverage}
\label{s:coverage}

Recall that a semantics~$S$ returns a finite subset of results ($S(\Dc,R)$) among the possibly infinite set of matches ($\matches(\Dc,R)$).
Some information is bound to be lost in the process.
Then, an interesting question is to characterize the \emph{coverage} of a given semantics $S$, by which we mean \emph{how well $S(\Dc,R)$ represents $\matches(\Dc,R)$}.
In this section, we present a few possible definitions in an attempt to capture this intuition.
On the one hand, coverage can be expressed in terms of vertices (resp.~edges) of the database: all vertices (resp.~edges) in some match must also appear in some result.
On the other hand, coverage may be expressed in terms of subwalks: for every walk $w$ in~$\matches(\Dc,R)$, the set~$S(\Dc,R)$ must contain at least one subwalk of $w$.
Finally, one may be interested in covering the expression instead: if some atom of~$R$ is useful to match any walk, then some walk using it must be in~$S(\Dc,R)$.

\subsubsection{Vertex and edge coverage}
\label{s:vertex-edge-coverage}

\begin{definition}\index{Vertex coverage}
    A walk~$w$ \defemph{covers} a vertex~$v$ if~$v$ occurs in~$w$.
    A set of walks~$W$ \defemph{covers}~$v$ if some walk in~$W$ covers~$v$.
    We say that a semantics~$S$ is \defemph{vertex-covering} if for every~$\Dc,R$
    all vertices covered by~$\matches(\Dc,R)$ are also covered by $S(\Dc,R)$.
\end{definition}

It can be shown that no filter-based semantics is vertex-covering, and any order-based semantics can be made vertex-covering, at the price of an exponential blow-up in the size of the set of results.

\begin{lemma}\label{l:filter=>not-vertex-covering}
    No filter-based semantics is vertex-covering.
\end{lemma}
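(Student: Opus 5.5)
We argue by contradiction: assume $S$ is a filter-based semantics with filter $f$ that is vertex-covering, and build a database and query where infinitely many matches would have to pass the filter. The idea is to take a single loop and attach a fresh vertex after it. Covering that fresh vertex forces $f$ to accept some walk through it, and then we change the database so that every walk is isolated with its own tail.

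First we show that $f$ accepts arbitrarily long walks of a prescribed shape. Consider the database $\Dc$ with a vertex $v$ carrying an $a$-labeled loop $e$, and, for each $n$, a database $\Dc_n$ consisting of a simple directed path of length $n$ from $v_0$ to $v_n$ with all edges labeled $a$ and fresh identifiers. Take $R=a^*$. Vertex $v_n$ is covered by $\matches(\Dc_n,a^*)$, since the path itself matches. By vertex-coverage, some walk of $S(\Dc_n,a^*)$ contains $v_n$. In $\Dc_n$, every walk containing $v_n$ ends at $v_n$ and is a suffix of the path. So $f$ accepts some path $p_n$ ending at $v_n$, of some length $k_n\le n$.

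This alone gives no contradiction, since the $p_n$ could all be short, for instance the trivial walk $v_n$. A richer gadget is needed, in which the only walks reaching the covered vertex are long ones and are pairwise distinct in one database. The second step does this. Build a single database $\Dc$ with a vertex $v$, an $a$-loop $e$ on $v$, and a $b$-edge $g$ from $v$ to a fresh vertex $x$. Take $R=a^*b$. The matches are exactly the walks $(v\rarrow{e})^i v\rarrow{g}x$ for $i\ge 0$, and each of them covers $x$. Vertex-coverage only forces one of them to be accepted, say the one with index $i_0$.

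To get infinitely many accepted matches, we use filter-based restrictibility (Lemma~\ref{l:filter-based=>restrictible}) together with the fact that $f$ is fixed on walks and independent of the database. If some $w_{i}$ with $f(w_i)=\top$ exists, we want to force infinitely many. For this, build a database $\Dc'$ in which each $w_i$ reaches its own private target. The construction is a loop $e$ on $v$ labeled $a$ and $b$-edges from $v$ to fresh $x_j$, but this still does not tie the number of loop iterations to $j$.

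The main obstacle is exactly this: the loop count is not pinned down by a vertex. The fix I would try is to unfold the count into vertices. Use a cycle $C$ of length $m$ through $v$ and targets placed so that reaching a target requires a specific residue of loop iterations. Combined with the finiteness of $S(\Dc,R)$ and varying $m$, the accepted walks are then forced to contain $v$ with many repetitions.

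The final step is the contradiction. Since $S(\Dc,a^*b)$ must be finite for every such database, yet coverage forces $f$ to accept walks of unbounded iteration count around one fixed loop in one fixed database, $f$ accepts infinitely many members of $\matches(\Dc,R)$ for a single $\Dc,R$ (using $R=a^*$ over the union gadget). This contradicts the finiteness required by Definition~\ref{d:arbitrary-semantics}.
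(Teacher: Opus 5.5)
Your proposal has a genuine gap: its decisive claim is never established. The last paragraph says that vertex coverage forces $f$ to accept walks with unboundedly many loop iterations inside one fixed database, but no earlier step shows this.
\begin{itemize}
\item Your first step only yields some accepted suffix of a path, possibly the trivial one, as you note yourself.
\item Your second step, with $R=a^*b$, forces only a single accepted match.
\item The residue gadget is only sketched, and it cannot work as described. With a query like $a^*b$, each target $x_r$ is reached by infinitely many matches. Coverage of $x_r$ therefore again forces just one of them, possibly the one making no full turn of the cycle.
\item The walks forced in the databases for different cycle lengths $m$ live in different databases. They can be long but simple, so they never pile up into an infinite subset of a single $S(\Dc,R)$.
\item Without identifier-independence, which the statement does not assume, you could not even transport them between databases.
\end{itemize}
The underlying problem is that when a query has infinitely many matches, vertex coverage only demands \emph{some} accepted match per covered vertex. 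The witness is never pinned down.

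The missing idea is to let the query, rather than the database topology, fix the number of iterations: use a query with exactly one match. Take the database $\Dc$ with a single vertex $v$ and an $a$-labeled loop $e$. The only match of $a^k$ is $w_k=v(ev)^k$, and it covers $v$. Vertex coverage therefore forces $w_k\in S(\Dc,a^k)$, that is, $f(w_k)=\top$, for every $k$. Hence $S(\Dc,a^*)\supseteq\{w_k\mid k\in\N\}$ is infinite, a contradiction.

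This is the paper's argument read contrapositively. There, finiteness of $S(\Dc,a^*)$ yields some $k$ with $f(w_k)=\bot$, and then $S(\Dc,a^k)=\emptyset$ fails to cover $v$. No path gadgets, fresh targets, cycles of varying length or restrictibility are needed.
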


\begin{proof}
    Consider some semantics $S$ defined by filter function $f$, and let $\Dc$ be the database containing a single vertex $v$ and a single edge $e$ with label $a$ looping around $v$.
    The set $\matches(\Dc, a^\ast)$ contains all walks of the form $w_n = v(ev)^n$ for $n \geq 0$.
    Since $S(\Dc, a^\ast)$ has to be finite, there must exist $k$ such that $w_k \notin S(\Dc, a^\ast)$, i.e. such that $f(w_k) = \bot$. 
    Since this path is the only match in $\Dc$ for query $a^k$, $S(\Dc, a^k)$ must be empty. 
    $S(\Dc, a^k)$ does not cover $v$ even though $\matches(\Dc, a^k)$ does, which shows that $S$ is not vertex-covering.
\end{proof}

\begin{lemma}\label{l:order-based-vertex-covering}
    For any order-based semantics $S$ based on order $\preceq$, there exists a vertex-covering order-based semantics $S'$ based on some order $\preceq'$ that is a refinement of $\preceq$.
\end{lemma}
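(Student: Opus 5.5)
The plan is to refine $\preceq$ so that a walk can only be strictly dominated by walks that visit at least as many vertices, and then check that the refined order is still suitable. Concretely, I would define $\preceq'$ by: $w \prec' w'$ if and only if $w \prec w'$ and $\vertexset(w) \supseteq \vertexset(w')$. Its reflexive closure $\preceq'$ is contained in $\preceq$; in this sense it is a refinement, since it discards comparabilities rather than adding any.

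First, $\preceq'$ is a partial order. Reflexivity is by construction and antisymmetry is inherited from $\preceq$. Transitivity holds because both $\prec$ and reverse inclusion of vertex sets are transitive.

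The main step is suitability. Fix a database $\Dc$ and an infinite sequence $(w_i)$ in $\walks(\Dc)$. The map $w \mapsto \vertexset(w)$ takes values in the finite set $2^{V_\Dc}$, so by pigeonhole some infinite subsequence has a constant vertex set $X$. Restrict to it. Since $\preceq$ is a wpo on $\walks(\Dc)$, the restricted sequence contains indices $i<j$ with $w_i \preceq w_j$. Because $\vertexset(w_i) = \vertexset(w_j) = X$, this gives $w_i \preceq' w_j$. So $\preceq'$ is a wpo on $\walks(\Dc)$, hence suitable, and $S' = S_{\preceq'}$ is a well-defined order-based semantics.

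Then I would prove that $S'$ is vertex-covering. Let $v$ be covered by some $w \in \matches(\Dc,R,s,t)$, and consider the set $M$ of walks in $\matches(\Dc,R,s,t)$ that contain $v$. The set $M$ is nonempty, and $\preceq'$ is a wpo on it, so it has a $\preceq'$-minimal element $m$, obtained by descending from $w$ inside $M$ (no infinite strictly decreasing chain exists). I claim $m$ is minimal in all of $\matches(\Dc,R,s,t)$. Suppose some match $m'$ satisfies $m' \prec' m$. Then $\vertexset(m') \supseteq \vertexset(m) \ni v$, so $m' \in M$, contradicting the minimality of $m$ in $M$. Hence $m \in S'(\Dc,R)$, and $m$ covers $v$.

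The only subtle point is choosing the direction of the vertex-set condition. It must be such that anything below a walk covering $v$ still covers $v$, while wpo-ness is still preserved. Reverse inclusion achieves both, and the pigeonhole argument uses only the finiteness of $2^{V_\Dc}$.
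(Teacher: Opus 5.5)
Your proof is correct, but you use a different order from the paper's. The paper sets $w \preceq' w'$ iff $w \preceq w'$ and $\vertexset(w)=\vertexset(w')$, so walks over different vertex sets are incomparable. The set $\matches(\Dc,R,s,t)$ then splits into finitely many vertex-set classes, each class contributes its own minimal elements, and vertex coverage is immediate.

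You only forbid a walk from being dominated by a walk that visits fewer vertices (the condition $\vertexset(w)\supseteq\vertexset(w')$). Coverage then comes from a different observation: the set $M$ of matches through $v$ is downward closed under $\prec'$, so its minimal elements are minimal in the whole match set.

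Your relation contains the paper's. Since elements minimal for a larger order are also minimal for a smaller one, your $S'$ is always a subset of the paper's $S'$, so you keep more of the pruning power of $\preceq$. It does not remove the exponential blow-up the paper points out, though. Take $\preceq_{\shortestwalksem}$ on the $a$-labeled $n$-clique with query $a^*$, and $s\neq t$. A Hamiltonian $s$--$t$ path through a vertex set $X$ is never beaten by a shorter walk visiting a superset of $X$, so you still return at least one result per such $X$.

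Finally, your pigeonhole argument over $2^{V_\Dc}$ proves that $\preceq'$ is suitable, which the paper leaves implicit. The same argument applies verbatim to the paper's order.
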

\begin{proof}
    Let $S$ be a semantics based on the well-partial-order $\preceq$. 
    $S$ might fail to be vertex-covering, because for a given database $\Dc$, vertex $v$ and query $R$ there is no guarantee that $v$ appears in some minimal walk in $\matches(\Dc, R)$.

    Let us define the order $\preceq'$ as: $w \preceq' w'$ if both $w \preceq w'$ and the sets of vertices appearing in $w$ and $w'$ are equal. 
    In other words, we force walks over distinct sets of vertices to be incomparable. 
    One may easily verify that the order-based semantics $S'$ defined by $\preceq'$ is vertex-covering, since for each vertex occurring in $\matches(\Dc, R)$ there must exist at least one minimal element in which this vertex appears (actually at least one for each possible set of vertices appearing in a match containing this vertex).
\end{proof}

By construction of~$\preceq'$ in the proof of Lemma~\ref{l:order-based-vertex-covering}, one may see that
there is an exponential blow-up in the size of the result set in some cases.
For instance, over the $a$-labeled $n$-clique with query $a^\ast$, where the order-based semantics \shortestwalksem{} returns a single result per pair of endpoints, the semantics obtained via this construction would return at least one result per non-empty set of vertices.

We now define a vertex-covering semantics which is neither filter-based nor (strictly speaking) order-based, while avoiding the previous blow-up in the size of results.

\begin{example}[Shortest-vertex-cover Semantics]
    \label{ex:vertex-shortest-covering}
    \semindex{\vertexshortestcovering}{Shortest-vertex-cover semantics} Let $\preceq_{\shortestwalksem}$ be the order used to define $\shortestwalksem$.
    \begin{equation*}
        \vertexshortestcovering(\Dc,R) = \bigcup_{(s,t,v)\in (V_\Dc)^3} \min_{\preceq_{\shortestwalksem}} \{ w\in \matches(\Dc,R) \mid \text{$w$ starts in~$s$, ends in~$t$ and covers~$v$} \}
    \end{equation*}
\end{example}

Note that for each vertex, \vertexshortestcovering{} returns \emph{all} minimal-length matches covering this vertex.
The complexity of query evaluation under \vertexshortestcovering{} is studied in~\cite{Khichane2024} and recalled in Figure~\ref{f:complexity-table}, page \pageref{f:complexity-table}.
The following lemma summarizes some of its other properties.

\begin{lemma} $\vertexshortestcovering$ is 
    \begin{subthm}
        \item \label{lem:vsc-basic} identifier-independent and unbounded;
        \item \label{lem:vsc-not-filter-order} neither filter-based nor order-based;
        \item \label{lem:vsc-comono} co-monotonous, but not monotonous;
        \item \label{lem:vsc-cont} continuous;
        \item \label{lem:vsc-decomp} $a$-compatible and $\theta$-decomposable, but not~$\theta$-composable for any~$\theta\in\{\cdot,+,\ast\}$.
    \end{subthm}
\end{lemma}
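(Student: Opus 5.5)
We treat the five items separately; most are direct unfoldings of the definition. The useful reformulation is the following. A match $w\in\matches(\Dc,R,s,t)$ belongs to $\vertexshortestcovering(\Dc,R)$ iff some vertex $v$ of $w$ is covered by no strictly shorter walk of $\matches(\Dc,R,s,t)$. In particular $\vertexshortestcovering$ is walk-based.

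\emph{Item~\ref{lem:vsc-basic}.} Renamings preserve lengths, endpoints and covered vertices, which gives identifier-independence. For unboundedness, take an $a$-labeled simple path of length $n$ and the query $a^n$.

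\emph{Item~\ref{lem:vsc-not-filter-order}.} $\vertexshortestcovering$ is vertex-covering: for each covered $v$ and each pair $(s,t)$, the set of matches from $s$ to $t$ covering $v$ has a shortest element. Hence it is not filter-based by Lemma~\ref{l:filter=>not-vertex-covering}. Not being order-based is the step I expect to be the main obstacle, because many separating properties (+-decomposability, co-monotony, continuity) are shared by both families. My plan is to use a property that every order-based semantics has and $\vertexshortestcovering$ lacks: for an order-based $S$, a match $w$ is minimal in a set $M$ iff it is minimal in every pair $\{w,w'\}$ with $w'\in M$. For the counterexample, take walks $w=s\to x\to y\to t$, $w_1=s\to x\to t$ and $w_2=s\to y\to t$, all with fresh edges. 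Work in $\CharD{w,w_1,w_2}$ with queries that are sums of characteristic expressions. Then $w$ is returned for $\CharR{w}+\CharR{w_1}$ (it is shortest covering $y$) and for $\CharR{w}+\CharR{w_2}$ (shortest covering $x$). It is not returned for $\CharR{w}+\CharR{w_1}+\CharR{w_2}$. Any order-based semantics would have to return it in all three cases or in none.

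\emph{Item~\ref{lem:vsc-comono}.} For co-monotony: if $w\in\walks(\Dc)$ is shortest among the matches in $\Dc'$ from $s$ to $t$ covering $v$, then it is also shortest among the smaller set of such matches in $\Dc$. For non-monotony, use the walks above with $R=\CharR{w}+\CharR{w_1}+\CharR{w_2}$. We have $w\in\vertexshortestcovering(\CharD{w},R)$, but $w\notin\vertexshortestcovering(\CharD{w,w_1,w_2},R)$.

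\emph{Item~\ref{lem:vsc-cont}.} For each of the finitely many triples $(s,t,v)$ of $V_\Dc$, the contribution of that triple is the $\preceq_{\shortestwalksem}$-minimum of $M\cap C_{s,t,v}$, where $C_{s,t,v}$ is the fixed set of walks from $s$ to $t$ covering $v$. The proof of Lemma~\ref{l:order-based=>stabilizing} only uses that the match sets form a non-decreasing sequence and that the order is a wpo on $\walks(\Dc)$. Hence it applies verbatim to the sets $M_i\cap C_{s,t,v}$, and likewise the continuity argument in the proof of Proposition~\ref{p:order-based=>continuous}. A finite union of sequences that are eventually constant, with the correct limits, is itself eventually constant with the correct limit.

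\emph{Item~\ref{lem:vsc-decomp}.} For $a$-compatibility: all matches of $a$ have length $1$, so every match is shortest for each of its own vertices. For $+$-decomposability: shrinking the match set preserves minimality. For $\cdot$-decomposability, write $w\in\vertexshortestcovering(\Dc,RR')$ with witness vertex $v$ as $w=w_1w_2$, where $\lbl(w_1)\in L(R)$ and $\lbl(w_2)\in L(R')$. If $w_1\notin\vertexshortestcovering(\Dc,R)$, then every vertex of $w_1$ is covered by a strictly shorter $R$-match with the same endpoints. If $v$ lies in $w_1$, take such a shorter match $w_1'$ covering $v$; otherwise take one covering $\src(w_1)$. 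In both cases $w_1'w_2$ is a shorter match of $RR'$ covering $v$, a contradiction. The case of $w_2$ is symmetric, and $\ast$ is handled identically factor by factor. For non-composability, consider a single $a$-loop $e$ at $v$. For $+$, use $a+aa$: only $e$ is returned, although $ee\in\vertexshortestcovering(\Dc,aa)$. For $\ast$, use $a^*$: only the trivial walk is returned, whereas $\vertexshortestcovering(\Dc,a)^*$ is infinite. For $\cdot$, reuse the database in the proof of Lemma~\ref{l:concat-composability} with $R_1=a(b+\varepsilon)$ and $R_2=(cbc+\varepsilon)d$. The walks ending or starting at $v_3$ have distinct endpoints from the others, so all four matches are returned. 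Their product contains $abcbcd$, which is not returned for $R_1R_2$, since $abcd$ is a shorter match of $R_1R_2$ with the same endpoints covering the same vertices.
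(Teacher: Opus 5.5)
Most of your proposal is sound. In two places you take a different but valid route from the paper. First, you get non-filter-basedness from vertex coverage and Lemma~\ref{l:filter=>not-vertex-covering}, whereas the paper derives a contradiction on $f(w_3)$ from its three queries. Second, you refute $+$-composability with $a+aa$ on a single loop, whereas the paper uses $ab+cd$ and $aed$ on $\Dc'$.

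The genuine gap is your non-$\cdot$-composability example: $abcd$ is not a match of $R_1R_2$. Indeed $L\big(a(b+\varepsilon)(cbc+\varepsilon)d\big)=\{ad,abd,acbcd,abcbcd\}$. Of these, only $ad$ and $abcbcd$ label walks of that database, both from $v_1$ to $v_4$. Since $ad$ does not visit $v_3$, the walk $abcbcd$ is the only match of $R_1R_2$ from $v_1$ to $v_4$ covering $v_3$, so it is returned. In fact $\vertexshortestcovering(\Dc,R_1R_2)=\{ad,abcbcd\}=\vertexshortestcovering(\Dc,R_1)\cdot\vertexshortestcovering(\Dc,R_2)$, so this instance witnesses nothing. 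The template of Lemma~\ref{l:concat-composability} transfers to $\vertexshortestcovering$ only when the shorter competitor $w_n$ visits every vertex of $w_{n+k+1}$, and that fails for $n=0$.

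It is repaired by taking $n=1$, $k=0$, i.e.\ $R_1=abc(b+\varepsilon)$ and $R_2=(c+\varepsilon)d$. The two matches of each query have distinct endpoints, so both are returned, and $\vertexshortestcovering(\Dc,R_1)\cdot\vertexshortestcovering(\Dc,R_2)=\{abcd,abcbcd\}$. These are exactly the matches of $R_1R_2$. Since $abcd$ already visits $v_1,\dots,v_4$ and is shorter, $abcbcd$ is not returned.

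The paper's own example for this item, $(ae+c)(b+d)$ on $\Dc'$, has the same defect. There $ab$ is not a match of the product, so $aed$ is the only match covering $v_2$ and is returned. Using $a+ae+c$ as the first factor fixes it.

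A minor wording issue in item~2: an order-based semantics need not return $w$ ``in all three cases or in none''. What your pairwise-minimality principle actually gives, and what suffices, is that being returned for both two-term queries forces being returned for the three-term one.
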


The proof of Item 1 is trivial.
The proof of Item \ref{lem:vsc-cont} is similar to the proof that all order-based semantics are continuous in Section~\ref{s:continuity}.

\begin{proof}[Proof of item \ref{lem:vsc-not-filter-order}] 
    Consider database $\Dc$ defined by the graph below:
    \begin{center}
    \begin{tikzpicture}[bend angle=30]
        \path node[vertex](v1){} ++(-70:1em) node {$v_1$};
        \path (v1) ++(.8cm:2cm) node[vertex](v2){} ++(70:1em) node {$v_2$};
        \path (v1) ++(-.8cm:2cm) node[vertex](v3){} ++(-70:1em) node {$v_3$};
        \path (v2) ++(-.8cm:2cm) node[vertex](v4){} ++(-70:1em) node {$v_4$};
        \path[edge] (v1) to node[midway,label] {$a$} (v2);
        \path[edge] (v2) to node[label] {$b$} (v4);
        \path[edge] (v1) to node[midway,label] {$c$} (v3);
        \path[edge] (v3) to node[label] {$d$} (v4);
        \path[edge, bend left] (v2) to node[label] {$e$} (v3);
        \path[edge, bend left] (v3) to node[label] {$f$} (v2);
    \end{tikzpicture}
    \end{center}
    For simplicity, we can identify each edge with its label.
    Consider the three queries and three walks below.
    \begin{gather}
        R_1 = ab + aefb
        \qquad
        R_2 = cd + aefb
        \qquad
        R_3 = ab + cd + aefb
        \\
        w_1 = v_1 a v_2 b v_4
        \qquad
        w_2 = v_1 c v_3 d v_4
        \qquad
        w_3 = v_1 a v_2 e v_3 f v_2 b v_4
    \end{gather}
    Clearly, $\vertexshortestcovering(\Dc, R_1) = \{w_1, w_3\}$ ($\ast$) and $\vertexshortestcovering(\Dc, R_2) = \{w_2, w_3\}$ ($\dagger$). 
    Walk $w_3$ has to be retained in each set because it covers one more vertex than either $w_1$ or $w_2$ alone.
    Moreover, $\vertexshortestcovering(\Dc, R_3) = \{w_1, w_2\}$ ($\ddagger$) does not contain $w_3$ even though it is also a match for $R_3$, because it is longer than $w_1$ and $w_2$ and covers no additional vertex.

    Assume towards a contradiction that $\vertexshortestcovering$ is defined using some order $\preceq$. 
    By~($\ast$) and~($\dagger$) it must be that $w_1 \not\preceq w_3$ and $w_2 \not\preceq w_3$. 
    However by ($\ddagger$) we must have either $w_1 \preceq w_3$ or $w_2 \preceq w_3$, a contradiction.

    Assume now that $\vertexshortestcovering$ is defined using a filter function $f$. We must have $f(w_3) = \top$ by~($\ast$), but $f(w_3) = \bot$ by ($\ddagger$), also a contradiction.
\end{proof}

\begin{proof}[Proof of item \ref{lem:vsc-comono}]
    To show co-monotonicity, suppose there exist $\Dc \subseteq \Dc'$, $R$ and $w \in \walks(\Dc)$ such that $w \notin \vertexshortestcovering(\Dc, R)$ ($\ast$) but $w \in \vertexshortestcovering(\Dc', R)$~($\dagger$).
    By~($\ast$), then there must exist $w_1, \ldots, w_k \in \vertexshortestcovering(\Dc, R)$ covering all vertices of $w$ and of strictly smaller length than $w$.
    Since $w_1, \ldots, w_k$ are matches for $R$ and belong to $\walks(\Dc')$, they are in $\matches(\Dc', R)$. 
    Therefore, ($\dagger$) cannot hold.

    To show non-monotonicity, consider databases $\Dc$ and $\Dc'$ defined by the graphs below:
    \begin{center}
    \begin{tikzpicture}[bend angle=30]
        \path node[vertex](v1){} ++(-70:1em) node {$v_1$};
        \path (v1) ++(.8cm:2cm) node[vertex](v2){} ++(70:1em) node {$v_2$};
        \path (v1) ++(-.8cm:2cm) node[vertex](v3){} ++(-70:1em) node {$v_3$};
        \path (v2) ++(-.8cm:2cm) node[vertex](v4){} ++(-70:1em) node {$v_4$};
        \path[edge] (v1) to node[midway,label] {$a$} (v2);
\path[edge] (v1) to node[midway,label] {$c$} (v3);
        \path[edge] (v3) to node[label] {$d$} (v4);
        \path[edge] (v2) to node[label] {$e$} (v3);
    \end{tikzpicture}
    \hfil
    \begin{tikzpicture}[bend angle=30]
        \path node[vertex](v1){} ++(-70:1em) node {$v_1$};
        \path (v1) ++(.8cm:2cm) node[vertex](v2){} ++(70:1em) node {$v_2$};
        \path (v1) ++(-.8cm:2cm) node[vertex](v3){} ++(-70:1em) node {$v_3$};
        \path (v2) ++(-.8cm:2cm) node[vertex](v4){} ++(-70:1em) node {$v_4$};
        \path[edge] (v1) to node[midway,label] {$a$} (v2);
        \path[edge] (v2) to node[label] {$b$} (v4);
        \path[edge] (v1) to node[midway,label] {$c$} (v3);
        \path[edge] (v3) to node[label] {$d$} (v4);
        \path[edge] (v2) to node[label] {$e$} (v3);
    \end{tikzpicture}
    \end{center}
    and query $R = ab + cd + aed$. Clearly, $\Dc \subseteq \Dc'$, and the $aed$-labeled walk between $v_1$ and $v_4$ belongs to $\vertexshortestcovering(\Dc, R)$ but not to $\vertexshortestcovering(\Dc', R)$.
\end{proof}

\begin{proof}[Proof of item \ref{lem:vsc-decomp}]
    First, let us show that~$\vertexshortestcovering$ is $\ast$-decomposable.
    Let~$\Dc$ be a database,~$R$ a query and~$w$ be any walk in~$\vertexshortestcovering(\Dc,R^*)$. By definition of $\vertexshortestcovering$, there exists a vertex $v$ such that $w$ is the shortest among all walks that cover~$v$ and match $R$.
    Since~$w \in \matches(\Dc,R^*)$, there is a factorization~$w = w_1 \cdots w_k$ such that~$w_i\in\matches(\Dc,R)$ for every~$i\in\{1,\ldots,k\}$.
    For the sake of contradiction, let us assume that~$w_i\notin\vertexshortestcovering(\Dc,R)$ for some~$i$.
    We consider two cases.
    \begin{enumerate}
        \item If there exists a factor $w_j$ with $j \neq i$ visiting $v$, since $w_i\notin\vertexshortestcovering(\Dc,R)$ there must exist a walk $w'_i$ strictly shorter than $w_i$ and with the same endpoints in $\vertexshortestcovering(\Dc,R)$. 
        Then, let~$w'$ be the walk resulting from substituting $w'_i$ for $w_i$ in the factorization of $w$.
        The walk~$w'$ is a strictly shorter than~$w'$, has the same endpoints, has a label in $R^\ast$ and covers $v$.
        It is a contradiction to~$w \in \vertexshortestcovering(\Dc,R^*)$.
        \item Otherwise, since $w_i\notin\vertexshortestcovering(\Dc,R)$ there must exist some walk $w'_i$ in $\vertexshortestcovering(\Dc,R)$ with the same endpoints as $w_i$ covering $v$. 
        We conclude as before by substituting $w'_i$ for $w_i$ in $w$.
    \end{enumerate}
    
    Showing that \vertexshortestcovering is $\cdot$-decomposable is similar 
    but easier, and showing that it is $+$-decomposable is trivial.
    Non-$\ast$-composability is shown using the same argument as Item~\ref{l:star-composability:order-based} of Lemma~\ref{l:star-composability}.
    
    One may show non-$+$-composability and non-$\cdot$-composability using database~$\Dc'$ from item~3.
    For non-$+$-composability, consider queries~$R_1=ab+cd$ and~$R_2=aed$. The walk $w$ between $v_1$ and $v_4$ labeled $aed$ is in $S(\Dc', R_2)$ but not in $S(\Dc', R_1 + R_2)$.
    For non-$+$-composability, consider queries~$R_3=ae+c$ and~$R_4=b+d$. The walk $w$ between $v_1$ and $v_4$ labeled $aed$ is in $S(\Dc', R_1) \cdot S(\Dc', R_2)$ but not in $S(\Dc', R_1 \cdot R_2)$.
\end{proof}

Note that replacing $\preceq_{\shortestwalksem}$ by any suitable order in Example~\ref{ex:vertex-shortest-covering} yields a valid RPQ semantics.
This opens up a whole new way of defining semantics that we leave for future work.

\subparagraph{Edge Coverage.}
The situation with respect to edge coverage is much like the one for vertex coverage.
It may be defined similarly, and no semantics defined in this document is edge-covering.
A shortest covering semantics, the shortest-edge-cover semantics~\edgeshortestcovering, may be defined by adapting Example~\ref{ex:vertex-shortest-covering}.

\subsubsection{Subwalk guarantee}
\label{s:subwalk-guarantee}

\begin{definition}\index{Subwalk guarantee}
    A semantics~$S$ possesses the \defemph{subwalk guarantee}
    if for every database~$\Dc$, query~$R$, and walk~$w$
    such that~$w\in\matches(\Dc,R)$ there exists a subwalk of~$w$ that belongs to~$S(\Dc,R)$.
\end{definition}

This property was introduced in Lemma~13 of \cite{DavidFrancisMarsault2023},
which states that $\bindingtrailsem$ possesses it.
This led to the definition of $\subwalkminsem$, which possesses it by definition.
Other semantics discussed in this document do not possess it.

\begin{lemma}
    \begin{subthm}
    \item 
    $\subwalkminsem$ and  $\bindingtrailsem$ possess the subwalk coverage guarantee.
    \item $\vertexshortestcovering$, $\shortminsetsem$, $\minmultsem$ and~$\shortestwalksem$ do not possess the subwalk coverage guarantee.
    \item No filter-based semantics possesses the subwalk guarantee.
    \end{subthm}
\end{lemma}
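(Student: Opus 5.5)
We treat the three items separately: item~1 follows from the structure of the subwalk order, and items~2 and~3 each need a small counterexample database.

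\emph{Item 1, \subwalkminsem.} Let $w\in\matches(\Dc,R)$ with $\ep(w)=(s,t)$. The set $\{w'\in\matches(\Dc,R,s,t)\mid w'\subwalkord w\}$ contains $w$, so it is nonempty. Since $\subwalkord$ is a wpo on $\walks(\Dc)$ (Lemma~\ref{l:subwalk-ord-wpo}), this set has a minimal element $w_0$. By transitivity, $w_0$ is also minimal in $\matches(\Dc,R,s,t)$: any $w''\mathrel{\sqsubset} w_0$ in that set would satisfy $w''\subwalkord w$, contradicting the minimality of $w_0$. Hence $w_0\in\subwalkminsem(\Dc,R)$, and $w_0$ is a subwalk of $w$.

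\emph{Item 1, \bindingtrailsem.} We would simply invoke Lemma~13 of \cite{DavidFrancisMarsault2023}. If a sketch is wanted: take a binding walk $(w,\alpha)$. If two positions $i<j$ satisfy $\alpha(i)=\alpha(j)$ and $w[i]=w[j]$, then the automaton/expression position is the same at both, so we can cut out the factor between them. Each cut gives a shorter binding walk, which is still a match and a subwalk of $w$; iterating terminates in a binding trail.

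\emph{Item 2.} We would build one small database that defeats all four semantics, or else one per semantics. The idea is a match $w$ none of whose subwalks is returned, because the returned walks use different elements. A natural candidate is the database of item~\ref{lem:vsc-not-filter-order} of the \vertexshortestcovering{} lemma. Take the edge $v_1\rarrow{}v_4$ replaced by two alternatives: $w=v_1\rarrow{a}v_2\rarrow{b}v_4$ and $w'=v_1\rarrow{c}v_3\rarrow{d}v_5\rarrow{g}v_4$? That is the wrong way round; we want the returned walk to be \emph{shorter} and not a subwalk. So use $w$ of length~$2$ via $v_2$ and $w'$ a direct edge $v_1\rarrow{x}v_4$, with $R=ab+x$. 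Then $\shortestwalksem$ returns only $w'$, and $w'$ is not a subwalk of $w$.
\begin{itemize}
\item For \vertexshortestcovering{}, $w$ is returned because it covers $v_2$. We therefore need $w$ itself to cover nothing new: add a shorter walk through $v_2$, say $R=ab+x+ay$ with an edge $v_2\rarrow{y}v_4$ of label $y$. Both $ab$ and $ay$ have length~$2$, so both are returned; instead make $w$ longer: $w$ labeled $abz$ via $v_2,v_5$, plus short walks covering $v_2$ and $v_5$ separately.
\item For \minmultsem{} and \shortminsetsem{}, the element sets of $w$ and $w'$ are incomparable, so both are returned. We instead need a walk $w$ that strictly contains (as a bag or set) the elements of a returned walk $w''$, while $w''$ is not a subwalk of $w$. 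Take a cycle traversed in an order that prevents $w''$ from being a subsequence. Example: vertices $s,p,q,t$ with edges $s\to p$, $p\to q$, $q\to p$, $q\to t$, $p\to t$. Let $w''=s\,p\,q\,p\,t$ and $w=s\,p\,q\,p\,q\,t$. Then $w''$ is not a subsequence of $w$, since $w$ ends with $q\to t$ and does not use the edge $p\to t$. So instead use $w=s\,p\,q\,p\,t$ plus $w''$ reordering. I expect to tune this by hand so that $w''$ has a strictly smaller bag (resp.\ equal set and shorter length) but is not a subsequence.
\end{itemize}
This item is the main obstacle: a routine but fiddly search for a single graph where bag-inclusion holds but subsequence fails.

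\emph{Item 3.} Reuse the proof of Lemma~\ref{l:filter=>not-vertex-covering}. On a single $a$-loop there is some $k$ with $f(w_k)=\bot$. Then take $R=a^k$: its only match is $w_k$, yet $S(\Dc,a^k)=\emptyset$, so no subwalk of $w_k$ is returned.
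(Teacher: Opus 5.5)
Items~1 and~3 are fine. Your minimal-element argument for \subwalkminsem{} spells out the paper's ``by definition'', and the single $a$-loop is a concrete instance of the paper's characteristic-database argument. Your \shortestwalksem{} counterexample with $R=ab+x$ is correct. Your \vertexshortestcovering{} sketch can also be completed, e.g.\ with $R=abz+ay+cz$ using edges $v_2\to v_4$ labeled $y$ and $v_1\to v_5$ labeled $c$.

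The genuine gap is item~2 for \minmultsem{} and \shortminsetsem{}: you never produce a counterexample, and the one you write down fails. The fact you yourself note, that $w=s\,p\,q\,p\,q\,t$ does not use the edge $p\to t$ while $w''=s\,p\,q\,p\,t$ does, gives $\elembag(w'')\not\subseteq\elembag(w)$ and $\elemset(w'')\not\subseteq\elemset(w)$. So $w''$ is not below $w$ in either order. With a query matching just these two walks, $w$ is itself returned, and the guarantee holds for it trivially.

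What you need is a $w''$ strictly below $w$ in the bag order (resp.\ with equal set and shorter length) that is not a subwalk of $w$. Deleting closed factors preserves the relative order of the remaining edges, so the way to do this is to swap two edges and give $w$ one extra occurrence. Concretely, take one vertex $p$ with two loops $e_a,e_b$ labeled $a,b$, and $R=aab+ba$. The only matches are $w=p\,e_a\,p\,e_a\,p\,e_b\,p$ and $w''=p\,e_b\,p\,e_a\,p$. We have $\elembag(w'')\subsetneq\elembag(w)$, and $\elemset(w'')=\elemset(w)$ with $\len(w'')<\len(w)$. Hence each of \minmultsem, \shortminsetsem, \shortestwalksem{} and \vertexshortestcovering{} returns exactly $\{w''\}$. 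Yet the label of any subwalk of $w$ is a subsequence of $aab$, and $ba$ is not one. So the only subwalk of $w$ that is a match is $w$ itself, which is not returned, and this single database settles all of item~2.

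The paper argues differently: \shortestwalksem, \minmultsem{} and \shortminsetsem{} are strictly included in \subwalkminsem{}, and any match returned by \subwalkminsem{} but not by $S$ is a counterexample. This holds because subwalks share endpoints and such a match is $\sqsubseteq$-minimal among matches with those endpoints. However, witnessing that strictness for \minmultsem{} and \shortminsetsem{} requires exactly an example like the one above, which the paper only asserts.
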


\begin{proof}[Proof for item 1] 
    $\bindingtrailsem$ possess the subwalk coverage guarantee from \cite[Lemma 13]{DavidFrancisMarsault2023}.
    $\subwalkminsem$ possess it by definition.
\end{proof}

\begin{proof}[Proof for item 2] 
    For $\shortminsetsem$, $\minmultsem$ and~$\shortestwalksem$ the results comes from the fact that they are strictly included in~$\subwalkminsem$.
    Hence on the instances where the inclusion are counterexamples of the subwalk coverage guarantee.

    For $\vertexshortestcovering$, consider the database~$\Dc$ consisting of one $a$-cycle and two $b$-cycles given below.
    \begin{gather}
      \label{eq:acycle}v_1\xrightarrow{a}v_2\xrightarrow{a}   v_3\xrightarrow{a}v_1 \\
      v_1\xrightarrow{b}v_2\xrightarrow{b}v_1 \\
      v_1\xrightarrow{b}v_3\xrightarrow{b}v_1
    \end{gather}
    For~$R=aaa+bb$, it may be verified that no subwalk of the walk in \eqref{eq:acycle} is~$\vertexshortestcovering(\Dc,R)$.
\end{proof}
\begin{proof}[Proof sketch for item 3] 
    A filter-based semantics has at least one walk~$w$ of positive length that it filters out.
    Consider the characteristic database~$\Dc$ and
    characteristic regular expression~$R$ of~$w$
    (Def.~\ref{d:characteristic}).
    Then~$S(\Dc,R)$ is empty while~$w$ belongs to~$\matches(\Dc,R)$.
\end{proof}

\subsubsection{Atom coverage}

Another, orthogonal type of coverage is worth mentioning.
It covers the regular expression instead of the match set.
Intuitively, we say a semantics $S$ is atom-covering if for every $\Dc$ and $R$, and for every letter occurrence in $R$, $S(\Dc, R)$ contains at least one walk matched using this letter occurrence if any such walk exists.
We give below a formal definition.

\begin{definition}\label{d:atom-covering}\index{Atom coverage}
    Let $R$ be a regular expression over $\Sigma$. 
    We denote by~$k$ the number of atoms in~$R$, and by $a_i$ the $i$-th atom of~$R$, for~$i\in\{1,\ldots,k\}$.
    \begin{subthm}
        \item \label{d:regexp-linearisation}
        A \defemph{linearisation} of $R$ is a copy $R'$ of $R$ in which each atom occurrence in $\Sigma$ is replaced by a different symbol in a new alphabet $\Gamma$. 
        We denote by~$\alpha$ the corresponding bijection~$\{1,\ldots,k\}\to\Gamma$ and by~$\beta$ the function $\Gamma\to\Sigma$ defined by: $\forall i\in\{1,\ldots,k\}$
        $\beta(\alpha(i))=a_i$.
        \item 
        A word~$u$ \defemph{covers the $i$-th atom} of~$R$ if there exists some word~$u'\in L(R')$ in which the letter $\alpha(i)$ occurs, and such that $u=\beta(u')$, for some linearisation~$R'$ of~$R$.
        A walk~$w$ in~$\Dc$ covers the $i$-th atom of~$R$ if its label does.

        Note that this notion is independent of the chosen linearisation.
        
        \item A semantics~$S$ is \defemph{atom covering} if
        for every database~$\Dc$, regular expression~$R$ and position~$i$ in~$\{1,\ldots,k\}$
        such that some walk in $\matches(\Dc,R)$ covers the~$i$-th atom of~$R$,
        there is a walk covering the~$i$-th atom of~$R$ in $S(\Dc,R)$.
    \end{subthm}
\end{definition}

This definition expresses the idea that when designing a query with multiple alternatives, each alternative which yields at least a match should be reflected in the result set.
It may be verified that no walk-based semantics can be atom-covering.
As was the case for vertex and edge coverage, it is quite easy to define atom-covering semantics (although they are not walk-based in this case), such as the one given next.

\begin{example}[Shortest atom-cover $\atomshortestcoverer$]
\semindex{\atomshortestcoverer}{Shortest-atom-cover semantics}
Let~$\Dc$ be a database and~$R$ be a regular expression.
Let~$\{1,\ldots,k\}$ be the set of positions of atoms in~$R$. Let $\preceq_{\shortestwalksem}$ be the \emph{shorter than} order (on which $\shortestwalksem$ is based). Then:
\begin{equation*}
        \atomshortestcoverer(\Dc,R) = \bigcup_{s,t\in V_\Dc}   
        \bigcup_{i=1}^{k}
        \min_{\preceq_{\shortestwalksem}} \left\{ w\in \matches(\Dc,R) ~\middle| 
        \begin{array}{l}\text{$w$ starts in~$s$ and ends in~$t$}
         \\\text{$w$ covers the $i$-th atom of~$R$}
        \end{array}\right\}
    \end{equation*}
\end{example}

It is routine to check that the definition of \atomshortestcoverer{} implies that it is atom-covering.
Since \bindingtrailsem{} includes \atomshortestcoverer{} semantics (See Section~\ref{s:inclusions}), it is also atom-covering.

\begin{proposition}
    \atomshortestcoverer{} and \bindingtrailsem{} are atom-covering.
\end{proposition}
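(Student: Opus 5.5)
The proof splits into two parts: first \atomshortestcoverer{} directly from its definition, then \bindingtrailsem{} via the inclusion $\atomshortestcoverer\subseteq\bindingtrailsem$ of Figure~\ref{fig:inclusion}.

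\emph{Step 1: \atomshortestcoverer{} is atom-covering.} Fix $\Dc$, $R$, and a position $i$ such that some $w\in\matches(\Dc,R)$ covers the $i$-th atom. Let $(s,t)=\ep(w)$. Then the set
\[
M_{s,t,i}=\{w'\in\matches(\Dc,R)\mid \ep(w')=(s,t),\ w'\text{ covers the $i$-th atom of }R\}
\]
is nonempty. Since $\preceq_{\shortestwalksem}$ is suitable (a wpo on $\walks(\Dc)$), the set $\min_{\preceq_{\shortestwalksem}} M_{s,t,i}$ is nonempty; concretely, pick any walk of minimal length in $M_{s,t,i}$, which exists because lengths are natural numbers. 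Every element of this min-set covers the $i$-th atom and belongs to $\atomshortestcoverer(\Dc,R)$ by definition, which gives atom coverage. This step is routine.

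\emph{Step 2: \bindingtrailsem{} is atom-covering.} Here I would not invoke Figure~\ref{fig:inclusion} without proof, since the inclusion is claimed there without justification. I would prove the needed instance directly: every $w\in\atomshortestcoverer(\Dc,R)$ is a binding trail for some binding $\alpha$. Take $w$ minimal in length in $M_{s,t,i}$, together with a binding $\alpha$ (a run through the linearisation $R'$) that uses atom $i$. Suppose there are positions $p<q$ with $\alpha(p)=\alpha(q)$ and $w[p]=w[q]$. Then the vertex $w\subw{p-1}=w\subw{q-1}$ is revisited, and the linearised label read just before position $p$ and just before position $q$ is immediately followed by the same linearised letter. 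Since for a linearised (local/Glushkov) expression the continuation depends only on the last letter read, I would cut out the factor between positions $p$ and $q$ of both $w$ and $\alpha$. This yields a strictly shorter walk with the same endpoints whose linearised label is still in $L(R')$, so it still matches $R$.

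\emph{Main obstacle.} The cut must preserve coverage of atom $i$, which fails if every occurrence of $\alpha(i)$ lies strictly inside the removed factor. To handle this, I would choose the repeated pair $(p,q)$ carefully. If the removed factor contains all uses of $i$, there is some position $r$ in $[p,q)$ with $\alpha(r)=i$. In that case I would instead remove $[q,q')$ for a later repetition, or argue via the reversed run, that some other cut avoids all occurrences of $i$. If that does not work, I would fall back on citing the inclusion from~\cite{DavidFrancisMarsault2023}, where binding trails are characterised through runs with no repeated (atom, edge) pair.

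Once $\atomshortestcoverer(\Dc,R)\subseteq\bindingtrailsem(\Dc,R)$ is established, Step~1 immediately shows that \bindingtrailsem{} is atom-covering.
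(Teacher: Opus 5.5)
Your Step~1 is correct and is the paper's argument: minimal-length elements of the nonempty set of matches covering atom $i$ exist and lie in $\atomshortestcoverer(\Dc,R)$. Your reduction for Step~2 (the paper instead cites Figure~\ref{fig:inclusion}) is not a route to the paper's argument. The obstacle you flag is a genuine gap, and it cannot be closed.

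\textbf{The inclusion fails.} Your cutting argument via locality of $L(R')$ is sound as far as it goes; it is what shows subwalk-minimal walks are binding trails. But the case where every use of atom $i$ lies strictly between the repeated pair does occur. When it does, no other choice of cut or binding helps, because $\atomshortestcoverer\subseteq\bindingtrailsem$ is false for binding trails as defined in Example~\ref{ex:binding-trail}.

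Take $R=a(b+c)^*d$, with atoms numbered $1,2,3,4$ from left to right. Let $\Dc$ have four edges:
$e_1\colon v_0\to v_1$ labelled $a$, $e_2\colon v_1\to v_2$ labelled $b$, $e_3\colon v_2\to v_1$ labelled $c$, and $e_4\colon v_2\to v_3$ labelled $d$.
The only $a$-edge is $e_1$ and the only $d$-edge is $e_4$. So the matches are exactly the walks with edge sequences $e_1e_2(e_3e_2)^ne_4$, $n\geq 0$, with labels $ab(cb)^nd$.

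Every letter occurs once in $R$, so each label has a unique binding, namely $1,2,(3,2)^n,4$. For $n\geq 1$, the edge $e_2$ is matched twice by atom~$2$. Hence $\bindingtrailsem(\Dc,R)=\{e_1e_2e_4\}$.

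The walk $e_1e_2e_3e_2e_4$ is the shortest match from $v_0$ to $v_3$ covering atom~$3$. It therefore lies in $\atomshortestcoverer(\Dc,R)$ but not in $\bindingtrailsem(\Dc,R)$. Its only repetition, at positions $2$ and $4$, brackets the unique use of atom~$3$, and both possible cuts yield $e_1e_2e_4$. There is no later repetition to use and no alternative binding to switch to, so your proposed workarounds have nothing to act on.

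\textbf{The fallback does not help, and the $\bindingtrailsem$ half is false as stated.} The paper's own proof is exactly ``$\atomshortestcoverer$ is atom-covering by definition, and $\bindingtrailsem$ contains it by Figure~\ref{fig:inclusion}''. The instance above refutes that inclusion. It also refutes the conclusion itself. The match $e_1e_2e_3e_2e_4\in\matches(\Dc,R)$ covers atom~$3$, but the unique result $e_1e_2e_4$ has label $abd$ and covers only atoms $1,2,4$.

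So with binding trails forbidding repeated (edge, atom) pairs and atom coverage defined through labels, $\bindingtrailsem$ is not atom-covering. Only the $\atomshortestcoverer$ half of the proposition can be proved. The $\bindingtrailsem$ half would require changing the definition of binding trail or of atom coverage.
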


\subsection{Summary: what is a reasonable semantics?}
\label{s:reasonable-semantics}

Although many of the properties described in Sec.~\thesection{} are desirable, we saw that some of them are incompatible.
We list below the properties that we think any reasonable semantics should satisfy.
As we will see, there are caveats in a lot of cases.

\begin{description}
    \item
    [Identifier independence \normalfont(Def.~\ref{d:identifier-independent})]
    In database systems, identifiers are not always accessible to the user, nor guaranteed to remain constant over time.
    Therefore, it seems strange to have a semantics rely on identifiers.
    Note however that identifier-dependent semantics might still be useful to model cases where the database engine is allowed to make arbitrary choices, like the \gql{ANY} \gql{SHORTEST} keyphrase in GQL or the \shortlexsem{} semantics (Ex.~\ref{ex:shortlex}).
    
    \item[Label-independence \normalfont(Def.~\ref{d:label-independent})] 
    Semantics are defined before (hence independently of) database schemas.
    Since labels are abstractions of relation names in schema,
    it seems strange to attach a meaning to them.
    On the other hand, one could use labels to model other aspects, such as the weight in a semantics
    like the one described by the \gql{ANY} \gql{CHEAPEST} keyphrase proposal in GQL (see Ex.~\ref{ex:cheapest}).
    
    \item[Unboundedness \normalfont(Def.~\ref{d:unbounded-semantics})] Bounded semantics only yield walks whose length is bounded by some constant \emph{regardless of the query or the database}, which seems to defeat the whole purpose of using RPQs. Note that we are referring here to a property of the semantics itself, not to possible extensions of the query language allowing users to limit the length of returned walks (which could be useful in certain scenarios).
    
    \item[Co-monotony \normalfont(Def.~\ref{d:comonotonous})] A co-monotonous semantics ensures that, for a fixed query, whenever extending a database graph with new vertices or edges, the new results should always use some of the new vertices or edges.
    On the contrary, under a non-co-monotonous semantics, whether a match belongs to the result set or not depends on information that seems irrelevant.
    
    \item[Decomposability \normalfont(Def.~\ref{d:decomposability})]
    Similarly to co-monotony, semantics that are not decomposable 
    sometimes have new results that materialize out of thin air. In particular, for such semantics, results cannot be computed bottom up from the expression.
    
    \item[Compatibility with atoms \normalfont(Def.~\ref{d:atom-compatibility})]
    Using a semantics that is not compatible with atoms means that some edges might be missing from replies to atomic queries.  If the semantics is moreover decomposable, these missing edges will never appear in the results of any query.
    Note that~$\acyclicsem$, one of the semantics available in GQL, is in fact not compatible with atoms, due to possible self-loops in the graph. 
    This probably means that \acyclicsem{} assumes the graph to have no self-loops.
    
    \item[Continuity \normalfont(Def.~\ref{d:continuous})] 
    A (walk-based) semantics that is not continuous means that the semantics behaves differently for finite match sets and infinite match sets.
    Note that a semantics that is not expression-independent (hence also not walk-based, see Lemma~\ref{l:walk-based=>expression-independent}), such as binding-trail semantics (Example~\ref{ex:binding-trail}), may be  noncontinuous but still be considered meaningful.
    
    \item[Complexity]
    The definition of filter-based (Def.~\ref{d:filter-based}) and order-based (Def.~\ref{d:order-based}) semantics makes no requirement about the complexity class of the filter function or of the order.
    Reasonable semantics should be based on filter functions or orders that are tractable, otherwise the complexity classes of usual query evaluations are bound to be untractable (see Sec.~\ref{sec:computational-problems}).
\end{description}

\begin{figure}[ht]
    \newcolumntype{L}{>{\rule{0pt}{2.6ex}\raggedright\arraybackslash}X<{\rule[-1.4ex]{0pt}{2.5ex}}}
    \newcolumntype{C}{>{\rule{0pt}{2.6ex}\centering\arraybackslash}X<{\rule[-1.4ex]{0pt}{2.5ex}}}
    \centering \begin{tabularx}{\linewidth}{>{\raggedright}p{46mm}cccC}
    \hline
    \rowcolor{tabularrowtitle}
    \multicolumn{1}{l}{Semantics} & Existence & Memb. & Extensibility & Enumeration
    \\
    \hline
    \trailsem{} (Trail) \cite{MartensNiewerthPopp2023}    & \np-c        &   \nl        & \np-c & Preproc. is \np-c
    \\
    \rowcolor{tabularrowcolor} 
    \acyclicsem{} (Acyclic) \cite{BaganBonifatiGroz2020} & \np-c
      & \nl & \np-c & Preproc. is \np-c 
    \\
    \shortestwalksem{} (Shortest walk)& \nl-c & \ptime & \ptime& \polydelay 
    \\
    \rowcolor{tabularrowcolor} 
    \bindingtrailsem{} (Binding trail) \cite{DavidFrancisMarsault2023} & \nl-c                  & \np-c & \np-c& Open/\polydelay\footnote{Enumeration under Binding-trail semantics is \polydelay{} under bag semantics, that is if the same walk is output once for each way it is matched.}
    \\
    \subwalkminsem{} (Subwalk minimal) \cite{Khichane2024} & \nl-c & \ptime &  \np-c & Open 
    \\
    \rowcolor{tabularrowcolor} 
    \minmultsem{}  (Minimal Multiset) \cite{Marsault2024} & \nl-c &
    \conp-c& \conp-h &
    Not \polydelay{} unless $\ptime=\np$
    \\
    \vertexshortestcovering{}  (Short.\@ Vertex Cover) {\cite{Khichane2024}} & \nl-c &
    \ptime & \ptime &
    \polydelay
    \\
    \hline
    \end{tabularx}
    \caption{Summary of computational problem (data) complexity for the main semantics}
    \label{f:complexity-table}
\end{figure}

\section{Computational problems}
\label{sec:computational-problems}

This document is \textbf{not} about computational problems. 
In Section~\thesection{}, we gather computational problems and known complexity results for them from other sources.  In particular, Figure~\ref{f:complexity-table} sums up known results about
the most interesting semantics mentioned in this document.

\begin{problem}{Existence under $S$}
    \item[Data:] A database~$\Dc$ and two vertices $s$ and $t$ from~$\Dc$.
    \item[Query:] A regular expression~$R$.
    \item[Output:] Is there a walk $w$ from~$s$ to~$t$ in $S(\Dc,R)$ ?
\end{problem}

\begin{theorem}
    Existence under $S$ is in $\nl$ whenever~$S$ is order-based.
\end{theorem}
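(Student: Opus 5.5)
The key observation is that, for an order-based semantics $S_\preceq$, existence of a result between $s$ and $t$ is equivalent to existence of a match between $s$ and $t$. I would reduce the problem to classical RPQ evaluation under homomorphism (endpoint) semantics, which is known to be in $\nl$: guess the walk one edge at a time and simulate a finite automaton for $R$ alongside it.

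\textbf{Step 1: reduction to non-emptiness of matches.} I would show that $S_\preceq(\Dc,R)$ contains a walk from $s$ to $t$ if and only if $\matches(\Dc,R,s,t)\neq\emptyset$. The forward direction is immediate, since by Definition~\ref{d:arbitrary-semantics} every result is a match, and by Definition~\ref{d:order-based} results with endpoints $(s,t)$ come from $\min_\preceq(\matches(\Dc,R,s,t))$.

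For the backward direction, suppose $M=\matches(\Dc,R,s,t)$ is nonempty. Since $\preceq$ is suitable, it is a wpo on $\walks(\Dc)\supseteq M$, and hence also a wpo on $M$. In particular, there is no infinite strictly decreasing sequence in $M$. Start from any $w_0\in M$; if it is not minimal, pick $w_1\prec w_0$ in $M$, and so on. This process must stop, so it yields a minimal element, and $\min_\preceq(M)\neq\emptyset$. Therefore $S_\preceq(\Dc,R)$ contains a walk from $s$ to $t$.

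Note that no complexity assumption on $\preceq$ is needed, because the order completely disappears from the decision problem.

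\textbf{Step 2: deciding non-emptiness in $\nl$ (data complexity).} Since $R$ is fixed, compute a constant-size NFA $A$ for $L(R)$. Then $\matches(\Dc,R,s,t)\neq\emptyset$ holds iff in the product graph $\Dc\times A$ some pair $(t,q_f)$, with $q_f$ final, is reachable from $(s,q_0)$. The product has $O(|V_\Dc|)$ vertices, so reachability can be decided by guessing a path while storing only the current pair in logarithmic space. When $s=t$ and $\varepsilon\in L(R)$, the trivial walk is handled by the initial state being final.

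\textbf{Main obstacle.} I expect the only subtle point to be Step 1's existence of minimal elements. It relies on suitability, i.e.\@ well-foundedness of $\preceq$ on $\walks(\Dc)$, rather than on any finiteness of $M$, since $M$ may be infinite. The complexity part is entirely standard.
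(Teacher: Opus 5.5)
Your proof is correct and follows the argument the paper leaves implicit; the theorem is stated there without proof. A suitable order is a wpo on $\walks(\Dc)$, so $\min_\preceq(\matches(\Dc,R,s,t))$ is nonempty whenever $\matches(\Dc,R,s,t)$ is, and Existence under $S_\preceq$ therefore coincides with classical endpoint RPQ evaluation, which is in \nl{} by product-graph reachability. You are also right that no complexity assumption on $\preceq$ is needed, which is why the statement holds for every order-based semantics, unlike Membership.
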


As recalled in Figure~\ref{f:complexity-table}, it is well-known that \problemfont{Existence}
is \np-complete under the semantics \trailsem{} and \acyclicsem.
It is quite easy to show that it is also \np-complete under the variant \simplesem{} (Ex.~\ref{ex:simple})
or \twoacyclicsem{} (Ex.~\ref{ex:2-ac}).
Actually, we conjecture that \problemfont{Existence} is \np-hard for any filter-based semantics:

\begin{conjecture}\label{c:filter-based=>existence-NP-hard}
    \problemfont{Existence under $S$} is \np-hard in data-complexity for every ``reasonable'' filter-based semantics.
\end{conjecture}

We added the restriction \emph{reasonable} in Conjecture~\ref{c:filter-based=>existence-NP-hard} because the problem is likely to be void otherwise.
Indeed, definition of filter-based semantics (Def.~\ref{d:filter-based}) does not impose restrictions on the complexity of the filter function.
Moreover, the proof is likely to require the semantics to satisfy other properties, such as the one provided in Sec.~\ref{s:reasonable-semantics}.

\begin{problem}{Membership under $S$}
    \item[Data:] A database~$\Dc$ and a walk $w$ in~$\Dc$.
    \item[Query:] A regular expression~$R$.
    \item[Output:] Does $w$ belong to~$S(\Dc,R)$ ?
\end{problem}

Once again, the definition of filter-based semantics does not impose any
complexity restriction on the filter function.
Hence, \problemfont{Membership} under a filter-based semantics is computationally equivalent to the filter function.
The situation is worse for order-based semantics: Membership under $S_{\leq}$ may be harder than the complexity of $\leq$. It is for instance \conp-hard under
\minmultsem{} \cite{Marsault2024}, while the order itself is in \nl.
It is the worst-case scenario since \problemfont{Membership} under a reasonable order-based semantics is always in \conp, as stated below.

\begin{proposition}
    Let~$S_\preceq$ is a semantics based on an order~$\preceq$ in \ptime.
    If there is a polynomial function~$p$ such that for every walks~$w'\preceq w$, it holds~$|w'|< p(|w|)$, then \problemfont{Membership} under $S$ is in \conp.
\end{proposition}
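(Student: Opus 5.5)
We show that the complement of \problemfont{Membership} under $S_\preceq$ is in \np, by giving a polynomial-size certificate that can be checked in polynomial time. Fix the input $(\Dc,w,R)$. By Definition~\ref{d:order-based}, $w\in S_\preceq(\Dc,R)$ if and only if $w\in\matches(\Dc,R,s,t)$, where $(s,t)=\ep(w)$, and no $w'\in\matches(\Dc,R,s,t)$ satisfies $w'\prec w$. Consequently $w\notin S_\preceq(\Dc,R)$ holds exactly in one of two situations: either $w$ is not a match of $R$ in $\Dc$, or some match $w'$ with the same endpoints is strictly smaller than $w$.

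The first situation needs no certificate. We check that $w\in\walks(\Dc)$ by scanning its edges against $\src_\Dc$ and $\tgt_\Dc$. We then compute $\lbl_\Dc(w)$ and decide whether it belongs to $L(R)$. This is done by building a Thompson automaton for $R$ and simulating it on the label, which takes time polynomial in $|R|$ and $|w|$, hence polynomial time overall.

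In the second situation the certificate is a walk $w'$. The hypothesis on $p$ states that $w'\preceq w$ implies $|w'|<p(|w|)$, so any witness $w'\prec w$ has polynomial size and can be guessed. The verifier then checks four things, each in polynomial time: $w'\in\walks(\Dc)$; $\ep(w')=\ep(w)$; $\lbl_\Dc(w')\in L(R)$, again by automaton simulation; and finally $w'\preceq w$ together with $w'\neq w$, using that $\preceq$ is decidable in \ptime. All checks are polynomial in the input size, because $|w'|$ is polynomially bounded in $|w|$ and $w$ is part of the input. The non-membership certificate is therefore polynomially verifiable, so the complement is in \np and \problemfont{Membership} is in \conp.

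The only point requiring care is the size of the witness. Without the hypothesis on $p$, a smaller match could a priori be arbitrarily long, and guessing it would not be an \np{} procedure. The hypothesis is exactly what rules this out. One should also note that complexity is measured in the combined size of $\Dc$, $w$ and $R$; the membership tests for $L(R)$ remain polynomial even when $R$ counts as input, so the bound holds beyond data complexity.
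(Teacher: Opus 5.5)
Your proof is correct and follows the same route as the paper's sketch. Non-membership is certified by a strictly smaller match $w'$ with the same endpoints as $w$. The hypothesis on $p$ bounds the size of $w'$ polynomially, and each condition on $w'$ can be checked in polynomial time. You are also slightly more careful than the paper, which leaves implicit both the case where $w$ is not a match at all and the requirement that $w'$ share the endpoints of $w$.
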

\begin{proof}[Sketch of proof] 
    Non-membership of~$w$ amounts to finding a match~$w'$ that satisfies~${w'<w}$.
    Checking that a given~$w'$ satisfies both conditions may be done in polynomial time by hypotheses.
    On the other hand, any~$w'$ satifying~$w'<w$ must have a length that is polynomial in the length of~$w$, hence~$w'$ is necessarily short enough to serve as a certificate for non-membership of~$w$.
\end{proof}

\begin{problem}{Enumeration under $S$}
    \item[Data:] A database~$\Dc$ and a walk $w$ in~$\walks(\Dc)$.
    \item[Query:] A regular expression~$R$.
    \item[Output:] All walks $w$ in~$S(\Dc,R)$ going from~$s$ to~$t$.
\end{problem}

The lower bound for the preprocessing time of \problemfont{Enumeration} is at least as high as the lower bound of \problemfont{Existence}, since one could use an algorithm for the former to solve the latter.
This is the argument used to show that \problemfont{Enumeration} under usual filter-based semantics (\trailsem, \acyclicsem, \simplesem, etc.) is intractable.
The case of~\minmultsem{} is remarkable in the sense that \problemfont{Enumeration} is intractable, but due to the delay and not the preprocessing \cite{Marsault2024}.

We introduce below the less common problem of \problemfont{Extensibility}.
It allows to use the well-known technique of the \emph{flash-light search}: \problemfont{Enumeration} is tractable, whenever both  \problemfont{Extensibility} and \problemfont{Membership} are.

\begin{problem}{Extensibility under $S$}
    \item[Data:] A database~$\Dc$, a walk $w$ in~$\walks(\Dc)$ and a vertex~$t$ in~$\Dc$.
    \item[Query:] A regular expression~$R$.
    \item[Output:] Is there a walk $w'$ such that $w\cdot w'$ belong to~$S(\Dc,R)$ and ends in~$t$? 
\end{problem}

\problemfont{Extensibility} is at least as hard as \problemfont{Existence} since \problemfont{Existence} amounts to providing empty walk as input to \problemfont{Extensibility}.
Although it is not formally true, the definition of a semantics usually implies that \problemfont{Extensibility} is usually at least as hard as \problemfont{Membership}.

\begin{proposition}[Application of {\protect\cite[Prop.~3.5]{Strozecki2021}}]
    \label{p:extensibility+Membership=>Enumeration}
    If \problemfont{Extensibility} and \problemfont{Membership} under a semantics~$S$ are both in \ptime{} in data complexity, 
    then \problemfont{Enumeration} under~$S$ is \polydelay{} in data complexity (and memoryless, see \cite{Strozecki2021}).
\end{proposition}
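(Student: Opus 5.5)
We will use the flashlight search (binary partition) technique, building every output walk edge by edge in a depth-first traversal of the tree of prefixes. We fix the database $\Dc$, the query $R$ and the endpoints $s,t$. The first step is to note that every walk in $S(\Dc,R)$ going from $s$ to $t$ extends the trivial walk $s$. We then run a recursive procedure $\mathrm{Explore}(w)$. Its invariant is that $w$ is a walk in $\Dc$ starting at $s$ and that the instance $(\Dc,w,t,R)$ of \problemfont{Extensibility} is positive.

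The procedure works as follows. It first calls \problemfont{Membership} on $w$, checking also that $\tgt(w)=t$. If the answer is positive, it outputs $w$. It then considers each edge $e\in E_\Dc$ that concatenates with $w$, and calls \problemfont{Extensibility} on $w\cdot e$. If this call is positive, it recurses on $w\cdot e$. The initial call is $\mathrm{Explore}(s)$, made only if \problemfont{Extensibility} on the trivial walk $s$ is positive. Correctness is direct. Every output walk is in $S(\Dc,R)$ by the membership test. Every walk $w^\ast\in S(\Dc,R)$ from $s$ to $t$ is reached, because each of its prefixes is extensible by definition, so the recursion follows it. Each walk is output exactly once, because distinct branches of the recursion correspond to distinct prefixes.

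For the delay, the key observation is that every node visited has positive extensibility, so it is a prefix of some output walk. Hence no branch is a dead end. Between two consecutive outputs, the traversal either descends from an output node to the next output node below it, or backtracks and then descends. Each descent step costs at most $|E_\Dc|$ calls to a polynomial oracle. The main obstacle is to bound the \emph{depth} of these descents, that is, the length of the output walks, by a polynomial. Otherwise a single descent could take superpolynomially many steps.

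For this step I would appeal to the framework of \cite[Prop.~3.5]{Strozecki2021}, which assumes that the extension problem is in \ptime{} and that solutions have polynomial size. The size bound must come from the semantics. One option is to argue that \problemfont{Extensibility} being in \ptime{} already forces the relevant walks to have polynomially bounded length. If that fails, I would state it as an implicit hypothesis: outputs of $S$ have length polynomial in $|\Dc|$, as holds for all the semantics listed in Figure~\ref{f:complexity-table}.

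Given that bound, the cost of a descent or a backtrack is polynomial. The procedure stores only the current prefix, which gives the memoryless property. The recursion stack is implicit, since the next sibling edge can be recomputed by scanning $E_\Dc$ in a fixed order. The result is \polydelay{} enumeration in data complexity.
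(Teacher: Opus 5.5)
Your proposal is correct and follows essentially the same route as the paper: a flashlight search that extends the current walk edge by edge, recurses only on prefixes for which \problemfont{Extensibility} answers positively, and outputs those accepted by \problemfont{Membership}. The paper also leaves the polynomial bound on output lengths implicit: it only notes that the recursion depth is bounded by the length of the next walk to output, and relies on the framework of \cite{Strozecki2021}. Your fallback of making this bound an explicit hypothesis is the right choice, since \problemfont{Extensibility} in \ptime{} does not by itself force short outputs (e.g., a semantics returning the matches of length exactly $2^{|V_\Dc|}$ has both problems in \ptime{} via repeated squaring, yet its outputs are exponentially long).
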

\begin{proof}[Sketch of proof] 
Let A database~$\Dc$, a walk $w$ in~$\walks(\Dc)$ and a vertex~$t$ in~$\Dc$
be the input of the problem. 
For every vertex~$s$ of the database we call the function \textsc{Forward} below, with the trivial walk~$(s)$ as argument.
    \begin{vmalgorithm}
        \Function{Forward}{Walk $w$}
            \If{$\tgt(w)= t$\textbf{ and }\problemfont{Membership}($\Dc, R, w$) \text{ returns }\True}
                \State \textbf{Output }$w$
            \EndIf
            \ForAll{edge $e$ going out from $\tgt(w)$}
                \State $w' \gets w \cdot (\src(e), e, \tgt(e)) $
                \If{\problemfont{Extensibility}($\Dc, R, w'$, $t$) returns \True}
                    \State \Call{Forward}{w'}
                \EndIf
            \EndFor
            \State \Return $\bot$
        \EndFunction   
    \end{vmalgorithm}
    Note that the number of recursive calls to \textsc{Forward} on the stack is bounded by the length of the next walk to output.
\end{proof}

\begin{remark}
   There are variants of \problemfont{Existence} and \problemfont{Enumeration}
    in which endpoints as not provided as input.
    Similarly, sometimes the target~$t$ of \emph{Extensibility} is not provided.
    These changes usually have no impact on the complexity of those problems.
\end{remark}

\section{Conclusion and future work}

Only a few RPQ semantics have been considered in industry, and their principled study was generally performed a posteriori.
In this work, we took the opposite approach: first define the properties a RPQ semantics could have, in order to design better RPQ semantics in the future.
We distinguished two simple classes of semantics: filter-based semantics which select results using a per-walk criterion, and order-based semantics which only retain minimal matches with respect to some ordering relation.
We proposed a list of possible properties, some of which may be considered requirements for all RPQ semantics, while others turned out to be incompatible.
This process yielded several promising semantics, such as subwalk-minimal and shortest-vertex-cover semantics, which probably deserve further investigations.

Our work is by no means exhaustive. It has several limitations, which constitute interesting leads for future work.
First, some of our definitions may be too restrictive. 
For instance, the class of \emph{filter-based} semantics is tailored to capture several semantics used in practice, but our definition appears to leave little room for novel designs.
Another example is compatibility with rational operators. Most of them are verified by very few reasonable semantics, and weaker notions of compatibility might be worth considering.
For instance, a weaker version of $\ast$-composability could be that $S(\Dc,R)$ is always included in $S(\Dc,R^*)$.

Second, we limited the scope of our study to walk-based semantics.
A semantics is a way to restrict the potentially infinite set of matches to  a finite and hopefully representative set of results for the user.  From this perspective, walk-based semantics must make this selection using a very limited amount of information.
This prevents us from considering semantics whose results depend on the actual syntax of the regular expression, such as the ones in \cite{ArenasConcaPerez2012,DavidFrancisMarsault2023}.
Future work could investigate semantics that make use of more information, such as walk labels, the entire database, the syntax of the query, or how walks are matched by the query.

Third, both our data model and the query language we consider are rather simplistic.
Adding features such as data or node labels should not affect our observations regarding walk-based semantics. 
However, these changes increase the design space when considering semantics that are not walk-based (as discussed above).
Note that the GQL data model~\cite{DeutschEtAl2022,GQL-ISO} allows edges (and nodes) to bear multiple labels.
This raises the question of walk multiplicity in the result, a dimension we ignored entirely. 
It would also be interesting to study \emph{dynamical} semantics:
enrich the syntax to give users some control over the semantics, similar to \gql{CHEAPEST} in GQL.

\clearpage
\section{Bibliography}
\nocite{SQL-ISO}
\subsection{Academic publications}
\printbibliography[notkeyword=language,heading=none]
\subsection{Language documentations and specifications}
\printbibliography[keyword=language,heading=none]

\clearpage
\phantomsection
\label{s:index}
\printindex
\end{document}